\newcommand{\delz}[0]{\partial_z}
\newcommand{\delbar}[0]{\Bar{\partial}}
\newcommand{\delbarz}[0]{\partial_{\bar{z}}}
\newcommand{\zbar}[0]{{\Bar{z}}}
\newcommand{\de}[0]{\textrm{d}}
\newcommand{\Jac}[0]{\textrm{Jac}}
\newcommand{\Tot}[0]{\textrm{Tot}}
\newcommand{\sub}[0]{\subset}
\newcommand{\Tr}[0]{\textrm{Tr}}
\newcommand{\End}[0]{\mathrm{End}}
\newcommand{\irr}[0]{\mathrm{irr}}
\newcommand{\Hom}[0]{\mathrm{Hom}}
\newcommand{\Res}[0]{\mathrm{Res}}
\newcommand{\rank}[0]{\mathrm{rank}}
\newcommand{\reg}[0]{\mathrm{reg}}
\newcommand{\inv}[0]{^{-1}}
\newcommand{\ts}{\textsuperscript}
\newcommand{\MHiggs}[0]{{\mathcal{M}^{\textrm{Higgs}}}}
\newcommand{\Mtoy}[0]{ \calM^{\textrm{toy }}}
\newcommand{\Mell}[0]{ \calM^{\textrm{ell }}}
\newcommand{\bloch}[0]{\beta}
\newcommand{\HH}[0]{\mathbb{H}}
\newcommand{\calH}[0]{\mathcal{H}}
\newcommand{\calM}[0]{\mathcal{M}}
\newcommand{\calN}[0]{\mathcal{N}}
\newcommand{\calB}[0]{\mathcal{B}}
\newcommand{\calO}[0]{\mathcal{O}}
\newcommand{\RR}[0]{\mathbb{R}}
\newcommand{\CC}[0]{\mathbb{C}}
\newcommand{\ZZ}[0]{\mathbb{Z}}
\newcommand{\PP}[0]{\mathbb{P}}
\newtheorem{proposition}{Proposition}
\newtheorem{pipeDream}{Daydream}
\newtheorem{theorem}{Theorem}
\theoremstyle{problem}
\newtheorem{problem}{Problem}
\theoremstyle{definition}
\newtheorem{definition}{Definition}
\theoremstyle{remark}
\newtheorem{remark}{Remark}
\title{Hyperbolic band theory through Higgs Bundles}
\author{Elliot Kienzle}
\address{Department of Mathematics, University of Maryland, William E. Kirwan Hall, 4176 Campus Drive, College Park, MD, USA 20742-4015}
\email{elliot.kienzle@gmail.com}
\author{Steven Rayan}
\address{Centre for Quantum Topology and Its Applications (quanTA) and Department of Mathematics \& Statistics, University of Saskatchewan, McLean Hall, 106 Wiggins Road, Saskatoon, SK, CANADA S7N 5E6}
\email{rayan@math.usask.ca}
\date{\today}
\begin{document}

\begin{abstract}Hyperbolic lattices underlie a new form of quantum matter with potential applications to quantum computing and simulation and which, to date, have been engineered artificially. A corresponding hyperbolic band theory has emerged, extending $2$-dimensional Euclidean band theory in a natural way to higher-genus configuration spaces. Attempts to develop the hyperbolic analogue of Bloch's theorem have revealed an intrinsic role for algebro-geometric moduli spaces, notably those of stable bundles on a curve.  We expand this picture to include Higgs bundles, which enjoy natural interpretations in the context of band theory. First, their spectral data encodes a crystal lattice and momentum, providing a framework for symmetric hyperbolic crystals. Second, they act as a complex analogue of crystal momentum. As an application, we elicit a new perspective on Euclidean band theory. Finally, we speculate on potential interactions of hyperbolic band theory, facilitated by Higgs bundles, with other themes in mathematics and physics.
\end{abstract}

\maketitle

\tableofcontents

\section{Introduction}

There are a number of established and emerging connections between high energy and condensed matter physics, some of which are mediated by physical and mathematical dualities such as AdS/CFT.  This article aims to describe one such emerging connection, specifically between the moduli space of Higgs bundles, a feature of Yang-Mills gauge theory in $2$ dimensions, and the band theory of media with periodic potentials.  The former originates in \cite{hitchin_self-duality_1987}, where the self-dual Yang-Mills equations are dimensionally reduced so that they are written on a smooth Hermitian bundle on a compact Riemann surface.  For the latter, we take the point of view not only of the familiar $2$-dimensional Euclidean band theory given by the symmetry of a lattice tiling the plane, but also expand our attention to include its hyperbolic generalization that has recently appeared in \cite{maciejko_hyperbolic_2021}.  The hyperbolic version of the theory considers the Schr\"odinger equation with a Hamiltonian that is invariant under the noncommutative translations of a tessellation of the hyperbolic plane $\HH$.  One of the great successes of conventional band theory is the explanation of electronic and optical properties of solids, which allows for materials to be realized with desired semiconducting or insulating properties through band-gap engineering and which provides the theoretical foundation for solid-state devices such as transistors.  The hyperbolic realization of band-gap phenomena anticipates new forms of quantum matter and topoelectric circuits with applications to quantum computing and simulation.  To date, such circuits, albeit photonic rather than electronic in nature, have been artificially engineered \cite{kollar_hyperbolic_2019}.  A subsequent experiment realizes a hyperbolic drum as an electric circuit \cite{drum_2021}, culminating in the experimental detection of negative curvature via the detection of the eigenmodes of the Laplace-Beltrami operator and of signal propagation along hyperbolic geodesics.  Further theoretical works emerging recently in hyperbolic matter and band theory include \cite{ikeda_hyperbolic_2021,aoki_algebra_2021,boettcher_crystallography_2021,stegmaier_universality_2021,attar_selberg_2022,bienias_circuit_2022,boettcher_insulators_2022}.

One of the reasons why the hyperbolic generalization of band theory is attractive is because it continues to admit Bloch wave solutions, as established in \cite{maciejko_hyperbolic_2021}.  The Bloch decomposition here is automorphic in nature, with phase factors given by unitary representations (in all ranks) of the Fuchsian group $\Gamma$ of the tessellation.  The appearance of representations of surface groups of higher-genus curves $\HH/\Gamma$ is a natural entry point for Higgs bundles into the band theory.

After reviewing the hyperbolic band theory as well as the definition and basic features of Higgs bundles and their moduli, we connect the two subjects in two ways.  First, we package the crystal lattice and abelian crystal momenta of the band theory into the spectral data of Higgs bundles, which thereby injects the base of the Hitchin map on the moduli space of Higgs bundles into the space of available crystals.  Alternatively, we interpret the Higgs field of a Higgs bundle as an imaginary crystal momentum state, which leverages the similarity between spectral curves and band structures within a tight-binding model.  Even in the Euclidean case, these interpretations elicit new aspects of the band theory, which we will illustrate with some suggestive calculations for parabolic Higgs bundles on the projective line with elliptic spectral curves.  In the final section of the paper, we speculate on connections to topological materials (which are in some sense the major impetus for the recent injection of algebraic techniques into condensed matter theory), to fractional quantum Hall states, to supersymmetric field theories, to the geometric Langlands correspondence, and other themes.  We end the article with a rather provocative picture that seeks to place a number of ideas from high-energy physics, algebraic geometry, and condensed matter into the same orbit.

The starting point of condensed matter physics is crystallography and the theory of a quantum particle on a periodic Euclidean background. The theory is set in motion by a Hamiltonian with a potential that periodic in space. Mathematically, we organize particles propagrating under the potential into irreducible representations of the symmetry group, a discrete subgroup of translations of Euclidean space. These are classified by a point in reciprocal space known as the crystal momentum, resulting in the famous Bloch's theorem. Such a general outlook opens the possibility for crystals in other symmetric spaces, where the symmetries of the potential form a discrete subgroup of isometries. In this paper, we ask about hyperbolic crystals, periodic in the $2$-dimensional hyperbolic plane $\HH$ which carries a metric of constant negative curvature. The isometry group here is $PSL(2,\RR)$, with Fuschian groups as the discrete subgroups. The orbit of a point under such a group describes the crystal lattice as a hyperbolic lattice in $\HH$. Following the principle above, the hyperbolic analogue of Bloch's theorem observes that quantum states in a hyperbolic crystal are classified by an irreducible representation of the associated Fuschian group $\Gamma$. Unlike Euclidean crystals, $PSL(2,\RR)$ and thus the Fuschian group are nonabelian, and so the irreducible representations of $\Gamma$ now form an interesting moduli space.

So far, the problem is confined to quantum mechanics, dealing exclusively with irreducible unitary representations of a symmetry group. We will approach this by converting the global, algebraic data to geometric data on a generating set of the symmetry. This picture consists of the fundamental polygon with edges identified according to lattice translations, giving a compact $2$-dimensional surface $\Sigma = \HH / \Gamma$ with an inherited metric of constant negative curvature. This defines a unique compact Riemann surface. We assume $\Gamma$ acts without fixed points, and so the fundamental group $\pi_1(\Sigma)$ is isomorphic to $\Gamma$. The Riemann-Hilbert correspondence exchanges each representation of the fundamental group for a unique flat connection on a smooth vector bundle over $\Sigma$, and the kinetic term of the Hamiltonian becomes the Laplacian for the flat connection. The flat connection also corresponds to a holomorphic structure, as part of the correspondence between points in the moduli space of unitary representations and points in the moduli space of stable holomorphic vector bundles. Hyperbolic band theory seeks the spectrum of the Hamiltonian for each crystal momentum, which is now a problem in differential and complex algebraic geometry.

This paper integrates hyperbolic band theory with the study of Higgs bundles, which arise in two seperate but equally natural contexts in the band theory. First, there are crystals with symmetry. A crystal lattice with spatial inversion symmetry corresponds to a Riemann surface $\Sigma$ with involution $\sigma$. The minimal generating set is then half of the fundamental polygon, whose edges glue together to the quotient $\Sigma/\sigma = \PP^1$ --- that is, $\Sigma$ is a hyperelliptic curve.  An abelian (i.e. rank-$1$) crystal momentum defines a holomorphic line bundle on $\Sigma$. We can represent this roughly a matrix at each point whose graph of eigenvalues defines $\Sigma$ and associated eigenvectors define the line bundle. More formally, this is a pair of a rank 2 holomorphic vector bundle $E$ and a holomorphic endomorphism valued 1-form $\phi$, called a Higgs bundle. This exemplifies the procedure for encoding a crystal as the spectral curve of a parabolic Higgs bundle, and an Abelian crystal momentum as the spectral line bundle. The spectrum of the crystal Hamiltonian on the spectral curve pushes forward to an analogous Hamiltonian on the base curve, replacing the flat connection with the parabolic, higher rank pushforward connection (Theorem \ref{thm:nonabelian hamiltonain}).

By encoding crystal data in Higgs bundles, the moduli space of Higgs bundles becomes a moduli space of crystal data. The Hitchin fibration naturally splits this into the crystal lattice (Hitchin base) and abelian crystal momenta (fibers, or the Jacobian of the spectral curve). Each Higgs bundle has an associated crystal Hamiltonian, defining a band structure over the whole moduli space of Higgs bundles. Essentially, the band structure over the Jacobian of each crystal glues together into a global band structure. This suggestively hints at a universal band structure, framing hyperbolic band theory as a moduli problem.

Higgs bundles also naturally arise as a complex crystal momentum. In hyperbolic band theory, the space of crystal momenta is the variety of irreducible unitary representations $\pi_1(\Sigma)\to U(n)$, which the Narisimhan-Seshardi theorem equates with the moduli space of stable vector bundles. We complexify the problem by replacing $U(n)$ with $GL(n,\CC)$. The nonabelian Hodge theorem equates the variety of irreps $\pi_1(\Sigma)\to GL(n,\CC)$ to the moduli space of stable \textit{Higgs} bundles. The physical interpretation is clearest in the abelian case. The representation $\pi_1(\Sigma) \to GL(1,\CC)\cong \CC^*$ give the factors of automorphy for an electron with that crystal momentum. If the image lies in $U(1)$, the electron preserves its norm from cell to cell, corresponding to a purely real momentum. If not, the electron gains or loses amplitude across the crystal, corresponding to a complex momentum. This gives an associated non-Hermitian Hamiltonian with complex eigenstates. All in all, the two interpretations of Higgs bundles define two separate band structures over the moduli space of Higgs bundles.

This dictionary between hyperbolic crystals and Higgs bundles motivates a number of speculations.  For instance, topological materials --- a major impetus for the recent injection of algebraic topology into condensed matter theory --- are classified by a number of topological invariants.  The large dimension of the space of hyperbolic crystal momenta permits more exotic topological invariants than in Euclidean band theory.  More generally, Higgs bundles have come to enjoy a role as an intermediary between various objects in geometry, representation theory, and physics. As such, the connection between Higgs bundles and hyperbolic band theory suggests band-theoretic interpretations of these objects.  Our speculations involve fractional quantum Hall states, supersymmetric field theories, the geometric Langlands correspondence, and other ideas. We end the article with a rather provocative picture that aims to put many ideas in high-energy physics, algebraic geometry, and condensed matter into the same orbit.

In regards to the precise organization of the paper, Section 2 synthesizes the hyperbolic band theory framework of \cite{maciejko_hyperbolic_2021,maciejko_automorphic_2021} in a self-contained and mathematically-oriented manner. It contributes a description of the abelian and nonabelian crystal Hamiltonian, with various geometric and physical interpretations. In Section 3, we review Higgs bundles and their moduli spaces. We combine these in sections 4 and 5, respectively describing the crystal moduli and complex momenta interpretation of Higgs bundles. Section 6 tests this framework in the realm of Euclidean crystals, represented by the toy model of rank-$2$ Higgs bundles on $\PP^1$ with $4$ parabolic points. This wraps the standard band theory canon into an algebro-geometric package. We conclude in Section 7 by speculating on the place of hyperbolic band theory in a larger story of mathematics and physics, as facilitated by Higgs bundles.\\

\noindent\emph{Acknowledgements.} We gratefully acknowledge Igor Boettcher, Kazuki Ikeda, Joseph Maciejko, Rafe Mazzeo, \'Akos Nagy, Robert-Jan Slager, Jacek Szmigielski, and Richard Wentworth for very helpful discussions.  The second-named author is partially supported by an NSERC Discovery Grant, a Tri-Agency New Frontiers in Research Fund (Exploration Stream) award, and a Pacific Institute for the Mathematical Sciences (PIMS) Collaborative Research Group grant.  The first-named author was supported by the NSERC Discovery Grant of the second-named author.  The illustrations in this manuscript were created by the first-named author.

\section{Hyperbolic band theory} \label{sec:hyperbolic_Band_Theory}

We begin by building the framework of hyperbolic band theory. The foundation is a hyperbolic version of the classical Bloch theorem, which constructs the eigenstates of the crystal Hamiltonian from crystal momenta. Hyperbolic Bloch states were introduced in \cite{maciejko_hyperbolic_2021}, and a corresponding version of Bloch's theorem was formulated in \cite{maciejko_automorphic_2021}.  
\subsection{Euclidean crystallography}\label{sec:Euclidian_Bloch}

Hyperbolic crystallography is modeled on Euclidean crystallography, which we review for inspiration. This material is standard (see for instance \cite{ashcroft_solid_1976}) but presented here with the hyperbolic generalization in mind. A crystal is defined by a periodic potential $V(x)$ on $\RR^n$, invariant under the crystal lattice $\Gamma \sub \RR^n$. Invariance means $V(x) = V(x+\gamma) $ for all points $\gamma$ in $\Gamma$. The crystal lattice is described by its symmetry, treating $\Gamma$ as a discrete subgroup of translations in $\RR^n$. Finally, the lattice is uniquely determined by the shape of its unit cell, a polyhedral set whose translates under $\Gamma$ tile all of $\RR^n$. Periodicity identifies opposite faces of the cell, which glue together to form the torus $\RR^n/\Gamma$, as summarized for two dimensions in Figure \ref{fig:Euclidean_unit_cell}. Periodic functions on $\RR^n$ are uniquely described by ordinary functions on $\RR^n/\Gamma$. We are interested in hyperbolic crystals in $2$ dimensions, which are analogous to Euclidean crystals in $\RR^2 \cong \CC$.

\begin{figure}[htp]
    \centering
    \includegraphics[width = \textwidth]{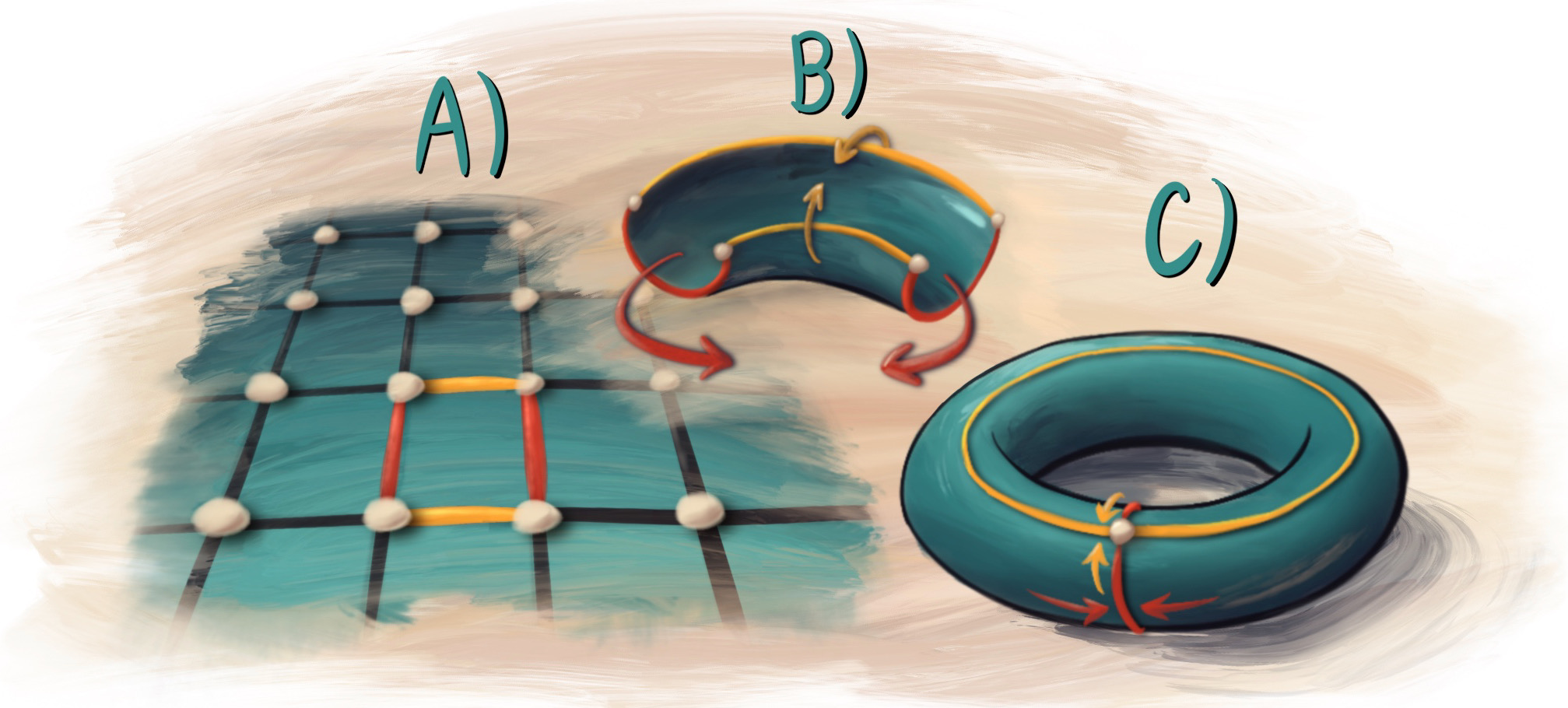}
    \caption{A) A Euclidean crystal, given by a lattice $\Gamma$ in $\CC$, with a unit cell highlighted. B) to construct $\CC/\Gamma$, the edges of the unit cell of like color are identified. C) The resulting surface is a torus. }
    \label{fig:Euclidean_unit_cell}
\end{figure}

\subsubsection{Bloch's theorem} \label{sec: Euclidian bloch thm}
Consider a quantum particle on this periodic potential. Its wavefunction is a complex-valued function $\psi: \CC \to \CC$, whose evolution is governed by the Hamiltonian $H = \Delta + V$. Here $\Delta$ is the Laplacian and $V: \CC \to \RR$ is the potential energy. Band theory is founded on \textit{Bloch's theorem}, which says the Hamiltonian has an eigenbasis $\psi_k$ satisfying $\psi_k(z + \gamma) = e^{2 \pi i (k_x \gamma_x + k_y \gamma_y)}\psi_k(z)$, which are called Bloch waves with crystal momentum $k = k_x + i k_y \in \CC$. 

A generalizable derivation of Bloch's theorem comes from group theory. To summarize, the Hilbert space $\calH = L^2(\CC)$ splits into irreducible representations of $\Gamma$ parametrized by $k$, each representation containing Bloch waves with crystal momentum $k$. The Hamiltonian is symmetric under lattice translations, and so it preserves these subspaces. Each subspace has an energy eigenbasis, which together forms a complete eigenbasis for $H$. More carefully, we have an action of $\Gamma$ on the Hilbert space $\calH = L^2(\CC)$ via translation operators $T_\gamma \psi(z) = \psi(z+\gamma)$. These form a representation of $\Gamma$ because $T_{\gamma_1 + \gamma_2} = T_{\gamma_1} \cdot T_{\gamma_2}$. The Hilbert space $\calH$ splits into a direct product of irreducible representations of $\Gamma$, all of which are $1$-dimensional since $\Gamma$ is abelian. Denoting the space of these representations by $\Gamma^\vee$, we have

\[
    \calH = \prod_{\chi_k \in \Gamma^\vee} \calH_k,  \qquad \calH_k = \{\psi \in \calH | T_\gamma \psi = \chi_{k}(\gamma) \psi\}.
\]

The representations are defined by $\chi_k(\gamma) = e^{2 \pi i  k^*(\gamma)}$, parametrized by a linear function $k^*: \CC \to \CC$, the space of which is isomorphic to $\CC$. This is trivial whenever $k^*(\gamma)$ is an integer for every $\gamma\in \Gamma$, which by definition means $k^*$ lies in the dual lattice $\Gamma^* \subset \CC$. We are left with the well-known isomorphism $\Gamma^\vee \cong \CC/\Gamma^*$, identifying the space of irreducible representations as a $1$-complex dimensional torus. We can expand out the action of $k^*$ via $k^*(\gamma) =  k\cdot \gamma = k_x \gamma_x + k_y \gamma_y$. So, $\calH_k$ exactly consists of Bloch waves with crystal momentum $k$. The identification $\Gamma^\vee \cong \CC/\Gamma^*$ reflects the periodicity of crystal momentum space.

To connect this splitting with the Hamiltonian, note that translation by $\gamma$ is an isometry, and so $T_\gamma$ is unitary and commutes with the Laplacian. As the potential is periodic, it also commutes with lattice translations, meaning the Hamiltonian satisfies $T_\gamma H = H T_\gamma$. Hence, $H$ preserves every subspace $\calH_k$. Denoting the restriction of $H$ to $\calH_k$ by $H_k$, we arrive at: 

\begin{theorem} [Bloch's theorem]
A periodic Hamiltonian $H$ with lattice $\Gamma \sub \CC$ has an eigenbasis of Bloch waves. This basis consists of eigenfunctions of $H_k = H | _{\calH_k}$ for $k$ varying in crystal momentum space $\CC^\times / \Gamma^*$.
\end{theorem}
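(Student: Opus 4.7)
The plan is to exploit the $\Gamma$-equivariant decomposition $\calH = \prod_{k} \calH_k$ already set up in the preceding discussion, and to observe that $H$ preserves this decomposition so that it can be diagonalized fiber by fiber.

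First, I would confirm the commutation $[H, T_\gamma] = 0$ for every $\gamma \in \Gamma$. Translation is an isometry of Euclidean space, so $[\Delta, T_\gamma] = 0$ is automatic. For the potential term, the $\Gamma$-periodicity $V(x+\gamma) = V(x)$ is exactly the statement that the multiplication operator $M_V$ commutes with each $T_\gamma$. Hence $H = \Delta + V$ commutes with the entire representation $T$ of $\Gamma$ on $\calH$.

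Next, because $H$ commutes with every $T_\gamma$, it preserves each joint eigenspace $\calH_k$ of the commuting family $\{T_\gamma\}_{\gamma \in \Gamma}$; let $H_k$ denote the restriction. Each $\calH_k$ is naturally identified with the space of $L^2$ sections of the twisted line bundle on the torus $\CC/\Gamma$ determined by the character $\chi_k$, and under this identification $H_k$ becomes an elliptic, essentially self-adjoint operator on a compact Riemannian manifold. Standard spectral theory for such operators then yields a discrete spectrum and a complete orthonormal eigenbasis inside $\calH_k$. Letting $k$ range over the crystal momentum space and collecting these fiberwise eigenbases produces the promised Bloch eigenbasis of $H$.

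The main obstacle is more conceptual than computational: it lies in the precise sense in which the Bloch waves form an ``eigenbasis'' of $H$ on $\calH = L^2(\CC)$. Because a Bloch wave $\psi_k$ is not itself square integrable, the product $\calH = \prod_k \calH_k$ must be reinterpreted as a direct integral $\int^{\oplus} \calH_k \, dk$ with respect to the Haar measure on $\Gamma^\vee$. This is the SNAG/Plancherel decomposition for the abelian group $\Gamma$, which simultaneously diagonalizes the commuting family $\{T_\gamma\}$. Once that functional-analytic framework is in place, the fiberwise eigenbases assemble into a complete generalized eigenbasis of $H$ in the direct-integral sense, and the theorem follows.
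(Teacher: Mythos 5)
Your proposal follows essentially the same route as the paper: decompose $\calH$ into the character spaces $\calH_k$, check that $H$ commutes with the translation operators $T_\gamma$ (isometry for the Laplacian, periodicity for $V$), and conclude that $H$ restricts to each $\calH_k$, whose fiberwise eigenfunctions assemble into the Bloch eigenbasis. The two additions you make --- identifying $\calH_k$ with $L^2$ sections of a twisted line bundle over the compact torus $\CC/\Gamma$ so that elliptic theory yields a genuine discrete eigenbasis in each fiber, and reinterpreting the formal product $\prod_k \calH_k$ as a direct integral $\int^{\oplus}\calH_k\,dk$ --- are precisely the repairs for the gap the paper itself concedes in the remark immediately following the theorem (Bloch waves are not in $L^2(\CC)$, so the splitting is only heuristic), so your version is a rigorous strengthening of the same argument rather than a different proof.
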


Bloch's theorem is central to band theory, which studies the spectrum of $H_k$ as $k$ varies across crystal momentum space.

\begin{remark}
We should be clear that this derivation is not rigorous as given. The Hilbert space is not $L^2(\CC)$, as any nonzero quasiperiodic function is not square-integrable. Instead, this is a physical argument motivating the splitting of $H$ into spaces of Bloch waves. 
\end{remark}

\subsection{Hyperbolic crystallography}

\begin{figure}[htp]
    \centering
    \includegraphics[width = \textwidth] {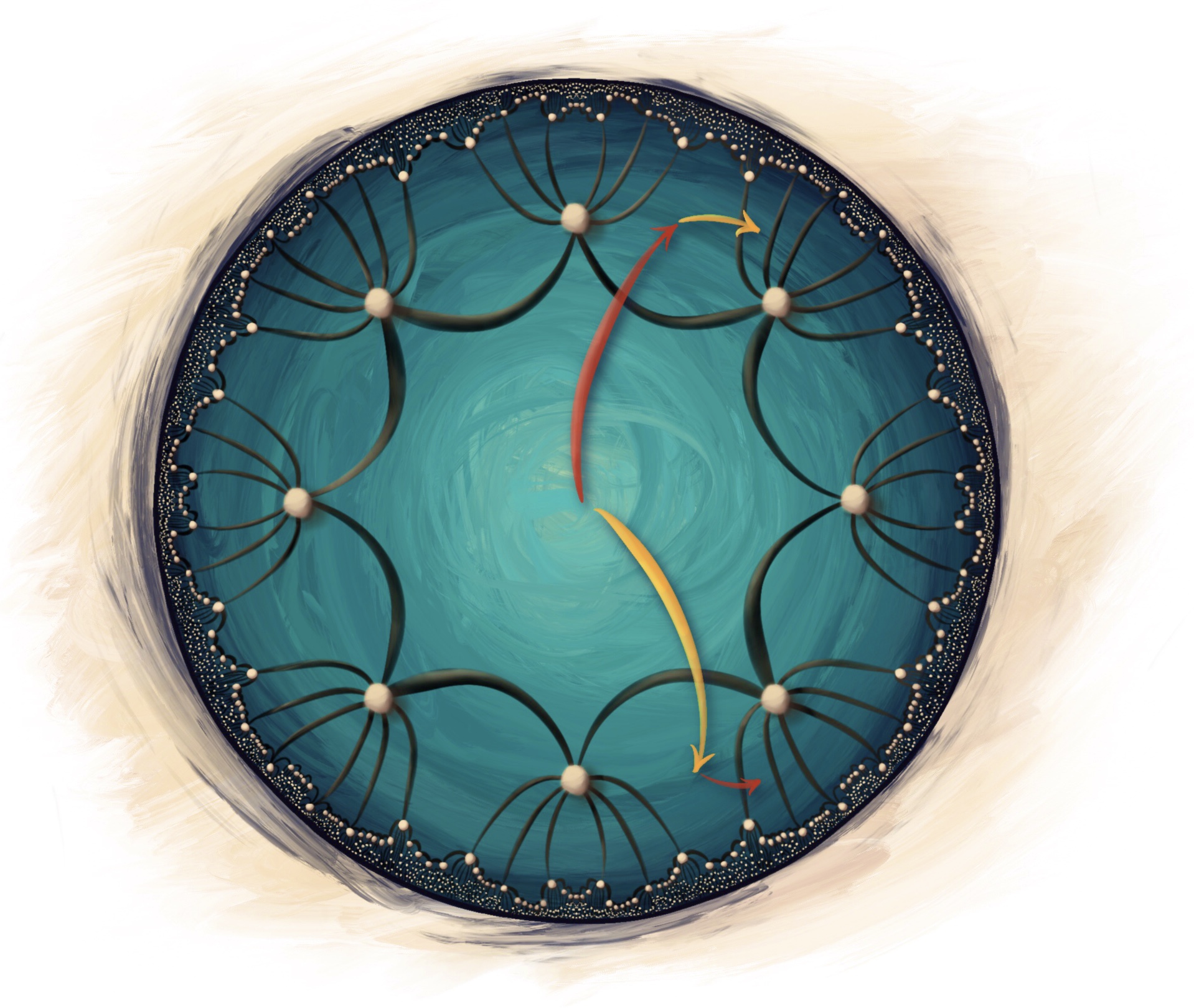}
    \caption{  A hyperbolic lattice in the Poincar\'e disk model. The white dots show the orbit of a point under a Fuchsian group, and the hyperbolic octagon in the center of the image depicts the unit cell. Red and yellow arrows show the action of two elements of the Fuchsian group. The final position depends on the order of the red and yellow actions, reflecting the noncommutativity of the Fuchsian group.}
    \label{fig:Nonableian }
\end{figure}

This paper studies hyperbolic crystals, or periodic structures on the hyperbolic plane $\HH$. As usual, $\HH$ is a $2$-dimensional open disk endowed with a metric of constant negative curvature, whose isometries form its symmetry group $PSL(2,\RR)$. A crystal structure is described by a discrete subgroup of isometries $\Gamma \sub PSL(2,\RR)$ (a Fuchsian group) acting on $\HH$ without fixed points. A potential $V:\HH \to \RR$ is periodic with crystal structure $\Gamma$ when $V(x) = V(\gamma(x))$ for all hyperbolic translations $\gamma$ in $\Gamma$. Much like Euclidean crystals, we visualize this through a Hyperbolic lattice consisting of the orbit of one point under $\Gamma$. Figure \ref{fig:Nonableian } depicts this in the Poincar\'e disk model. The crystal structure is also determined by its unit cell, consisting of a hyperbolic polygon bounded by geodesics and depicted in Figure \ref{fig:Nonableian } with black lines. The elements of $\Gamma$ move from one crystal cell to the next, indicated by the yellow and red arrows. Periodicity identifies opposite sides of a unit cell, forming a 2D surface $\HH/\Gamma$, shown pictorially in Figure \ref{fig:Hyperbolic_unit_cell}. The surface inherits the constant negative curvature metric from $\HH$, defining a Riemann surface. The uniformization theorem states that every Riemann surface $\Sigma$ equals $\HH/\Gamma$ for some Fuchsian group $\Gamma$. 

\begin{figure}[htp]
    \centering
    \includegraphics[width = \textwidth]{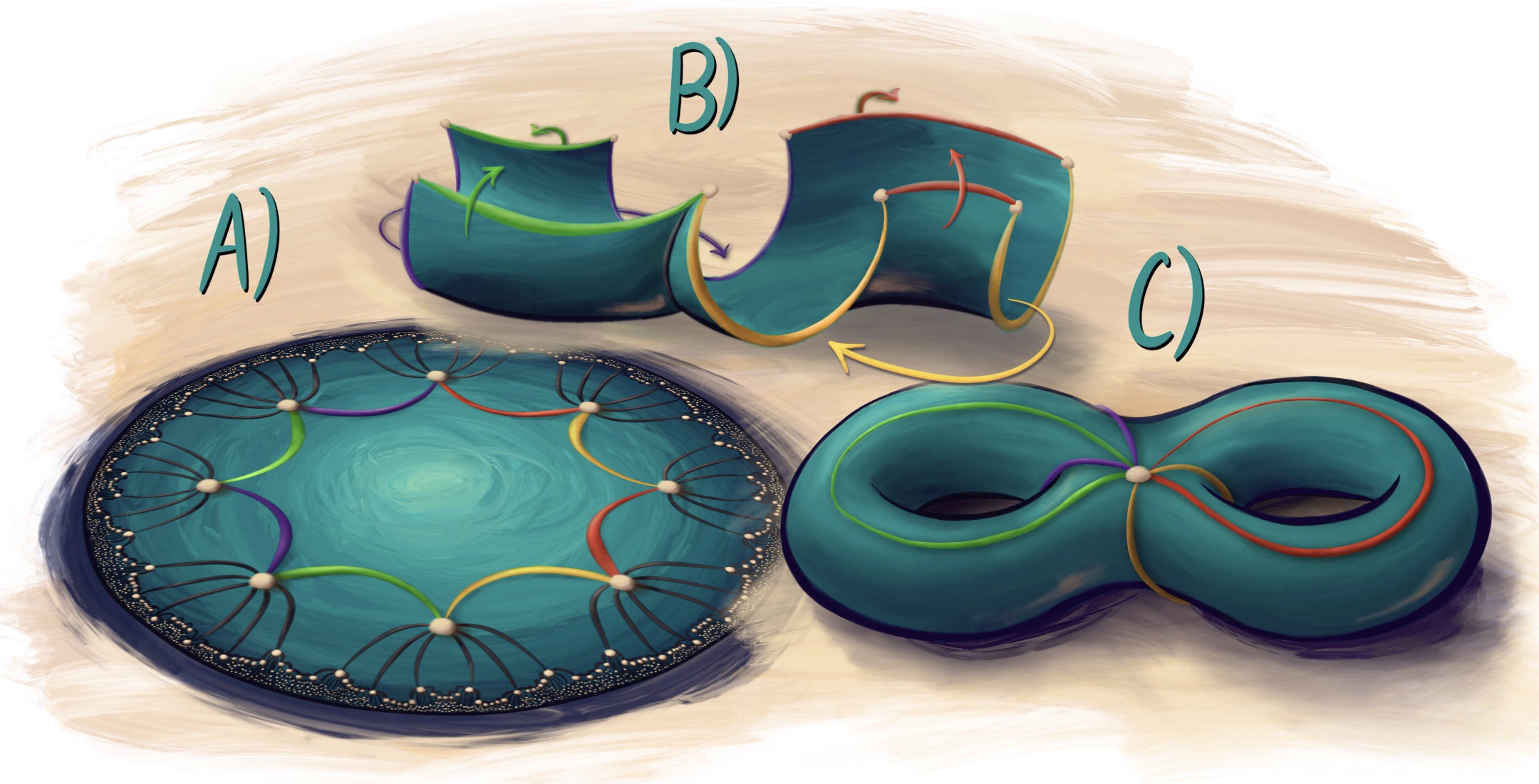}
    \caption{A) A hyperbolic lattice in the hyperbolic plane, with unit cell highlighted. For the genus two lattice pictured, the unit cell has $4g=8$ sides. B) To construct $\HH/\Gamma$, the opposite edges of the unit cell are identified. This figure shows an alternate identification scheme, where the edges are labeled $aba\inv b\inv c d c\inv d\inv$.  C) This results in a genus $g$ surface. }
    \label{fig:Hyperbolic_unit_cell}
\end{figure}

\subsubsection{Hyperbolic Bloch theorem}
Just as conventional band theory starts with Bloch's theorem, hyperbolic band theory is built on a hyperbolic analogue of Bloch's theorem. The difficulty in extending Section \ref{sec: Euclidian bloch thm} to hyperbolic crystals is the noncommutativity of the lattice, meaning irreducible representations can be larger in dimension than $1$.

For a crystal with symmetries defined by a Fuschian group $\Gamma \sub PSL(2,\RR)$, the crystal potential must be invariant under the action of $\Gamma$. Just as before, $\Gamma$ acts on Hilbert space $\calH = L^2(\HH)$ by pre-composition $T_\gamma \psi(z) = \psi(\gamma(z))$, furnishing a representation of $\Gamma$. Since $\Gamma$ acts by isometries, $T_\gamma$ is unitary and commutes with both the potential and the Laplacian. We again split $\calH$ into irreducible representations of $\Gamma$, but since $\Gamma$ is noncommutative, we must deal with the technical challenges of higher-dimensional irreducible representations.

Consider a unitary representation $\rho: \Gamma \to U(n)$. We say a state in $\calH$ transforms according to $\rho$ if it belongs to an n-dimensional subspace $\langle \psi_1, \dots , \psi_n \rangle $ where each state $v_1\psi_1 + \dots + v_n\psi_n$ satisfies
\[
    T_\gamma \begin{pmatrix}\psi_1 & \hdots & \psi_n\end{pmatrix} \begin{pmatrix}v_1 \\ \vdots \\ v_n\end{pmatrix} = \begin{pmatrix}\psi_1 & \hdots & \psi_n\end{pmatrix} \rho(\gamma)\begin{pmatrix}v_1 \\ \vdots \\ v_n\end{pmatrix}.  
\]

In physics parlance, we say the state belongs to a multiplet of $\Gamma$. $\calH$ splits into a irreducible representations of $\Gamma$. Collect all the states transforming according to $\rho$ into a subspace $\calH_\rho \sub \calH$. Then, $\calH$ is a direct product of all $\calH_\rho$ as $\rho$ varies over the space of irreducible representations. Given that $H$ commutes with each $T_\gamma$, we check when $H$ preserves $\calH_\rho$:
\[
 T_\gamma \begin{pmatrix}H \psi_1 & \hdots & H\psi_n\end{pmatrix} \begin{pmatrix}v_1 \\ \vdots \\ v_n\end{pmatrix} =
 H T_\gamma \begin{pmatrix}\psi_1 & \hdots & \psi_n\end{pmatrix} \begin{pmatrix}v_1 \\ \vdots \\ v_n\end{pmatrix} = 
 \begin{pmatrix}H\psi_1 & \hdots & H\psi_n\end{pmatrix} \rho(\gamma) \begin{pmatrix}v_1 \\ \vdots \\ v_n\end{pmatrix}.
\]

When the vectors $\{H \psi_i\}$ are linearly independent, they form a representation of $\rho$, so each vector $H \psi_i$ lives in $\calH_\rho$. We are not so lucky when $\{H \psi_i\}$ is linearly dependent, as we instead obtain a lower-dimensional representation, meaning $H\psi_i$ does not live in $\calH_\rho$. This can only happen when $H$ has a nontrivial kernel. When the kernel is trivial, $H$ preserves $\calH_\rho$ for all $\rho$, so it restricts to a self-adjoint operator $H_\rho$ giving us: 

\begin{theorem}[Hyperbolic Bloch theorem]
When the Hamiltonian $H$ has trivial kernel,  $H$ has an eigenbasis of hyperbolic Bloch waves (states belonging to $\calH_\rho$ for some irreducible unitary representation $\rho$.)
These consist of eigenfunctions of $H_\rho = H|_{\calH_\rho}$.

\end{theorem}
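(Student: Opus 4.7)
The plan is to follow the same three-move schema used for the Euclidean Bloch theorem in Section~\ref{sec: Euclidian bloch thm}, upgraded to handle the noncommutativity of $\Gamma$: (i) a unitary decomposition of $\calH$ into isotypic components indexed by irreducible unitary representations of $\Gamma$, (ii) a verification that $H$ preserves each component, and (iii) a spectral decomposition on each component that is then assembled into a Bloch eigenbasis. The trivial-kernel hypothesis is only needed in step (ii), and Schur's lemma is the tool that handles the ``linear dependence'' issue flagged in the paragraph preceding the theorem.

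First I would fix the group-theoretic setup. The translation operators $T_\gamma\psi(z)=\psi(\gamma(z))$ are unitary on $\calH=L^2(\HH)$ because $\Gamma\subset PSL(2,\RR)$ acts by isometries, and they satisfy $T_{\gamma_1\gamma_2}=T_{\gamma_1}T_{\gamma_2}$, giving a unitary representation of $\Gamma$ on $\calH$. Both the Laplace--Beltrami operator (isometry invariance) and the potential (by hypothesis $V\circ\gamma=V$) commute with every $T_\gamma$, so $H$ commutes with the $\Gamma$-action. Then I would invoke the direct integral decomposition of $\calH$ into isotypic components $\calH_\rho$ as $\rho$ ranges over the unitary dual $\widehat\Gamma$, which is the noncommutative analogue of the Fourier decomposition used in the Euclidean case. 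As in the Euclidean remark, I would flag that this step is the same formal/physical argument: quasi-periodic Bloch multiplets are not genuinely in $L^2(\HH)$, and a rigorous treatment requires the Plancherel theory for the Fuchsian group, but the decomposition is valid at the level that band theory requires.

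The content of the theorem is in step (ii). Given a multiplet $(\psi_1,\dots,\psi_n)$ spanning a copy of $\rho$ inside $\calH$, the commutation $HT_\gamma=T_\gamma H$ shows that $(H\psi_1,\dots,H\psi_n)$ transforms via the same matrix $\rho(\gamma)$, as displayed in the paragraph preceding the theorem. Equivalently, the linear map $\Phi:\langle\psi_1,\dots,\psi_n\rangle\to\calH$ sending $\psi_i\mapsto H\psi_i$ is $\Gamma$-equivariant. Its kernel is a $\Gamma$-invariant subspace of an irreducible representation, so by Schur's lemma it is either $0$ or everything; in the latter case every $\psi_i\in\ker H$. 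Under the trivial-kernel hypothesis this alternative is excluded, hence $\Phi$ is injective and $(H\psi_1,\dots,H\psi_n)$ is a linearly independent multiplet lying in $\calH_\rho$. Running this argument over all multiplets of type $\rho$ yields $H(\calH_\rho)\subseteq\calH_\rho$, and because $H$ is self-adjoint and the $\calH_\rho$'s are mutually orthogonal under the decomposition, the restriction $H_\rho=H|_{\calH_\rho}$ is a well-defined self-adjoint operator.

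Step (iii) is routine. Apply the spectral theorem to each $H_\rho$ to obtain an eigenbasis of $\calH_\rho$ consisting, by construction, of hyperbolic Bloch waves transforming according to $\rho$; the union over $\rho\in\widehat\Gamma$ is then a complete eigenbasis for $H$ on $\calH$. I expect the main obstacle to be the precise formulation of the direct integral decomposition in step (i) — this is the noncommutative Plancherel question for surface groups and is genuinely nontrivial — whereas the Schur's-lemma argument that drives step (ii) and explains the role of the trivial-kernel hypothesis is short and clean.
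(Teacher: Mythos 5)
Your proposal follows essentially the same route as the paper: decompose $\calH$ into isotypic components $\calH_\rho$, use the commutation $HT_\gamma = T_\gamma H$ together with the trivial-kernel hypothesis to show $H$ preserves each $\calH_\rho$, and then diagonalize each restriction $H_\rho$. Your explicit appeal to Schur's lemma is a welcome sharpening of the paper's bare assertion that linear dependence of the $H\psi_i$ ``can only happen when $H$ has a nontrivial kernel,'' and you correctly flag, as the paper does, that the direct-integral decomposition itself is the genuinely non-rigorous step.
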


This argument is essentially given in \cite{maciejko_automorphic_2021}, though framed slightly differently. That paper dealt with a finite subset of the full hyperbolic lattice in the tight-binding model, and consequently a finite-dimensional Hilbert space. This avoids several subtle issues with the above argument, like the lack of a proper Hilbert space and the assertion that all representations are finite-dimensional. Once again, we forego a rigorous treatment of the preceding argument and take it as a physical heuristic about what hyperbolic crystal momentum space ought to be. 

This momentum space is the moduli space of  irreducible unitary representations of $\Gamma$, $\Hom_\textrm{irr}(\Gamma \to U(n)) /U(n)$. We mod out by the conjugation action of $U(n)$ because two conjugate crystal momenta are equivalent. The most familiar component consists of rank one representations $\Gamma^\vee \equiv \Hom(\Gamma,U(1))$.  Since $U(1)$ is abelian, any homomorphism $\Gamma \to U(1)$ factors through the abelianization of $\Gamma$, which is isomorphic to $\ZZ^{2g}$ for an integer $g\geq 0$. Therefore, we have \[\Gamma^\vee \cong \Hom(\ZZ^{2g},U(1)) \cong U(1)^{2g}.\]
This component is a complex $g$-dimensional torus, thereby generalizing the Euclidean case, in which $g=1$ and where the momentum space is a complex $1$-dimensional torus. We call this the \textit{abelian Brillouin zone}, which was introduced for hyperbolic crystals in \cite{maciejko_hyperbolic_2021}. 
The other connected components of momentum space are the character varieties  $\Hom_\irr(\Gamma,U(n))/U(n)$, which we will call the \textit{nonabelian Brillouin zones}, introduced in \cite{maciejko_automorphic_2021}. We can express these geometrically as moduli spaces of vector bundles on Riemann surfaces.

\subsection{Vector bundles on Riemann surfaces} \label{sec:Riemann Surfaces}

The algebraic description of crystal momentum space arose by thinking of a crystal through its symmetry group. Geometry comes from the unit cell picture, representing the crystal with a Riemann surface $\Sigma = \HH / \Gamma$. Since $\Gamma$ acts freely and properly on the simply connected space $\HH$, the fundamental group $\pi_1(\Sigma)$ is isomorphic to $\Gamma$. Crystal momenta are representations of the fundamental group, which all arise from the monodromy of a flat connection, due to the Riemann-Hilbert correspondence. 

This is clearest for abelian Bloch states, parametrized by $U(1)$ representations of the fundamental group \cite{maciejko_hyperbolic_2021}. We start with a trivial complex line bundle over $\HH$. A Bloch state $\psi$ with crystal momentum $k$ is a section of this bundle with factor of automorphy $\chi_k$, which we can represent as a section of some line bundle $L$ over $\Sigma$. Specifically, $L$ is constructed from the trivial line bundle $\HH \times \CC$ by quotienting by $\Gamma \cong \pi_1(\Sigma)$, acting via $\gamma: (x,v) \to (\gamma(x),\chi_k(\gamma) (v)$. In effect, the fibers rotate by $\chi_k(\gamma)$ over any loop $\gamma$. We encode this as the parallel transport of a flat $U(1)$ connection $\nabla_L$, unique up to gauge transform. Note that $L$ has a Hermitian metric, pulled back from the standard Hermitian metric on the trivial bundle $\HH \times \CC$, $h(x,y) = x\bar{y}$. We can regard $\nabla_L$ as a Hermitian $\CC^*$ connection, which reduces the structure group from $\CC^*$ to $U(1)$.

Moving to complex geometry, the $(0,1)$ part of $\nabla_L$ gives a Dolbeault operator $\delbar_L$, endowing $L$ with a holomorphic structure. Through this map, the abelian Brillouin zone $\Hom(\Gamma, U(1))$ is isomorphic to the moduli space of degree 0\footnote{The degree is $0$ because the line bundle is topologically trivial.}
holomorphic line bundles on $\Sigma$, better known as the Jacobian variety $\Jac(\Sigma)$. For $\Sigma$ a genus $g$ surface, this is a $g$-dimensional complex torus, agreeing with the last section. For $g=1$, the Jacobian is the dual abelian variety to the position-space torus, reproducing the duality between momentum and position space for Euclidean crystals.

A similar scene plays out for higher rank representations, but the details are more involved \cite{maciejko_automorphic_2021}. Starting from a trivial vector bundle $\HH \times \CC^n \to \HH$, we obtain a topologically trivial\footnote{The topological type of a vector bundle is determined by its Chern classes, and on a Riemann surface the only nonzero Chern class is $c_1$. So, a vector bundle is topologically trivial if and only if it has degree $0$.}
vector bundle $E \to \Sigma$ by quotienting out $\Gamma$ acting by $\gamma: (x,v) \to (\gamma(x),\rho(\gamma) (v))$. Pulling back the standard Hermitian inner product on $\HH \times \CC^n$ gives a flat Hermitian vector bundle $(E,h)$ over $\Sigma$. We can associate the crystal momentum $\pi_1(\Sigma) \to U(n)$ to a flat $U(n)$ connection $\nabla_E$. In the algebraic geometry realm, this gives $E$ a holomorphic structure $\delbar_E = \nabla_E^{0,1}$. The Narasimhan-Seshadri theorem \cite{narasimhan_stable_1965} says that a holomorphic bundle arises from an irreducible representation if and only if it is stable, which is a constraint on the (normalized) degrees of holomorphic subbundles. Specifically, all subbundles $V\sub E$ must satisfy 
\[\frac{\deg(V)}{\rank(V)} < \frac{\deg(E)}{\rank(E)}.\]
That is, $E$ has the maximal normalized degree of all its subbundles. Stability ensures that the automorphism group of a vector bundle is as small as possible, which is necessary for a well-behaved (i.e. Hausdorff) moduli space. The nonabelian Brillouin zone $\Hom_\irr(\Gamma,U(n))/U(n)$ is diffeomorphic to the moduli space of degree zero stable vector bundles $\calN^s(\Sigma,n)$. This is a smooth manifold with real dimension $2n^2(g-1) + 2$. Notably, it is no longer a torus when $n>1$, instead developing a complicated but well-studied topology. It is noncompact, which is remedied by including semistable bundles (replacing $<$ in the definition of stability with $\leq$), which come from irreducible representation. The application of these moduli spaces to hyperbolic band theory is discussed in \cite{maciejko_automorphic_2021}. 

\subsection{Crystal Hamiltonian for abelian Bloch states}

The hyperbolic analogue of Bloch's theorem initiates hyperbolic band theory. This studies the spectrum of the crystal Hamiltonians $H_\rho$ as $\rho$ varies across the space of crystal momenta. We approach this with a geometric formulation of $H_\rho$. To start, we consider abelian crystal momenta defined by a representation $\chi_k:\Gamma \to U(1)$ with Hamiltonian $H_k$. Finding the spectrum of $H_k$  amounts to solving an eigenvalue problem for $H$ on the unit cell with twisted periodic boundary conditions  $\psi(\gamma(x)) = \chi(\gamma) \psi(x)$. This was studied numerically using a finite element method in reference \cite{maciejko_hyperbolic_2021}. We seek a closed form of $H_k$ which is more amenable for analytic manipulation.

\begin{theorem}\label{thm:momentum}
Let $A$ be a closed  one-form satisfying $\exp(\int_\gamma A )= \chi(\gamma)$ for every loop $\gamma$. Then, every eigenfunction of $H_k$ corresponds uniquely to an eigenfunciton with equal eigenvalue of the Hamiltonian 
\begin{equation} \label{eq:momentum_hamiltonian}
    H_L = (d + A)^* (d+A) + V
\end{equation}
acting on $L^2(\Sigma)$. In particular, the spectrum of $H_k$ and $H_L$ are equal.\footnote{The Hamiltonian defined on $\Sigma$ is denoted $H_L$ because $A$ will later be associated to a flat line bundle $L$ over $\Sigma$.}
\end{theorem}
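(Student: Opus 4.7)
The plan is to identify $\calH_k$ with the $L^2$-sections of a flat Hermitian line bundle on $\Sigma$, and then smoothly trivialize that bundle to reach $L^2(\Sigma)$, converting the flat connection into $d+A$.

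First, form the associated line bundle $L = (\HH\times\CC)/\Gamma$, where $\gamma\in\Gamma$ acts by $(z,v)\mapsto(\gamma z,\chi(\gamma)v)$. Since $\chi$ takes values in $U(1)$, the standard Hermitian metric on $\HH\times\CC$ and the trivial connection $d$ both descend, producing a flat Hermitian connection $\nabla_L$ on $L$. By construction, smooth sections of $L$ are precisely the $\chi$-equivariant smooth functions on $\HH$, so restricting to a fundamental domain gives a unitary isomorphism $\calH_k\cong L^2(\Sigma,L)$. Under this identification $\nabla_L$ acts on the equivariant lift as the exterior derivative $d$ on $\HH$; its Laplacian $\nabla_L^*\nabla_L$ becomes the hyperbolic Laplacian $\Delta$ (which preserves $\chi$-equivariance because $\Gamma$ acts by isometries), and the multiplication operator $V$ descends because $V$ is $\Gamma$-invariant. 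Hence $H_k$ is unitarily equivalent to $\nabla_L^*\nabla_L + V$ acting on $L^2(\Sigma,L)$.

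Next, smoothly trivialize $L$. Let $\pi:\HH\to\Sigma$ be the covering map. Since $\HH$ is simply connected and $\pi^*A$ is closed, $\pi^*A = d\phi$ for some smooth $\phi:\HH\to\CC$. Deck invariance of $\pi^*A$ forces $\phi(\gamma z)-\phi(z) = \int_\gamma A =: c(\gamma)$, a constant depending only on the homotopy class of $\gamma$. The hypothesis $e^{c(\gamma)} = \chi(\gamma)\in U(1)$ forces $c(\gamma)\in i\RR$, so $\phi$ may be chosen purely imaginary and hence $|e^\phi|=1$ pointwise. Define $\Phi: L^2(\Sigma,L)\to L^2(\Sigma)$ by $\Phi(s) = e^{-\phi}\tilde{s}$, where $\tilde{s}$ is the equivariant lift of $s$. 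The $\chi$-equivariance of $\tilde{s}$ cancels the factor $e^{c(\gamma)}=\chi(\gamma)$, so $\Phi(s)$ descends to a function on $\Sigma$; and $|e^{-\phi}|=1$ makes $\Phi$ a pointwise isometry, hence a unitary isomorphism. The chain rule gives $\Phi\circ d\circ\Phi^{-1} = d + d\phi = d + A$ on $C^\infty(\Sigma)$, hence $\Phi\circ\nabla_L^*\nabla_L\circ\Phi^{-1} = (d+A)^*(d+A)$. Since $V$ commutes with $\Phi$, conjugating yields $\Phi\circ H_k\circ\Phi^{-1} = H_L$, from which equality of spectra follows.

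The main subtlety is the unitarity of $\chi$, which is used twice: it ensures the Hermitian metric on $\HH\times\CC$ descends to $L$, and it forces $\phi$ to be imaginary so that $\Phi$ is pointwise isometric. If one relaxes unitarity --- as one eventually must, for example to accommodate the complex crystal momenta coming from Higgs bundles --- the Hermitian structure no longer descends and $|e^{-\phi}|\ne 1$, so $\Phi$ would only implement a similarity rather than a unitary conjugation; the spectra would still agree, but self-adjointness of $H_L$ would be lost. The rest of the argument is a bookkeeping exercise in translating equivariant objects on $\HH$ into their counterparts on $\Sigma$.
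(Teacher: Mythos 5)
Your proof is essentially the paper's argument in line-bundle clothing: both factor an eigenfunction as a quasi-periodic ``phase factor'' times a genuinely periodic remainder and conjugate the Laplacian by that factor. Your $e^{\phi}$ is the paper's $s$, and your identity $\Phi\circ d\circ\Phi^{-1} = d+d\phi$ is the paper's Leibniz computation $\delta s = s(\delta + \delta s / s)$. One organizational improvement on your side: by integrating $\pi^*A$ on the simply connected cover you manufacture the factor of automorphy directly from the given $A$, which absorbs the paper's separate converse step (showing every admissible $A$ equals $ds'/s'$ for some nowhere-vanishing $s'$) into a single construction.

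One claim in your writeup is not correct as stated: unitarity of $\chi$ forces only the \emph{periods} $c(\gamma)=\int_\gamma A$ to lie in $i\RR$; it does not force $\phi$ to be purely imaginary. Writing $A=\alpha+i\beta$ with $\alpha,\beta$ real closed forms, the hypothesis kills the periods of $\alpha$, so $\alpha=dg$ for a real function $g$ on $\Sigma$ --- but $g$ need not be constant, in which case $\mathrm{Re}\,\phi = g\circ\pi+\mathrm{const}$ and $|e^{\phi}|=e^{g\circ\pi}\not\equiv 1$. The theorem permits such $A$; the paper imposes $\bar A=-A$ only later, as the condition for self-adjointness. Consequently $\Phi$ is in general only a bounded similarity with bounded inverse (note $|e^{\phi}|$ descends to the compact $\Sigma$, hence is bounded above and below), not a unitary. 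This does not sink the theorem --- a similarity still carries eigenfunctions bijectively to eigenfunctions with the same eigenvalue and preserves the spectrum, which is all that is claimed --- but it does mean the step $\Phi\circ\nabla_L^*\nabla_L\circ\Phi^{-1}=(d+A)^*(d+A)$ needs care: conjugation by a non-unitary map does not commute with taking Hermitian adjoints, so the right-hand side must be read as the paper reads it, namely as the local formula $\frac{1}{g}(\partial_z+A')(\partial_{\bar z}+A'')$, a formal transpose rather than an honest adjoint (which is also why $H_L$ can fail to be self-adjoint at all). With that reading your argument goes through; as literally written, the unitarity claim should be restricted to purely imaginary $A$, or replaced by the similarity argument you yourself sketch in your closing paragraph.
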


We will pass between this coordinate invariant form and its expression in local coordinates. Choose a conformal coordinate $z$ on a patch of $\Sigma$, where the metric has the form $g(z) \de z ^2$ and $A = A'(z)\de z + A'' (z) \de \zbar$.  The Hamiltonian is then

\begin{equation} \label{eq:Local form of abelian hamiltonian}
    H_L = \frac{1}{g} (\delz+A') (\delbarz+A'') + V. 
\end{equation}

\begin{proof}[Proof: Theorem \ref{thm:momentum}]
We once again follow the Euclidean derivation for guidance. In that case, we can express any Bloch wave as $\psi_k(z) = e^{ik\cdot z} u(z)$, where $k,z$ are points in $\CC$ thought of as vectors in $\RR^2$, and $u(z)$ is periodic with the lattice $\Gamma$. In effect, we split $\psi_k$ into a predetermined phase factor $e^{ik\cdot z}$ satisfying proper boundary conditions, and a periodic remainder $u(z)$.  Denoting $k_z = k_x + i k_y, k_{\zbar} = k_x - i k_y$, and noting that $g=1$ on the torus, we have

\[
    H \psi_k =  ( \delz \delbarz + V) e^{ik\cdot z} u = e^{ik\cdot z} ((\delz + ik_z)(\delbarz + i k_{\zbar}) + V) u.
\]
Hence, if $\psi_k$ is an eigenfunction of $H$ with eigenvalue $\lambda$, then $u$ is an eigenfunction of 
\[(\delz + ik_z)(\delbarz + i k_{\zbar}) + V\]
with the same eigenvalue. Heuristically, twisted eigenfunctions of the untwisted Hamiltonian equal periodic eigenfunctions of a twisted Hamiltonian. With differential forms,  the twisted Hamiltonian is 
\[
    H_L = (\de + k_z \de z + k_{\zbar} \de \zbar )^*(\de + k_z \de z + k_{\zbar} \de \zbar ) + V.
\]
Carrying this to the hyperbolic case, we can once again decompose any Bloch wave as $\psi_k(z) = s(z)u(z)$ where $s(z)$ is a nowhere vanishing ``phase factor" satisfying the proper factors of automorphy, and $u(z)$ is periodic. For a given $\chi_k$, pick one such $s(z)$. Denoting the ``multiplication by s(z)" operator with $s$, Leibniz's rule for a derivation $\delta$ manifests as
\[
    \delta s = s (\delta + \frac{\delta s}{s}). 
\]
If $\psi_k$ is an eigenfunction of the Hamiltonian  $\de^* \de + V = \frac{1}{g}\delz \delbarz +V$ with eigenvalue $\lambda$, then this rule says
\[
    \lambda s u = (\frac{1}{g}\delz \delbarz + V)s u  = s(\frac{1}{g}(\delz+\frac{\delz s}{s})(\delbarz+\frac{\delbarz s}{s}) + V)u = s H_L u.
\]
The one-form $A$ in the theorem is
\[A = \frac{\delz s}{s} \de z + \frac{\delbarz s}{s} \de \zbar = \frac{\de s}{s} = \de \log s.\] 

We must next check that $\de s / s$ obeys all properties listed in the theorem. First, both $\de s$ and $s$ are factors of automorphy for the representation $\chi_k$, so their ratio is a well defined  one-form on $\Sigma$. It is also closed. Its monodromy is best studied on $\HH$, where for a loop $\gamma$ we get 
\begin{align*}
    \int_\gamma \de \log(s) &= \left(\log(s) + 2\pi i \, n  \right)|^{\gamma(p)}_p \\ 
    &= \log\left(\frac{s(\gamma(p))}{s(p)}\right)  + 2\pi i \, n \\
    &= \log\left(\chi_k(\gamma)\right)  + 2\pi i \, n 
\end{align*}
where $n\in\ZZ$ is an arbitrary integer. The exponential removes this ambiguity, resulting in the desired monodromy $\chi_k(\gamma)$

Finally, we must show that every one-form $A$ satisfying the given conditions is equal to $\de s' / s'$ for some nowhere vanishing factor of automorphy $s'$. The form $A-\de s' / s'$ is closed, and its integral around any loop is $2 \pi n$ for some integer $n$.  Therefore, it equals $\de \log p(z) $ for a non-vanishing function $p$. Rearranging, we get 
\begin{equation*}
    A = \de \log(s) + \de \log(p) = \de \log(sp)
\end{equation*}
and so $sp$ is the desired factor of automorphy.
\end{proof}

A physical Hamiltonian must be self-adjoint, which only holds for imaginary $A$. Indeed, the Hermitian conjugate of  $H_L$ is:

\begin{align*}
    H_L^\dagger &= \left(\frac{1}{g} (\delz+A') (\delbarz+A'')\right)^\dagger + V ^\dagger \\
    &= \frac{1}{g}(\delbarz^\dagger+A''^\dagger)(\delz^\dagger+A'^\dagger) + V \\
    &= \frac{1}{g}(-\delz + \bar{A''})(-\delbarz + \bar{A'}) + V,
\end{align*}
where we used $\delz^\dagger = \bar{\delz^*} = -\delbarz$ and likewise for $\delbarz$. Comparing to equation \ref{eq:Local form of abelian hamiltonian}, this equals $H_L$ when $\bar{A''} = -A'$. Without coordinates, this says $\bar{A} = -A$.  This constraint also arises if we only allow phase factors $s(z)$ with magnitude $1$ everywhere. Constraining $s \bar{s} = 1$, we see
\[A' = \frac{\delz s}{s} = - \frac{\delz \bar{s}}{\bar{s}}  \qquad A'' = \frac{\delbarz s}{s} = - \frac{\delbarz \bar{s}}{\bar{s}},\]
again finding $-\bar{A'} = A''$.

\subsubsection{Cohomological interpretation}
This proof suggests that $A = \de s / s$ should only be defined up to a choice of the factor of automorphy $s$. So, the space of $A$ is the space of closed one-forms, modulo exact one-forms and those with trivial monodromy. In terms of cohomology, $A$ lives in $H^1(\Sigma,\CC)/H_1(\Sigma,\ZZ)^*$, and restricting to self-adjoint Hamiltonians limits this to $H^1(\Sigma,\RR)/H_1(\Sigma,\ZZ)^*$.
Poincar\'e duality identifies $H_1(\Sigma,\ZZ)^* \cong H^1(\Sigma, \ZZ)$. The space $H^1(\Sigma,\RR)/H^1(\Sigma,\ZZ)$ carries a natural complex structure isogenic to the Jacobian $H^{0,1}(\Sigma,\CC)/H^1(\Sigma,\ZZ)$. That is, the proper space of possible $A$ is exactly the crystal momentum space.

\subsubsection{Physical interpretation} \label{sec:Hamiltonian physical interpretation}
We could have predicted the form of equation (\ref{eq:momentum_hamiltonian}) from physics. A nonzero crystal momentum implies the electron picks up a phase when traveling along a cycle in $\Sigma$, analogous to a geometric phase (or Aharanov-Bohm phase). In the familiar Aharanov-Bohm effect, a charged particle moving in a loop picks up a phase equal to the exponential of the monodromy of a vector potential. Vector potentials enter the Hamiltonian through minimal coupling, replacing the derivative $d$ with $d + A$, just as in equation (\ref{thm:momentum}). The magnetic field $B = \de A$ is dual to the magnetic flux through the surface, which for us is zero, meaning that $A$ must be closed. $A$ instead acts like magnetic fluxes threading through the holes of $\Sigma$, as indicated in figure $\ref{fig:magnetic_flux}$. The flux quantity determines the monodromy and thus the crystal momentum.

Which vector potential should we use? The Hamiltonian should be invariant under gauge transforms, but the only gauge-invariant quantities in electromagnetism are the magnetic field and the monodromy. These are both fixed, so all closed one-forms with the prescribed monodromy should be equivalent. This agrees with the analysis in the last subsection. The gauge freedom is equal to the freedom in the choice of phase factor $s$ (with magnitude $1$ everywhere).
\begin{figure}[htp]
    \centering
    \includegraphics[width = \textwidth]{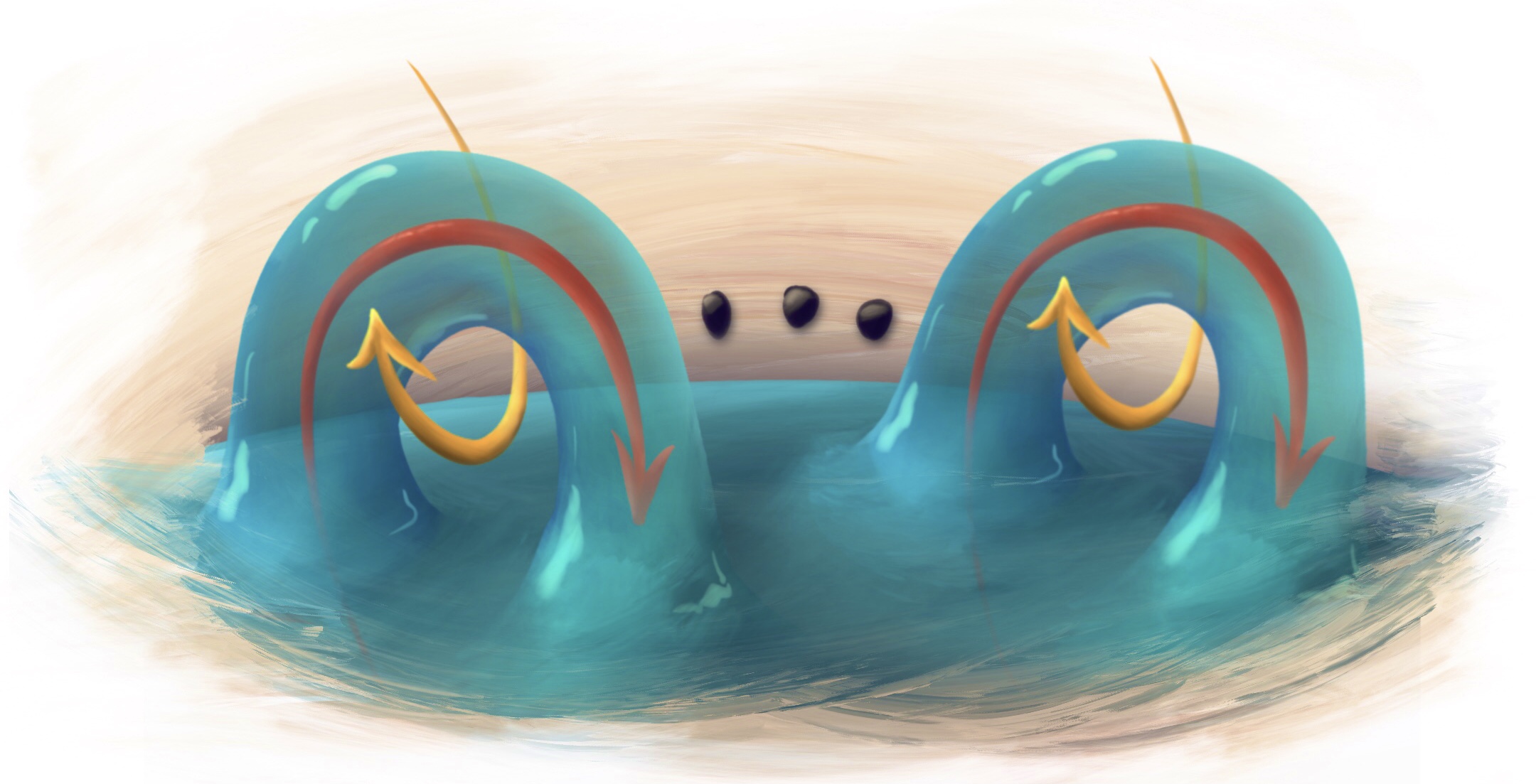}
    \caption{The crystal Hamiltonian can be expressed geometrically though closed one-forms with periods equal to the given crystal momentum. Physically, these are vector potentials with zero magnetic flux through the Riemann surface. This corresponds to magnetic fluxes threading the holes of the Riemann surface, as indicated. Said differently, these arrows designate a basis of harmonic one-forms. }
    \label{fig:magnetic_flux}
\end{figure}

\subsubsection{Geometric interpretation}\label{sec:geometric interpretation}
The connection with electromagnetism suggests a gauge theoretic approach, returning us to the realm of geometry. Electromagnetism is a $U(1)$ gauge field, and the vector potential is a $U(1)$ principle connection whose curvature is the magnetic field. The magnetic field here is zero, so the connection is flat. This is exactly the setup from Section \ref{sec:Riemann Surfaces}.  The differential operator $\de + A = \nabla_L$ is the flat connection, so the Hamiltonian gains the geometric interpretation 
\begin{equation} \label{eq:abelian hamiltonian geometric form}
    H_L = \nabla_L^* \nabla_L + V.
\end{equation}
Here, the adjoint is taken with respect to the standard Hermitian metric pulled back from $\HH \times \CC$. The Hamiltonian acts on sections of $L$, so its Hilbert space is $L^2(L)$. The eigenfunctions of $H$ in $\calH_k$ are the eigensections of $H_L$. Trivializing the line bundle gives an isomorphism $L^2(L) \cong L^2(\Sigma)$, which swaps the Hamiltonians of equation \ref{eq:abelian hamiltonian geometric form} and \ref{eq:momentum_hamiltonian}.

\subsection{Crystal Hamiltonian for nonabelian Bloch states} \label{sec:Nonabelian hamiltonian}
We can find similar expressions for the Hamiltonian in the nonabelian case. Consider the representation $\rho: \pi_1(\Sigma) \to U(n)$, associated to the vector bundle $E \to \Sigma$. The Hamiltonian restricted to the subspace of nonabelian Bloch states that transform according to $\rho$ is denoted $H_\rho$.

\begin{theorem}\label{thm:nonabelian hamiltonain}
Let $A$ be an $\End(E)$-valued  one-form with monodromy $\rho(\gamma)$ about every loop $\gamma$. Then, every eigenfunction of $H_\rho$ uniquely corresponds to an eigensection with equal eigenvalue of $H_E = (\de + A)^* (\de + A) + V \cdot \textrm{Id}$ acting on sections of $E$.
\end{theorem}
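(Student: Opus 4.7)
The approach mirrors the proof of Theorem \ref{thm:momentum}, promoting the scalar phase factor $s$ to a matrix-valued frame. The first step is to choose a smooth map $S \colon \HH \to U(n)$ satisfying the equivariance $S(\gamma(z)) = \rho(\gamma) S(z)$; such a frame exists because $\HH$ is contractible, so the trivial bundle $\HH \times \CC^n$ admits global unitary trivializations, which one selects to carry the prescribed factors of automorphy. Given a Bloch state $\Psi$ in $\calH_\rho$, realized as a $\rho$-equivariant $\CC^n$-valued function on $\HH$, we decompose $\Psi = S \cdot U$, where $U$ is $\Gamma$-invariant and thus a section of $E$ in the trivialization provided by $S$. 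This generalizes the scalar decomposition $\psi_k(z) = s(z) u(z)$ of the abelian case.

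The central computation applies the Leibniz rule to $H(SU) = \lambda (SU)$. For any derivation $\delta$, one has $\delta(SU) = S\bigl(\delta U + S^{-1}(\delta S) U\bigr)$, so left multiplication by $S$ intertwines $\delta$ acting on $\Psi$ with the twisted operator $\delta + S^{-1} \delta S$ acting on $U$. Setting $A := S^{-1} dS$, the untwisted Hamiltonian $H = d^* d + V$ translates into the twisted Hamiltonian $H_E = (d + A)^*(d + A) + V \cdot \mathrm{Id}$ on sections of $E$. Because $S$ is unitary, multiplication by $S$ is an isometry on $L^2$, which ensures the formal adjoint $d^*$ passes through the substitution cleanly; the argument then proceeds as in the abelian proof, expanding both $\partial_z$ and $\bar\partial_z$ in a conformal coordinate. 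The converse direction, in which every eigensection $U$ of $H_E$ lifts to an eigen-multiplet $SU$ of $H$, follows from the same identity.

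It remains to check that $A = S^{-1} dS$ recovers a connection with monodromy $\rho$ and that any $A$ satisfying the theorem's hypothesis arises this way. The $\Gamma$-equivariance of $S$ forces the cancellation $S(\gamma z)^{-1} dS(\gamma z) = S(z)^{-1} \rho(\gamma)^{-1} \rho(\gamma) dS(z) = S(z)^{-1} dS(z)$, so $A$ descends to $\Sigma$; the connection $d + A$ is, by construction, the flat connection on $E = \HH \times_\rho \CC^n$ inherited from the trivial connection on the cover, so its holonomy is exactly $\rho$. Any other $A$ satisfying the theorem's hypothesis differs from $S^{-1} dS$ by a global gauge transformation of $E$ (since two flat $GL(n,\CC)$-connections with the same holonomy are gauge-equivalent), which can be absorbed into the choice of frame $S$. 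The main obstacle relative to Theorem \ref{thm:momentum} is noncommutativity: the ``logarithmic derivative'' must be interpreted as the matrix-ordered $S^{-1} dS$ rather than $d \log S$, and every Leibniz expansion is sensitive to the order of factors. Self-adjointness of $H_E$ requires $A$ to be skew-Hermitian-valued --- the nonabelian analog of the constraint $\bar A = -A$ --- which is precisely what forces $S$ to take values in $U(n)$ rather than $GL(n, \CC)$.
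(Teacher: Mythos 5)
Your proof is correct and follows essentially the same route as the paper's: peel off a $\rho$-equivariant unitary frame $S$, use the Leibniz rule to convert $\de^*\de + V$ into $(\de + S^{-1}\de S)^*(\de + S^{-1}\de S) + V\cdot\mathrm{Id}$, identify $A = S^{-1}\de S$ as the flat connection form with monodromy $\rho$, and absorb the remaining freedom in $A$ into global gauge transformations. The one point where your write-up is cleaner is in taking $S$ to be a genuinely matrix-valued equivariant map $\HH \to U(n)$ (existence justified by contractibility of $\HH$) and explicitly checking that $S^{-1}\de S$ descends to $\Sigma$, whereas the paper assembles a diagonal $S$ from $n$ unit-modulus scalar functions and is terser about why the resulting connection form has full monodromy $\rho$.
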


\begin{proof}
Following the proof of theorem \ref{thm:momentum}, we start by separating out a phase factor and a periodic part. We accomplish this by picking a frame. Let $\tilde{s}_1(z), \dots, \tilde{s}_n(z)$ be a set of complex functions with magnitude 1 on $\HH$ transforming according to $\rho$. 
This pushes down to a frame $s_1,\dots,s_n$ of $E$, unitary with respect to the induced Hermitian metric $h_\flat$ (this frame always exists because $E$ is topologically trivial). Every section $\vec{\psi}$ of $E$ comes from a function $\tilde{\psi}(z)$ on $\HH$, written as 
\[\tilde{\psi}(z) = \tilde{s}_1(z) \tilde{u}_1(z) + \dots + \tilde{s}_n(z) \tilde{u}_n(z)\]
for $\tilde{u}_i(z)$ periodic functions on $\HH$. These are lifts of functions $u_i$ on $\Sigma$. The exterior derivative acts on $\tilde{\psi}$ by

\begin{gather}
    \tilde{\psi} = \tilde{s}_1 \tilde{u}_1 + \dots + \tilde{s}_n \tilde{u}_n = \begin{pmatrix} \tilde{s}_1 & \dots & \tilde{s}_n \end{pmatrix} \begin{pmatrix} \tilde{u}_1 \\ \vdots \\ \tilde{u}_n \end{pmatrix} \nonumber \\
    \de \tilde{\psi} = \tilde{s}_1 (\de + \frac{\de \tilde{s}_1 }{\tilde{s}_1 })\tilde{u}_1 + \dots + \tilde{s}_n (\de + \frac{\de \tilde{s}_n }{\tilde{s}_n })\tilde{u}_n  = 
    \begin{pmatrix} \tilde{s}_1 & \dots & \tilde{s}_n \end{pmatrix}
    \begin{pmatrix} 
    \de + \frac{\de \tilde{s}_1 }{\tilde{s}_1 } &  &\\
    & \ddots & \\
    & & \de + \frac{\de \tilde{s}_n }{\tilde{s}_n } \end{pmatrix}
    \begin{pmatrix} \tilde{u}_1 \\ \vdots \\ \tilde{u}_n \end{pmatrix}.  \label{eq:derivative hyperbolic nonabelian}
\end{gather}

Said another way, we can consider $\tilde{\psi}$ to be a section of the trivial vector bundle $\HH \times \CC^n$. This carries the standard constant frame of basis vectors $e_i$ of $\CC^n$, and we have picked a unitary frame $s_i = s_i(z) e_i$. This has a diagonal change of frame matrix 
\[S = \begin{pmatrix} 
    s_1(z) &  &\\
    & \ddots & \\
    & & s_n(z) \end{pmatrix}.\]
The computation in equation \ref{eq:derivative hyperbolic nonabelian} notes that, in the frame $\tilde{s}$, the trivial connection takes the form $\de + S\inv \de S$. 
Pushing down to $E$, we get a global frame $s$, which we can write with respect to a local constant frame. The change of frame matrix is the same, as is the coordinate form of the trivial connection in this frame, $\de + S\inv \de S$. We see the Hamiltonian acts on a function $\tilde{\psi}$ as 
\[H \tilde{\psi} = (\de^* \de + V) \vec{\psi} = \begin{pmatrix} \tilde{s}_1 & \dots & \tilde{s}_n \end{pmatrix} \left( (\de + S^{-1} \de S)^*  (\de + S^{-1} \de S) + V      \right) \vec{u}.\]
Pushing forward to the associated section  $\vec{\psi}$ of $E$, this has much the same form in the frame $s$:
\[H_E \vec{\psi} = \left( (\de + S^{-1} \de S)^*  (\de + S^{-1} \de S) + V \cdot Id     \right)\vec{\psi}.\]

Gluing together the local constant frames of $E$, $S\inv \de S$ glues into a global $\End(E)$-valued one-form on $\Sigma$. Using the reasoning from \ref{thm:momentum}, the monodromy of $\frac{\de s_i}{s_i}$ is $\rho(e_i)$, so the full monodromy of $S\inv \de S$ is $\rho$, as desired.

Finally, we must contend with gauge transformations, which come from a different choice of global frame $s'$. We call the change of frame $s \to s'$ matrix $D$, which is a global section of $\End(E)$. This gauge transform replaces $\de + S\inv \de S$ with $\de + D\inv S\inv \de S D + D\inv \de D$. However, $D$ is a global endomorphism so the monodromy of $D\inv \de D$ must be trivial. Since the only gauge invariant quantity is monodromy, every $A$ with given monodromy (up to conjugation) is representable as $D\inv S\inv \de S D + D\inv \de D$ for some endomorphism $D$.
\end{proof}

\begin{remark}
For abelian crystal momenta, this Hamiltonian also arose from minimal coupling with a $U(1)$ gauge field, see Section \ref{sec:Hamiltonian physical interpretation}. In higher rank, it should couple to a $U(n)$ gauge theory. The flat connections are solutions to the Yang-Mills equations on $\Sigma$, which we can describe with $U(n)$ fluxes threading the holes of $\Sigma$ as in Figure \ref{fig:magnetic_flux}. 
\end{remark}

\subsubsection{Geometric interpretation}

Much like the abelian case, we can think of $(\de + A)$ as a flat unitary connection $\nabla_E$ on $E$. The Hamiltonian has geometric form $H_E = \nabla_E^* \nabla_E + V$. This lets us state the central problem of hyperbolic band theory in full generality:

\begin{problem}[The band theory problem]
Consider a Riemann surface $\Sigma$ with a degree zero, Hermitian vector bundle $(E,h)$ with $U(N)$ flat connection $\nabla_E$ and a real potential $V:\Sigma \to \RR$. How does the spectrum of the operator
\begin{align*}
H_E = \nabla_E^* \nabla_E +V
\end{align*}
vary along the moduli space of flat irreducible unitary connections?
\end{problem}

We can bring this closer to algebraic geometry by trading flat structures with holomorphic structures. The flat connection splits into holomorphic and antiholomorphic components, $\nabla_E = \nabla_E' + \nabla_E''$. The $(0,1)$ part defines a Dolbeault operator $\delbar_E = \nabla_E''$, making $E$ into a holomorphic vector bundle.  This map is a diffeomorphism between the moduli space of flat irreducible unitary connections and the moduli space of stable holomorphic vector bundles. We can also express the Laplacian through $\delbar_E$. The Laplacian decomposes as 
\[\nabla_E^*\nabla_E = (\nabla_E' + \nabla_E'')^*(\nabla_E' + \nabla_E'') = \nabla_E'^*\nabla_E' +\nabla_E''^*\nabla_E''.\]
The cross terms vanish due to type considerations. A Weitzenböck-type identity states
\footnote{Specifically, this is the Nakano-Akizuki formula. The usual expression on the right is $[F_{\nabla_E},\Lambda]$, which simply equals $\star F_{\nabla_E}$ on a Riemann surface.} 
\begin{align} \label{eq:Nakano-Akizuki}
    \nabla_E''^*\nabla_E'' = \nabla_E'^*\nabla_E' + i \star F_{\nabla_E}.
\end{align}
Since $\nabla_E$ is flat, the curvature $F_{\nabla_E}$ vanishes and $\nabla_E^*\nabla_E = 2\nabla_E''^*\nabla_E'' = 2 \delbar_E^*\delbar_E $. This gives an alternate version of the hyperbolic band theory problem:

\begin{problem}[The band theory problem, holomorphic version]
Consider a Riemann surface $\Sigma$ with a degree zero holomorphic vector bundle $(E,\delbar_E)$ and a real potential $V:\Sigma \to \RR$. How does the spectrum of the operator
\begin{align*}
H_E = 2\delbar_E^* \delbar_E +V
\end{align*} vary along the moduli space of stable vector bundles $\calN^s(\Sigma,n)$?
\end{problem}

The space of hyperbolic crystal momenta (the hyperbolic Brillouin zone) is the moduli space of all stable bundles. The connected components are stable bundles of a specified rank, starting with the abelian Brillouin zone (Jacobian of the curve), but increasing in dimension and complexity as the rank increases (See Figure \ref{fig:stable}). We wish to find the band structure: A graph of the spectrum of $H_E$ expressed as a many-sheeted cover of hyperbolic Brillouin zone.

We note one more geometric perspective on this Hamiltonian. As the Hermitian metric induced by the standard one on $\HH\times \CC^n$ is flat, its Chern connection relative to the holomorphic structure $\delbar_E$ equals the flat connection $\nabla_E$. Using equation \ref{eq:Nakano-Akizuki}, we can control a scalar part of the Laplacian with the Hermitian metric. Applied to the Chern connection $\nabla_h$, it says
\[\nabla_h^* \nabla_h = \delbar_E^* \delbar_E + i \star \Theta(h),\]
where $\Theta(h)$ is the curvature of the metric. Choosing $h$ with a central curvature scaled by the potential, we have that $i \star \Theta(h) = \frac{1}{2}V \cdot \textrm{Id}$. This induces a third formulation of the band theory problem:

\begin{problem}[The band theory problem, Chern version]
Consider a Riemann surface $\Sigma$ with a degree zero topological vector bundle $E$ and fixed Hermitian metric $h$ with curvature $\frac{1}{2}\star V$, and denote the Chern connection by $\nabla_h$. How does the spectrum of the operator
\begin{align*}
H_E = \nabla_h^*\nabla_h
\end{align*} vary with the holomorphic structure of $E$?
\footnote{Note that the adjoint is taken with respect to the curved metric $h$. Since $\Theta(h)$ is central, we can find a unitary frame with respect to the metric $h_\flat$ induced by $\HH \times \CC^n$, such that $h$ is a multiple of the identity matrix. As such, the adjoint with respect to $h$ agrees with the adjoint with respect to $h_\flat$.}
\end{problem}

This formulation is natural in the context of the Higgs bundle approach described in the remainder of this paper, and is perhaps the best framework for more general questions. That said, the potential will suffice for our purposes and will simplify some exposition.

\begin{figure}[htp]
    \centering
    \includegraphics[width = \textwidth]{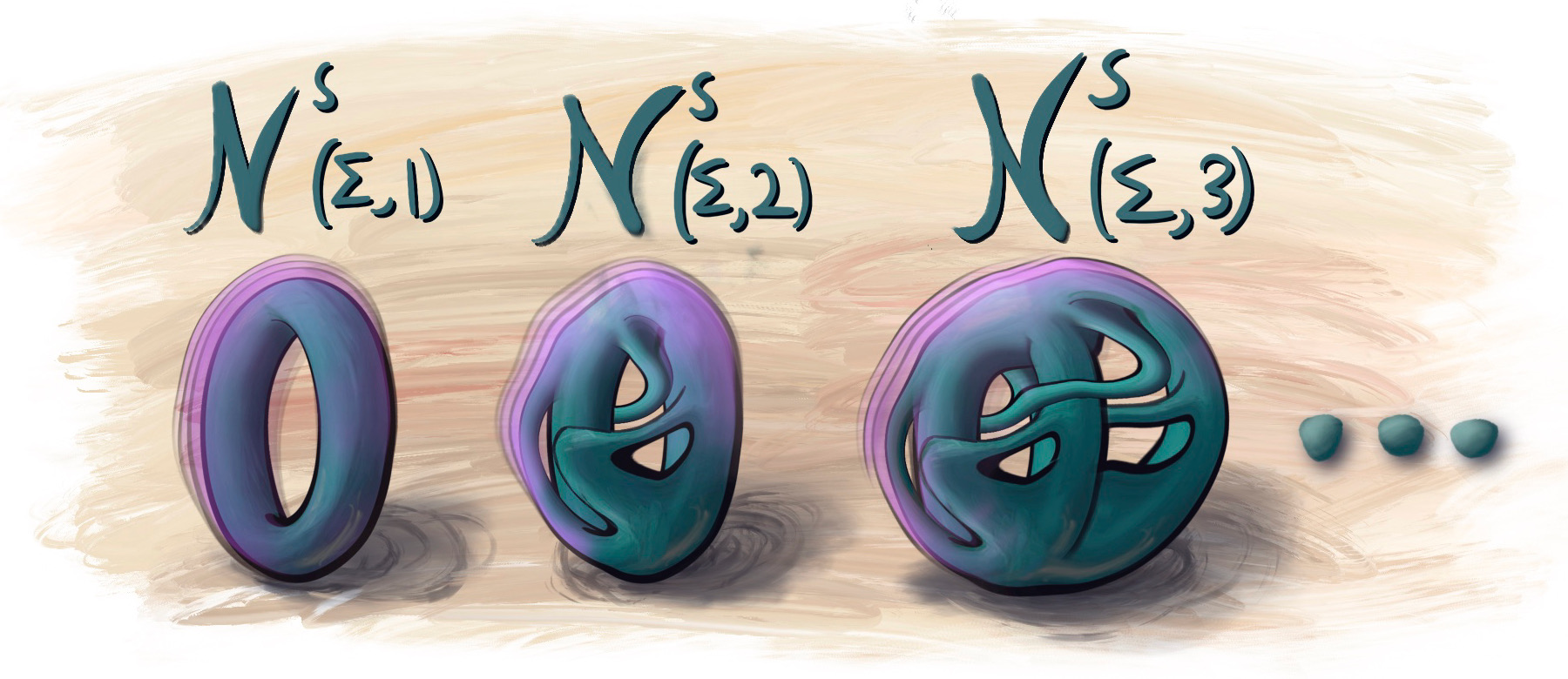}
    \caption{Visualization of the Brillouin zone and band structure of a hyperbolic crystal. The Brillouin zone is blue, with connected components consisting of the moduli space of stable vector bundles over $\Sigma$ with degree zero and rank $n$, $\calN^2(\Sigma,n)$. These increase in dimension and complexity with $n$, starting with a torus (the Jacobian) at $n=1$. (The drawings of higher rank moduli spaces only seeks to communicate the increasing complexity). The band structure is a locally an infinite-sheeted cover of the Brillouin zone, the first few sheets visualized here with purple shells.}
    \label{fig:stable}
\end{figure}

\section{Higgs bundles}
\label{sec:Higgs_Bundles}
The primary intent of this paper is to study the band theory problem from the last section using Higgs bundles. In this section, we introduce the core ideas of Higgs bundles, taking a route optimized for hyperbolic band theory.

\subsection{Riemann surfaces as spectral curves}\label{spectral}

For motivation, recall that a Riemann surface $\Sigma$ encodes a hyperbolic crystal with abelian crystal momentum. Suppose the crystal had a symmetry, represented by a finite group $G$ acting holomorphically on $\Sigma$. To remove redundancy, we represent the crystal data as a periodic geometric structure over a fundamental domain of $G$. The fundamental domain is the quotient $C = \Sigma/\Gamma$, which inherits the Riemann surface structure of $\Sigma$. It comes with a $|G|$-to-1 branched covering map $p:\Sigma \to C$. We want to encode the branched covering $p$ and line bundle $L \to \Sigma$ in a geometric structure over $C$. Over a regular point $b\in C$, this data consist of a set of points $p\inv (b)$ that we represent as 'heights' $\lambda_i \in \CC$, each with a one dimensional subspace $L|_{p_i}$. We package this into a $N\times N$ matrix acting on the vector space $\oplus_{i} L|_{p_i}$, with eigenvalues $\lambda_i$ and eigenvectors $L|_{p_i}$. 

Roughly speaking, A Higgs bundle is a \textit{global, holomorphic} version of this construction. It consists of a holomorphic vector bundle $E \to C$, and a Higgs field: a holomorphic section $\phi$ of $\End{E}\otimes K$, where $K = T^{*(1,0)}(C)$ is the holomorphic cotangent bundle of $C$, also called the canonical bundle. After choosing a frame for $E$, $\phi$ is a matrix of holomorphic one-forms, with eigenvalues valued in $K$. These generalize to twisted Higgs bundles, whose eigenvalues can live in any line bundle $K(D) = K \otimes \calO(D)$, where $\calO(D)$ is the line bundle associated to some divisor $D$ on $C$. The graph of the eigenvalues defines a \textit{spectral curve} $\Sigma$ as a codimension one submanifold of the total space $\Tot(K(D))$

\begin{figure}[htp]
    \centering
    \includegraphics[width = \textwidth]{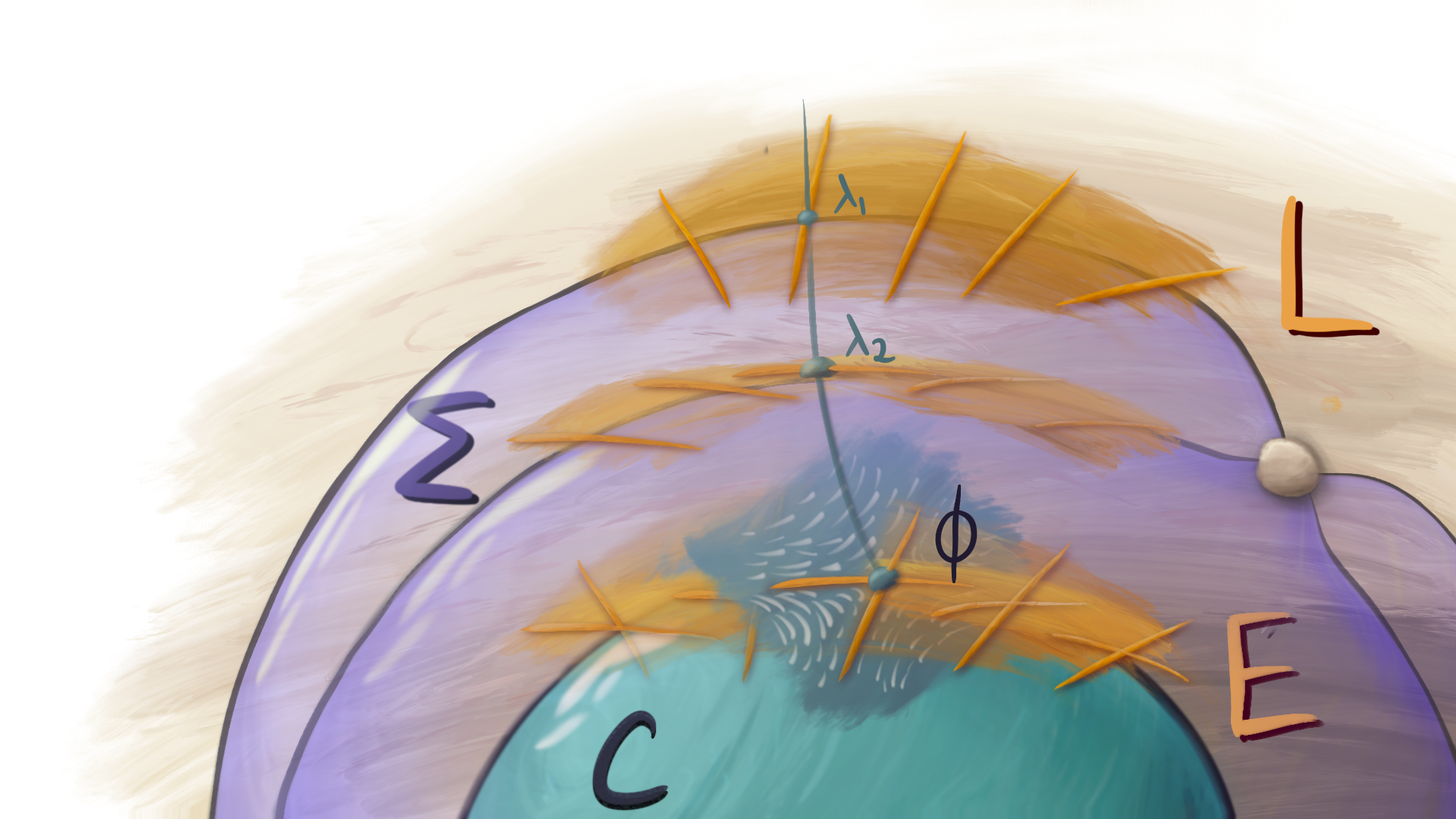}
    \caption{The spectral data of a Higgs bundle $\phi$ encodes a line bundle $L$ on a branched cover $\Sigma \to C$. Locally $\phi$ is a matrix-valued one-form. Its eigenvalues encode the layers of $\Sigma$ over $C$, and its eigenvector decomposition splits the vector bundle $E \to C$ into a line bundle on $\Sigma$. Generically, repeated eigenvalues only occur in the same Jordan block of the Higgs field, so have a single eigenvector. These means the line bundle smoothly extends over the branch points (white dot).}
    \label{fig:spectral}
\end{figure}

We can pass between a twisted Higgs bundle $(E,\phi: E \to E \otimes K(D))$ and its spectral data $(\Sigma \sub \Tot(K(D)), L \to \Sigma)$, a process called \textit{abelianization}. Starting from the spectral data, $E$ is related to $L$ at a regular point $b$ by $E|_b = \oplus_{i} L|_{p_i}$. We extend this across branch points using the language of sheaves. Denoting the sheaf of holomorphic sections of $L$ by $\calO(L)$, we define the pushforward sheaf\footnote{This is often called the ``direct image sheaf"}
$p_*\calO(L)$: For $U\sub C$ open, the space of holomorphic sections $H^0(U,p_*\calO(L))$ equals the holomorphic sections of $L$ on $p\inv(U)$, $H^0(p\inv(U),\calO(L))$. For a branched cover $p$ the pushforward sheaf is locally free, defining a pushforward vector bundle $E = p_*L$. The pushforward uniquely maps holomorphic sections of $L$ to those of $p_*L$, allowing one to define pushforwards of operators. For a good introduction, see \cite[chapter 2]{hitchin_integrable_2013}. We can also obtain the Higgs field. The total space of $K(D)$ has projection $\pi: \Tot(K(D)) \to C$, and carries the tautological line bundle $\pi^*K(D)$. It has a tautological section $\lambda$, which fiberwise is the identity function $\lambda(z) = z$. Multiplying by $\lambda$ acts fiberwise on $L$, scaling $L|_{p_i}$ by $\lambda_i$. The pushforward of this operation gives the Higgs field $\phi : p_*L \to p_*L \otimes K(D)$. 

From a Higgs bundle, we get a spectral curve $\Sigma$ by graphing the eigenvalues of $\phi$ in $\Tot(K(D))$. More invariantly, $\Sigma$ is the locus of zeros of the characteristic polynomial $\det(\phi-\lambda I) = \lambda^d + a_{1}\lambda^{d-1} + \dots + a_d $, with coefficients $a_i = \Tr(\phi^i) \in H^0(\Sigma, K(D)^i)$. These coefficients characterize the spectral curve, giving a fibration on the moduli space of Higgs bundles (see Section \ref{sec:Moduli of Higgs bundles}). The projection $\Tot(K(D)) \to C$ restricts to the branched covering $p: \Sigma \to C$. Finally, the line bundle on $\Sigma$ is the unique $L$ with pushforward $p_*L=E$.

\subsection{Hyperelliptic curves} \label{sec:Hyperelliptic}
Let us see this apply this to hyperelliptic curves, double branched covers of the Riemann sphere $\PP^1$. Hyperelliptic curves are an important simple case, which we will use as an example throughout this paper.

Consider a crystal with spatial inversion symmetry. Inversion about a point $z=0$ is an isometry $\sigma$ locally acting as $z \to -z$. In the Poincar\'e disk model centered about this point, inversion is Euclidean inversion of the disk. A crystal with spatial inversion symmetry has a fundamental polygon symmetric about the origin, as depicted in figure  \ref{fig:hyperelliptic}A. The fundamental domain for inversion is half the polygon. The quotient $\Sigma / \sigma$ is constructed by cutting out the fundamental domain and gluing the boundaries according to $\sigma$. Each side of the half polygon is identified with its flipped self up to lattice translation, as shown in Figure \ref{fig:hyperelliptic}B. Folding this up gives a topological sphere, shown in Figure \ref{fig:hyperelliptic}C. More formally, the branch points of $p:\Sigma \to \Sigma/\sigma$ are the fixed points of $\sigma$, consisting of the center, vertex, and midpoint of half the edges of the polygon. If $g$ is the genus of $\Sigma$, the polygon has $4g$ edges, and $\sigma$ has $2g+2$ fixed points. The Riemann-Hurwitz formula guarantees that a double branched cover $p:\Sigma \to C$ with $2g+2$ branch points must have $C = \PP^1$. So, the Riemann surface is a double-cover of $\PP^1$, depicted by Figure \ref{fig:hyperelliptic}D. These are called hyperelliptic curves, and $\sigma$ is a hyperelliptic involution. Conversely, any hyperelliptic curve can be represented with a spatially symmetric unit cell \cite{gallo_uniformization_1979}.

\begin{figure}[htp]
    \centering
    \includegraphics[width = 1.\textwidth]{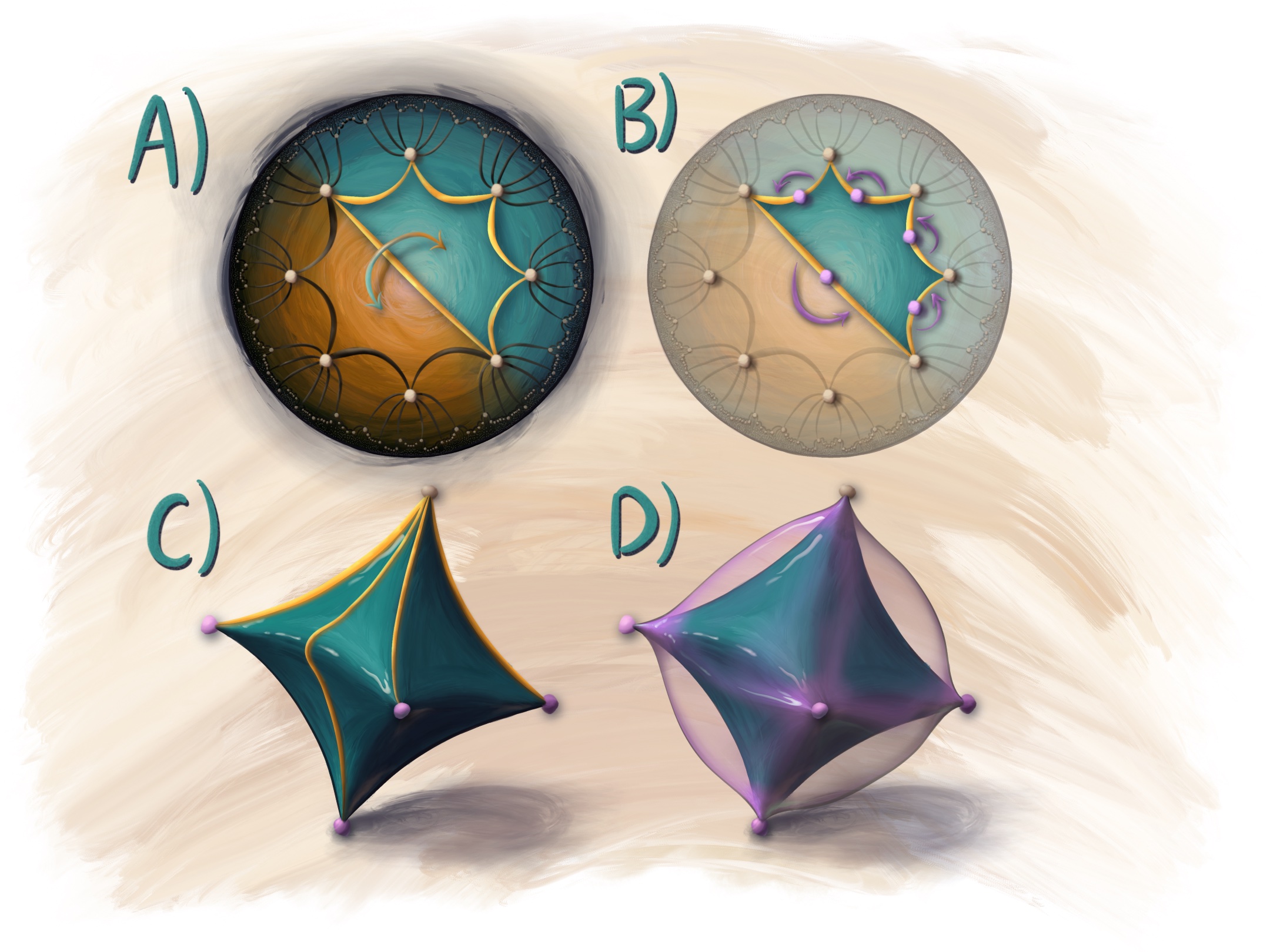}
    \captionsetup{singlelinecheck=off}
    \caption[test]{Crystals with spatial inversion symmetry and hyperelliptic curves.
    \begin{enumerate}[label=(\Alph*)]
    \item Spatial inversion interchanges both halves of the unit cell, which are minimal generating sets of the symmetry. 
    \item Restricting to half the cell, spatial inversion acts by ``folding'' each edge in half. The fixed points are marked by purple and white dots. 
    \item The half-cell topologically folds into a sphere, with the final location of fixed points and edges indicated in the figure. It inherits a constant negative curvature away from the fixed points, and cone points with angle $\pi$ at the fixed points.
    \item All symmetric crystals holomorphically double cover $\PP^1$, corresponding to hyperelliptic curves. This figure depicts $\Sigma$ as a double cover of $\PP^1$.
    \end{enumerate} }
    \label{fig:hyperelliptic}
\end{figure}

We can represent the crystal data $(\Sigma,L)$ as a twisted Higgs bundle $(p_*L,\phi)$ over $\PP^1$, with $\phi$ valued in $K(D)$. The characteristic polynomial of $\phi$ is $\lambda^2 + P(z)$, where $P(z)$ is a section of $K(D)^2$ with zeros at the branch points. The degree of the polynomial $P(z)$ is $2g+2$, so the degree of $K(D)$ is $g+1$. On $\PP^1$, this uniquely characterizes $K(D) = \calO(g+1)$.  Relating the Euler characteristics of the bundles $L$ and $p_*L$ shows that the degree of $p_*L$ is $-(g+1)$. Since $p_*L$ lives over $\PP^1$, the Birkhoff-Grothendieck theorem says it splits as
\begin{equation*}
    p_*L = \calO(-k)\oplus \calO(k-(g+1))
\end{equation*}
for some integer $k$. Using this decomposition, we can write $\phi$ as a matrix of sections of different line bundles over $\PP^1$:

\begin{gather}
    \phi \in H^0\left(\Sigma,
    \begin{pmatrix}
        \calO(-k) \otimes \calO(-k)^* & \calO(-k) \otimes \calO(k-(g+1))^* \\
        \calO(-k)^* \otimes \calO(k-(g+1)) & \calO(k-(g+1)) \otimes \calO(k-(g+1))^*
    \end{pmatrix}
    \otimes \calO(g+1) \right) \nonumber \\ 
    =H^0\left( \Sigma, 
    \begin{pmatrix}
        \calO(g+1) & \calO(2(-k+g+1)) \\
        \calO(2k) & \calO(g+1)
    \end{pmatrix} \right). \label{eq: Hyperelliptic Higgs}
\end{gather}
A $2\times2$ matrix of polynomials with these degrees represents a Higgs field. A (stable) Higgs field only exists if all of the entries have nonnegative degree, implying $0\leq k \leq (g+1)$.

\subsubsection{Parabolic Higgs bundles}
\label{sec:parabolic}
Parabolic Higgs bundles offer a dual perspective to twisted Higgs bundles. Instead of a holomorphic section of $\End(E)\otimes K(D)$, $\phi$ is a meromorphic section of $\End(E)\otimes K$, with poles allowed on the divisor $D$. The Higgs field only blows up along a given direction in the fiber of $E$. A ``parabolic bundle" encodes how the Higgs field diverges:

\begin{definition}[Rank 2 parabolic bundle]
Consider a \textit{parabolic divisor} $D = \{d_1,\dots, d_n\}$ on $\Sigma$. For a rank $2$ vector bundle $E \to \Sigma$, a parabolic structure on $E$ assigns to each parabolic point $d_i$ a $1$-dimensional subspace $F_i \sub E|_{d_i}$ and weights $\Vec{\alpha} = (\alpha_1,\alpha_2)$ with $0 \leq \alpha_1 < \alpha_2 < 1 $. 
\end{definition}

\begin{remark}
Higher rank parabolic bundles are characterized by a full flag on the fibers at the parabolic divisor, with weights. To simplify the exposition, we restrict to rank $2$ parabolic bundles throughout the paper.
\end{remark}

A parabolic Higgs bundle features a Higgs field respects this structure. Namely, the residue of such a parabolic Higgs field $\phi$ at a point $d_i\in D$ sends $E|_{d_i}$ to $F_i$ and $F_i$ to $0$ --- thus, it is nilpotent at $D$. First, this implies $\Tr(\phi)$ is a holomorphic section of $K(D)$ vanishing along $D$ by nilpotency, so it descends to a holomorphic section of $K$. For a hyperelliptic curve where $C=\PP^1$,  the canonical bundle is $\calO(-2)$, so the only global holomorphic section is identically zero. Thus, $\phi$ is trace-free, and so the structure of the Higgs bundle is given by the group $SL(2,\CC)$. Secondly, $P(z) = \det(\phi^2)$ vanishes on $D$, and so $D$ is contained in the set of branch points. The degree of $K(D)$ is $g+1$, from which it follows that $D$ contains $g+3$ points that are amongst the $2g+2$ branch points. The other branch points occur where $\phi$ itself is nilpotent.

\subsubsection{Orbifolds}
Parabolic bundles in hyperbolic band theory come from orbifolds. These arise from a manifold with  properly discontinuous (but not necessarily free) finite group action. They are like manifolds, except they are locally modeled on $\RR^n/G$ for a finite group action $G$. For example, take a hyperelliptic curve $\Sigma$ with hyperelliptic involution $\sigma$. The quotient $\Sigma/\sigma$ is an orbifold, modeled on $\CC$ away from the branch points and on $\CC/\{z \to -z\}$ about the branch points. However,  the holomorphic map $z \to z^2$ sends $\CC/\{z \to -z\}$ to $\CC$, so these charts define the same Riemann surface structure, giving a double cover of $\PP^1$.

We cannot ignore the orbifold structure, as the metric is not isomorphic after quotienting by $z \to -z$. It instead develops a conical singularity with cone angle $\pi$. For a papercraft analogy, $\CC/\{z \to -z\}$ is formed by cutting the plane in half and ``folding" the free side about zero, forming a cone. This gives the local structure of $\Sigma/\sigma$ at a branch point. All in all, the resulting $\PP^1$ inherits a metric with constant negative curvature and cones at all the branch points, as depicted in Figure \ref{fig:hyperelliptic}C. The vector bundle $E\to \Sigma/\sigma$ carries a natural $\ZZ_2$ action of $\sigma$, which induces a parabolic structure on $E$. The parabolic divisor consists of fixed points of $\sigma$ (orbifold points), and the distinguished lines are those preserved by $\sigma$. Thinking of $E$ as the pushforward $p_*L$, the preserved line is the pushforward of even sections of $L$. From the singularity in the metric $g(z)~|z|$, the parabolic weights are $(0,1/2)$. This is a particularly simple example of a very general correspondence between orbifold bundles and parabolic bundles, see \cite{biswas_parabolic_1997}. Note that for an $L$ of degree $0$, $p_*L$ has degree $-d/2$, where $d = 2g+2$ is the degree of the parabolic divisor of branch points. With weights $(0,1/2)$, the parabolic degree of $p_*L$ is $-\deg(p_*L)+\sum \alpha_i= -d/2+d/2=0$. 

This parabolic structure is reflected in the local form of a Higgs field. Choosing a coordinate $w$ on $\PP^1$ with $w=0$ a branch point, $\Sigma$ has a local coordinate $z$ related by $z^2 = w$. We can write a local holomorphic section $f(z)$ in these coordinates by splitting into even and odd parts, $f(z) = f^e(z^2)+z f^o(z^2)$. $f^e$ and $f^o$ provide a local frame for the pushforward bundle $E$. The Higgs field multiplies by the eigenvalue $\pm \sqrt{P(w)} ~ z$, so the matrix of the Higgs field as a $K(D)$ valued Higgs bundle is 
\[\phi =\begin{pmatrix}
 0 & w \\ 1 & 0
\end{pmatrix}.\]
We retrieve $\phi$ as a $K$-valued Higgs field by dividing by a section of $\calO(D)$, namely the polynomial $P(w)$. Since no roots are repeated, this is locally proportional to $w$, and so the Higgs field is 
\[\phi =\begin{pmatrix}
 0 & 1 \\ \frac{1}{w} & 0
\end{pmatrix} \de w.\]
The residue of $\phi$ is a nilpotent matrix, with the distinguished line consisting of even sections. So, the Higgs field respects the orbifold parabolic structure.

\subsection{Moduli of Higgs bundles and nonabelian Hodge theory} \label{sec:Moduli of Higgs bundles}

\begin{figure}[htp]
    \centering
    \includegraphics[width = \textwidth]{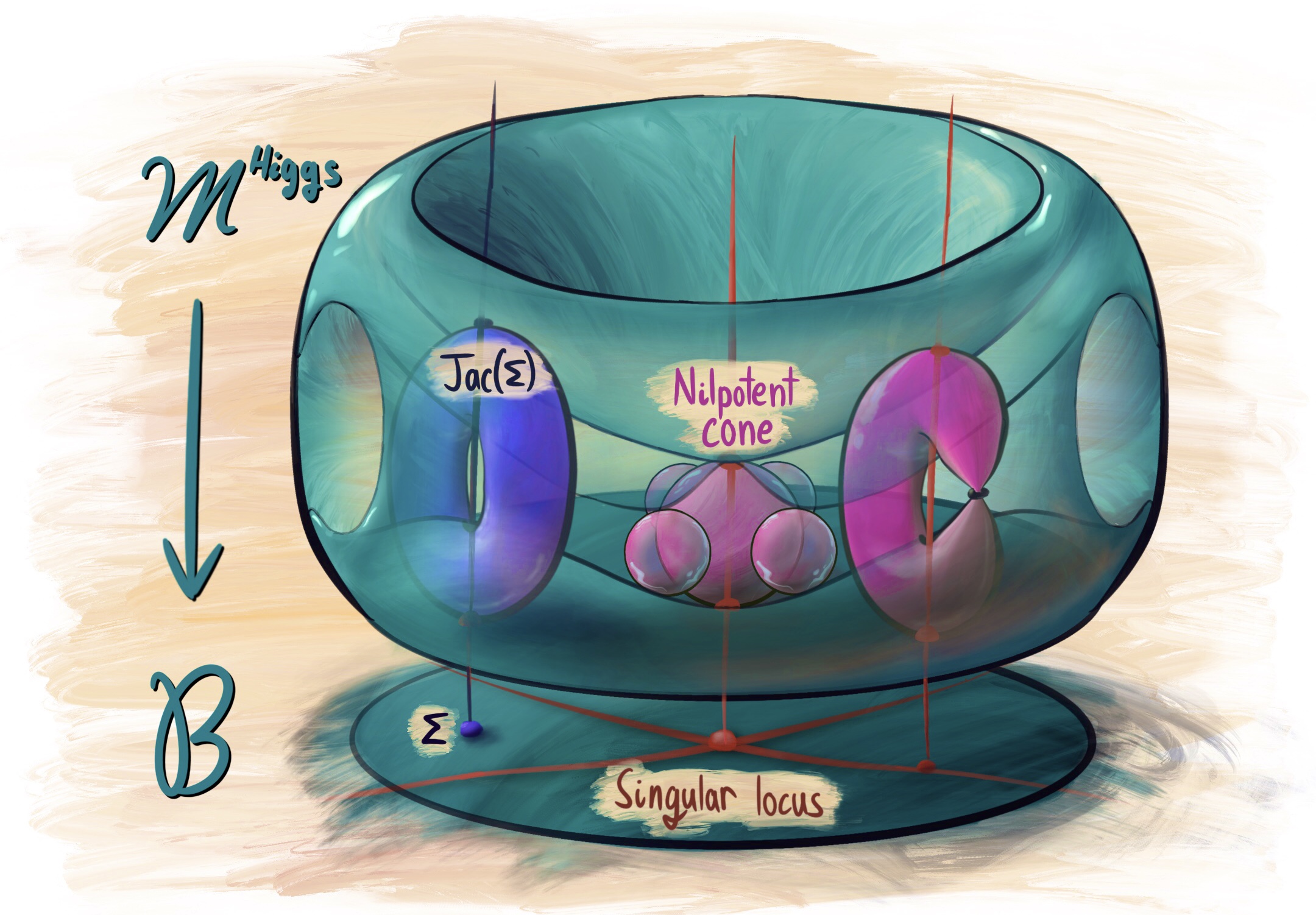}
    \caption{Illustration of the moduli space of Higgs bundles $\MHiggs$. The Hitchin fibration maps $\MHiggs$ to the Hitchin base $\calB$, a vector space of half the dimension parametrizing different spectral curves. A generic point in $\calB$ gives a smooth spectral curve, with fiber the Jacobian (shown in purple). On the singular locus (shown in red), the spectral curve is singular, and the fibers can degenerate (indicated by the pink pinched torus). The most degeneration occurs over $0\in B$, where the fiber is called the nilpotent cone (indicated here by the $5$ pink spheres). This drawing is informed by the moduli space of Higgs bundles with 4 parabolic points over $\PP^1$, which Section \ref{sec:genus 1} describes in detail.}
    \label{fig:moduli}
\end{figure}
Sending a Higgs bundle to its spectral curve $\Sigma$ lends the moduli space a fibration structure. The spectral curve is characterized by the coefficients of its characteristic polynomial
\[ (\Tr(\phi),\Tr(\wedge^2\phi),\dots ,\Tr(\wedge^n\phi) ) \in \bigoplus_{i=0}^{n} H^0(C,K(D)^i ) = \calB.\]
This yields the ``Hitchin map'' from the space of Higgs bundles $\MHiggs$ to an affine space $\calB$ called the Hitchin base. A point $b \in \calB$ determines a spectral curve $\sigma_b$. The regular locus  $\calB_\reg$ consists of points $b$ with a smooth spectral curve $\sigma_b$, where the characteristic polynomial has repeated roots. The remaining spectral data is a line bundle $L\to \Sigma_b$. For a smooth $\Sigma_b$, the fiber of the Hitchin map is the space of line bundles $\Jac(\Sigma_b)$.

The regular points $\calB_\reg$ are (Zariski) open and dense in $\calB$, and so we obtain a picture of the Higgs moduli space: equidimensional tori fibered over an affine base, which degenerates along a singular locus $\calB \backslash \calB_\reg$. The most degenerated fiber comes from the most degenerate spectral curve, where the characteristic polynomial is $P(\phi-\lambda)=\lambda^n$. This occurs at the zero in the Hitchin base, and the fiber is called the nilpotent cone. This picture is sketched in Figure \ref{fig:moduli}.

Moreover, $\MHiggs$ inherits a hyperk\"{a}hler structure from the infinite-dimensional space of pairs $(E,\phi)$. The symplectic structure on $\MHiggs$ from the fibration restricts to zero on the fibers, making these isotropic submanifolds. Remarkably, the dimension of the Hitchin base is exactly half that of $\MHiggs$, and so the fibers are Lagrangian tori \cite{hitchin_stable_1987}. The coordinate functions on $\calB$ Poisson commute with one another, making $\MHiggs$ into an algebraically completely integrable system.

Similar properties hold for twisted and parabolic Higgs bundles. The moduli space of twisted bundles lacks a hyperk\"{a}hler structure and has a Hitchin base whose dimension is large relative to the fibers. The fibers are still isotropic tori, so the moduli space gives an overdetermined integrable system. A moduli space of this type is called a Hitchin-type system. Moving to the parabolic perspective, prescribing the parabolic divisor restricts the spectral curves of the Higgs bundle. The moduli space of Higgs bundles with a given parabolic divisor cuts the dimension of the Hitchin base down to half that of the moduli space, which is once again a completely integrable system with hyperk\"{a}hler metric.

The utility of Higgs bundles arises partly from the various avatars of their moduli space. On the one hand, we can describe the locus of stable Higgs bundles via a set of differential equations, namely the Hitchin equations. A Hermitian metric $h$ on a vector bundle of degree $0$ is called \emph{harmonic} if 
\[F + [\phi, \phi^{*_h}] = 0 \]
where $F$ is the curvature of the Chern connection of $h$. A Higgs bundle admits a harmonic metric if and only if it is a direct sum of stable bundles with the same slope \cite{hitchin_self-duality_1987,donaldson_twisted_1987,corlette_flat_1988,simpson_naht_1991,simpson_loc_1992}. Every Higgs bundle with harmonic metric has an associated flat $GL(n,\CC)$ connection:
\[\nabla_\phi = \nabla_h + \phi + \phi^{*_h}.\]
The flatness of this connection follows from Hitchin's equations. This establishes the \textit{nonabelian Hodge correspondence} between flat $GL(n,\CC)$ connections and stable Higgs bundles, giving a complex version of the Narasimhan-Seshadri theorem. Parabolic Higgs bundles restricted to the complement of the parabolic divisor are ordinary Higgs bundles, and so they give flat connections on $\Sigma - D$. The flag and weights at a parabolic point define the monodromy of the flat connection around that point (cf. \cite{simpson_harmonic_1990}).

On the level of moduli spaces, this provides a diffeomorphism between the moduli of stable Higgs bundles $\MHiggs$ and the character variety $\Hom_{\irr}(\pi_1(\Sigma),GL(n,\CC))/GL(n,\CC)$. These have natural but non-isomorphic complex structures: $\MHiggs$ receives one from the complex structure on $\Sigma$, and the character variety inherits one from the complex structure on $GL(n,\CC)$. These together give two K\"{a}hler metrics, which combine to form the hyperk\"{a}hler metric on Hitchin moduli space.  For further details on the nonabelian Hodge correspondence, we refer the reader to \cite{simpson_I_1994,simpson_II_1994,garciaraboso_introduction_2015}.

\section{Higgs bundles as crystal moduli} \label{sec: Crystal Moduli}

We may now reach the primary purpose of this paper, which is to study the hyperbolic band theory of Section \ref{sec:hyperbolic_Band_Theory} using the Higgs bundle machinery exposited in Section \ref{sec:Higgs_Bundles}. Briefly, the spectral data of a Higgs bundle encodes a Riemann surface with a line bundle, equivalently a crystal lattice with abelian crystal momentum. The moduli space of Higgs bundles thus parametrizes a family of crystals. Abelianization passes from crystal data on the base curve to that on the spectral curve, trading complexity in the structure group ($U(n)$ vs. $U(1)$) with complexity in the underlying curve (the genus of the spectral curve is greater than that of the base). To use this in band theory, we need to relate the Hamiltonians on base and covering curve. Recalling the pushforward of a section of $L$ is a section of $p_*L$, we define the pushforward of the Hamiltonian $H$ on $L$ to be the operator $p_*H$ on $p_*L$ that makes the following diagram commute:

\[\begin{tikzcd}
	{H^0(\Sigma,L)} & {H^0(\Sigma,L)} \\
	{H^0(C,p_* L)} & {H^0(C,p_* L)}
	\arrow["H", from=1-1, to=1-2]
	\arrow["{p_*}"', from=1-1, to=2-1]
	\arrow["{p_*}", from=1-2, to=2-2]
	\arrow["{p_*H}"', from=2-1, to=2-2]
\end{tikzcd}\]
In this section we describe the pushforward of a crystal Hamiltonian, and relate it to Higgs bundles.

\subsection{Unbranched covers}
The pushforward is simplest without any branch points. For an unramified $N$-to-1  spectral cover $p:\Sigma \to C$, a degree zero line bundle $L \to \Sigma$ pushes forward to a degree zero vector bundle $E=p_*L \to C$.\footnote{Note that an unramified spectral cover comes from a Higgs bundle on $C$ valued in a degree zero bundle \cite{markman_spectral_nodate}.}
$L$ carries a flat connection $\nabla_L$, which pushes forward to a flat connection $\nabla_E$ on $E$. Specifically, $\nabla_E$ is the flat connection whose kernel is the pushforward of the sheaf of flat sections of $\nabla_L$. This operator exists on any open $U\sub C$ whose preimage $p\inv(U)$ contains $N$ connected components, and since there are no branch points it glues together into a global operator. For a section $\hat{\psi}$ of $L$ pushing forward to a section $\psi$ of $E$, we by definition have $\nabla_E \psi = p_*(\nabla_L \hat{\psi})$.

Consider a potential $\widehat{V}$ on $\Sigma$ symmetric under deck transforms. This is the lift of a potential $V$ on the base $C$. The abelian crystal Hamiltonian on $\Sigma$ is $H_L = \nabla_L^* \nabla_L + \widehat{V}$, and it pushes forward just as one would expect:

\begin{proposition}
For an unramified cover $p:\Sigma \to C$ with $p_*L = E$, the pushforward of $H_L = \nabla_L^*\nabla_L + \widehat{V}$ is $H_E = \nabla_E^*\nabla_E + V$
\end{proposition}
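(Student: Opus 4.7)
The plan is to verify that each summand of $H_L$ pushes forward correctly, i.e.\ that $p_*(\nabla_L^*\nabla_L)=\nabla_E^*\nabla_E$ and $p_*(\widehat V\,\cdot\,)=V\cdot\mathrm{Id}$, then add. Throughout, I would exploit the fact that $p$ is an unramified cover: locally on $C$ we can choose an open $U$ with $p^{-1}(U)=\bigsqcup_{i=1}^N U_i$ and $p|_{U_i}:U_i\to U$ a biholomorphic local isometry (the metric on $\Sigma$ is pulled back from $C$, or more properly both come from the same hyperbolic metric). Over such a $U$, one gets the canonical splitting $(p_*L)|_U\cong\bigoplus_i L|_{U_i}$, and a section $\psi=p_*\hat\psi$ of $E$ corresponds to the tuple $(\hat\psi|_{U_i})$, each component transported to $U$ via $p|_{U_i}$.

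First I would handle the connection. By the definition recalled in the paragraph preceding the proposition, $\nabla_E$ is characterized by $\nabla_E\,p_*\hat\psi=p_*(\nabla_L\hat\psi)$, so in the local splitting it acts componentwise as $\nabla_L$ on each sheet; equivalently, in a local flat frame it is just $\mathrm{d}$. Next I would address the potential: since $\widehat V=p^*V$, the operator of multiplication by $\widehat V$ on a local tuple acts as multiplication by $V$ on each component, so $p_*(\widehat V\hat\psi)=V\cdot p_*\hat\psi=V\cdot\mathrm{Id}\,\psi$.

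The step that really needs care is the adjoint, so I would devote the bulk of the argument to it. The aim is to show that pushforward along $p$ intertwines the $L^2$ structures, so that adjoints transport cleanly. For this, one equips $E=p_*L$ with the pushforward Hermitian metric, defined fiberwise by $\langle\psi,\phi\rangle_E(b)=\sum_{q\in p^{-1}(b)} h(\hat\psi(q),\hat\phi(q))$ (this is exactly the Hermitian metric induced on $p_*L$ by $h$, and it agrees with the one from Section~\ref{sec:Riemann Surfaces} since the unramified cover is a local isometry for the flat Hermitian pairing). Because $p$ is a local isometry of Riemann surfaces, $p_*(\mathrm{dV}_\Sigma|_{U_i})=\mathrm{dV}_C$ on each sheet, and a short computation gives the identity
\[
\int_\Sigma h(\hat\psi,\hat\phi)\,\mathrm{dV}_\Sigma \;=\; \int_C \langle p_*\hat\psi,p_*\hat\phi\rangle_E\,\mathrm{dV}_C,
\]
so $p_*:L^2(L)\to L^2(E)$ is an isometric isomorphism. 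Applying this to $\langle\nabla_L\hat\psi,\nabla_L\hat\phi\rangle$ and using the first step gives $\langle\nabla_L^*\nabla_L\hat\psi,\hat\phi\rangle_{L^2(L)}=\langle\nabla_E^*\nabla_E\psi,\phi\rangle_{L^2(E)}$, which by uniqueness of the pushforward operator forces $p_*(\nabla_L^*\nabla_L)=\nabla_E^*\nabla_E$.

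Combining the three pieces yields $p_*H_L=\nabla_E^*\nabla_E+V\cdot\mathrm{Id}=H_E$, proving the commuting square. The main obstacle is really just bookkeeping: confirming that the Hermitian metric one puts on $p_*L$ is the ``correct'' one (the one induced by $h$, which coincides with the one pulled back from $\HH\times\CC^N$ via the deck group) and that $p$'s being a local isometry makes the $L^2$ pushforward an isometry. Everything else is either formal from the definition of $\nabla_E$ or an immediate consequence of $\widehat V=p^*V$.
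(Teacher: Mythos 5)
Your proposal is correct and follows essentially the same route as the paper: push forward the potential via $\widehat V = p^*V$, and transfer the adjoint by showing the pushforward intertwines the $L^2$ pairings built from the pushforward Hermitian and Riemannian metrics. If anything, your explicit fiberwise formula $\langle\psi,\phi\rangle_E(b)=\sum_{q\in p^{-1}(b)}h(\hat\psi(q),\hat\phi(q))$ is a slightly more careful statement of what the paper abbreviates as $h_\flat(\varphi,\psi)=\hat h_\flat(\hat\varphi,\hat\psi)$; just make sure the same isometry is also invoked for the $L\otimes T^*\Sigma$-valued sections $\nabla_L\hat\psi$, as the paper does.
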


\begin{proof} 
Consider a section $\psi$ of $E$, and denote by $\hat{\psi}$ the section of $L$ satisfying $p_*\hat{\psi} = \psi$. We wish to show  
\[H_E \psi = p_* H_L \hat{\psi}.\]
First, we pushforward  the potential term. Since $\widehat{V}$ is constant along the fibers, \[p_*(\hat{\psi} \hat V) = p_*(\hat{\psi}  p^* V) = p_*(\hat{\psi)}   V = \psi V,\]
meaning the pushforward of $\widehat{V}$ is $V$.

Next we want the pushforward of $\nabla_L^* \nabla_L$. The adjoint is with respect a Hermitian metric on $E \otimes T^*C$, where the metric on $T^*C$ is induced by the Riemannian metric $g_C$ on $C$, and $E$ carries the flat metric $h_\flat$ induced by the standard Hermitian metric on $\HH \times \CC^n$. This is defined by
\[\langle \varphi \otimes \mathbf{\nu}, \psi \otimes \mathbf{\omega}\rangle_{E} = h_\flat(\varphi,\psi)\cdot g_C(\mathbb{\nu},\mathbb{\omega}),\]
where $\varphi,\hat{\varphi}$ are respective sections of $E$ and $L$ related by $p_*\hat{\varphi}=\varphi$, and $\nu,\omega$ are cotangent vectors.
The line bundle $L$ also carries a Hermitian metric $\hat{h_\flat}$ induced from the standard one on $\HH \times \CC$. Along with the Riemannian metric $g_\Sigma$, this defines a metric on $L\otimes T^*\Sigma$:
\[\langle \hat{\varphi} \otimes \hat{\nu}, \psi \otimes \hat{\omega}\rangle_{L} = \hat{h}_\flat(\hat{\varphi},\hat{\psi})\cdot g_\Sigma(\hat{\nu},\hat{\omega}).\]

The metric $h_\flat$ is the pushforward of $\hat{h}_\flat$, meaning $h_\flat(\varphi,\psi) = \hat{h}_\flat(\hat{\varphi},\hat{\psi})$. Furthermore, $g_C$ is the pushforward of $g_\Sigma$. Since there are no branch points, the pullback of $g_C$ to $\Sigma$ has constant negative curvature, just like $g_\Sigma$. By the uniformization theorem they must agree, meaning $p^*g_c = g_\Sigma$ and thus $g_c = p_*g_\Sigma$. Together, this implies that $\langle,\rangle_E$ is the pushforward of $\langle,\rangle_L$. The result now follows by juggling the definition of pushforwards and the adjoint:

\begin{align} 
    \langle \varphi, \nabla_E^*\nabla_E \psi \rangle_E &
    =\langle \nabla_E \varphi, \nabla_E \psi \rangle_{E}
    = \langle p_*{\nabla_L \hat{\varphi}}, p_*{\nabla_L \hat{\psi}} \rangle _{E}  \label{eq:pushforward_1} \\
    &= \langle \nabla_L \hat{\varphi}, \nabla_L \hat{\psi} \rangle _{L}
    = \langle  \hat{\varphi}, \nabla_L^* \nabla_L \hat{\psi} \rangle _{L}   \label{eq:pushforward_2} \\
    &= \langle  \varphi, p_* \nabla_L^* \nabla_L \hat{\psi} \rangle _{E}. \label{eq:pushforward_3}
\end{align}

We start over $C$ in equation \ref{eq:pushforward_1}, pull back to $\Sigma$ in \ref{eq:pushforward_2}, and push forward back to $C$ in \ref{eq:pushforward_3}. The identity in equations (\ref{eq:pushforward_1}-\ref{eq:pushforward_3}) holds for any choice of $\varphi$, so $\nabla^*_E\nabla_E \psi = p_*\nabla_L^* \nabla_L \hat{\psi}$. Together with the fact that $p_* \widehat{V} = V$, we conclude that $p_*(\nabla_L^*\nabla_L + \widehat{V})  = \nabla_E^*\nabla_E + V$ as desired. 
\end{proof}

In particular, if $\hat{\psi}$ is an eigensection of $H_L$ with eigenvalue $\lambda$, then
\[H_E \psi = p_*{H_L \hat{\psi}} = p_*{\lambda \hat{\psi}} = \lambda \psi,\]
and so $\lambda$ also an eigenvalue of $H_E$. Likewise, if $\varphi$ is an eigensection of $H_E$ with eigenvalue $\lambda'$, then $\hat{\varphi}$ is a eigensection of $H_L$ with the same eigenvalue.  Thus, the two spectra coincide. Since the associated eigensections are mapped to each other by pushforward, so the spectral data of $H_L$ and $H_E$ are equivalent. 

In effect, we trade a high-rank crystal momentum on the low genus curve $C$, with a rank $1$ crystal momentum on the higher genus $\Sigma$. For vector bundles on $C$ arising as pushforward of a line bundle on $\Sigma$, the band structure equals the abelian band structure over $\Jac(\Sigma)$.\footnote{Not all vector bundles on $C$ arise as pushforwards from $\Sigma$. This is clear by comparing the dimension of moduli space of the former ($n^2(g-1) + 1$) to that of the latter ($n(g-1)+1$). }
Interestingly, unbranched coverings arose in \cite{maciejko_automorphic_2021} as clusters. $N$-fold unbranched coverings of the unit cell can be built as a cluster of $N$ contiguous unit cells of $\Sigma$, with proper identification of the clusters' edges. This served to discretize the Jacobian of $C$. This section indicates the abelian band structure of the cluster gives part of the higher rank band structure on $C$.

\subsection{Branched covers} \label{sec:branched covers}

In most applications, holomorphic maps between Riemann surfaces are branched. Our crystal of interest is a branched cover $p:\Sigma \to C$ with branching locus $B\sub C$, and crystal momentum $L \to \Sigma$. We want the pushforward operator of $H_L = \nabla_L^* \nabla_L + \widehat{V}$. Since differential operators are local, it suffices to describe the pushforward in a neighborhood of each point. The complement of the branch points $C' = C \backslash B$ is an unbranched cover, and so the reasoning from last section shows that $H_L$ pushes forward to $H_E =  \nabla_E^* \nabla_E + \widehat{V}$. Here $\nabla_E$ is the pushforward of $\nabla_L$ on $C'$, and the adjoint is with respect to the constant negative curvature metric on $C'$ whose pullback smoothly extends to all $\Sigma$.  This glues together to a global operator $H_E$ on $L^2(C',E)$. We wish to extend $H_E$ to a self-adjoint operator $p_*H_L$ on $L^2(C,E)$, whose eigensections are exactly the pushforwards of those of $H_L$. 

To achieve this, we restrict the domain of $H_E$ from $L^2(C',E)$ to bounded sections. Indeed, if $\hat{\psi}$ is an eigensection of $H_L$, then elliptic regularity implies it is smooth and in particular bounded on $\Sigma$. Its pushforward $\psi$ is an eigensection of $H_E$ on $C'$, and is bounded on $C$. Conversely, consider a bounded eigensection $\psi$ of $H_E$ on $C'$, with eigenvalue $\lambda$. It is the pushforward of a bounded section $\hat{\psi}$ of $H_L$ on $\Sigma' = \Sigma \backslash p\inv(B)$. Extending $\hat{\psi}$ by zero to $p\inv(B)$ gives a weak solution to the equation $(H_L - \lambda)\hat{\psi}=0$. Elliptic regularity implies that $\hat{\psi}$ is in the $L^2$ class of an bonafide smooth eigenfunction $\hat{\psi}_{s}$. The pushforward $\psi_s = p_*\hat{\psi}_{s}$ is a smooth, bounded section of $E$ that agrees with $\psi$ on $C'$, since they are smooth and in the same $L^2$ class. That is, a bounded eigensection of $H_E$ on $C'$ comes from an eigensection of $H_L$ on $\Sigma$. So, the spectrum of $H_L$ is captured in the restriction of $H_E$ to bounded sections. It is the potential plus the canonical self-adjoint extension of Laplacian from $L^2(C',E)$ to $L^2(C,E)$ known as the Friedrichs extension (which exists because the Laplacian is non-negative and symmetric) \cite{kay_boundary_1991}.

The Friedrichs extension abstractly characterizes the pushforward operator, but is not very elucidating. To better understand the behavior at branch points, we treat the pushforward of the flat connection as a parabolic connection. For the sake of clarity, we turn to the simple case where $p:\Sigma \to C$ is a double cover. When the base curve is $\PP^1$, this is the hyperelliptic case described in Section \ref{sec:Hyperelliptic}. 
Let us first describe the pushforward connection. Transporting around a branch point locally interchanges the sheets of $\Sigma$, giving the involution $\sigma$. Following Section \ref{sec:parabolic}, the even and odd sections of $L$ push forward to a frame of $E$ on an open set surrounding the branch point. This monodromy yields the action of $\sigma$ on $E$, which in this frame is
\[\begin{pmatrix}
1 & 0\\0 & -1
\end{pmatrix}.\]
This comes from a logarithmic flat connection $p_*\nabla_L$ on $E$ with simple pole at the branch point. Its residue there is 
\[\Res \left(p_*\nabla_L \right) =
\begin{pmatrix}
0 & 0\\0 & \frac{1}{2}
\end{pmatrix}.\]
To summarize, the pushforward connection is a flat, parabolic connection $\nabla_E$ on $E$, with parabolic points of weights $(0,1/2)$ at each branch point and a distinguished line in $E$ consisting of even sections of $L$. 

To define the Hamiltonian we also need the adjoint $\nabla_E^*$, which involves both the Riemannian metric on $\Sigma$ and a Hermitian metric on $E$. The Riemannian metric is described in Section \ref{sec:parabolic}. It is the unique constant negative curvature metric with cone points of angle $\pi$ at every branch point. Next, the Hermitian metric on $E$ is the pushforward of the flat Hermitian metric $h_\flat$ on $L$, induced by the standard Hermitian metric on $\HH \times \CC$. This metric is singular: introducing a conformal coordinate $w$ on $C$ around a branch point, the double cover $\Sigma$ has coordinate $z$ satisfying $w=z^2$. The holomorphic sections of $L$ given by $1$ and $z$ are respectively even and odd under interchange of sheets, and push forward to a holomorphic frame $(e_e, e_o)$ of $E$. We see $p_*h_\flat(e_e,e_e) = h_\flat(1,1) = 1$, while $p_*h_\flat(e_0,e_0) = h_\flat(z,z) = z^2 = w$. This means $p_*h^\flat$ is singular along the distinguished line, vanishing to order $z$, so is adapted to the parabolic structure with the weights $(0,1/2)$. 

To summarize, the operator $\nabla_L^* \nabla_L$ pushes forward to $\nabla_E^*\nabla_E$. The parabolic connection $\nabla_E$ is adapted to the parabolic structure on E, as is the Riemannian metric $g_C$ and Hermitian metric $p_*h^\flat$ used to define the adjoint. The Hamiltonian associated to a rank 2 Higgs bundle $(E,\phi)$ is $H = \nabla_E^* \nabla_E + V$, where $E$ has parabolic points at the zeros and poles of $\det(\phi)$, and distinguished line defined by the matrix of $\phi$ at these points, which is either nilpotent or has nilpotent residue. We expect the extension to rank$>2$ Higgs bundles to follow a similar story.

\begin{remark}
For a hyperelliptic curve, this relates the line bundle on $\Sigma$ with a rank 2 parabolic vector bundle on $\PP^1$ and an associated logarithmic connection. Up to a shift of degree, it gives us a rank 2 Fuschian system. We write this explicitly for Euclidean crystals in Section \ref{sec:genus 1}.
\end{remark}

\subsection{Application to band theory}
By packaging the crystal lattice and abelian crystal momentum into the spectral data of a Higgs bundle, we can parametrize a family of crystal data with the moduli space of Higgs bundles $\MHiggs$. As described in Section \ref{sec:Moduli of Higgs bundles}, the spectral data gives a fibration, with base parametrizing spectral curves and fibers their Jacobians. For a hyperbolic crystal, the Hitchin base parametrizes crystal lattices, and the fibers parametrize abelian crystal momenta. For example, the moduli space of rank $2$ Higgs bundle on $\PP^1$ valued in $\calO(3)$ encodes all genus 2 crystals and all abelian crystal momenta (as every genus 2 curve is hyperelliptic).

Assigning a potential $V$ on $C$, each Higgs bundle defines a Hamiltonian described in Section \ref{sec:branched covers}. The graph of the spectrum of these Hamiltonians gives the band structure over $\MHiggs$. Restricted to any fiber, this is the rank one band structure for that curve. We have constructed a sort of moduli space of band structures (illustrated in Figure \ref{fig:universal_band_structure}). More suggestively, we could frame this as a universal object. On $\MHiggs \times C$, there is a universal Hamiltonian which restricts on $\{(E,\phi)\} \times C$ to the Hamiltonian defined above. The band structure is then a (possibly singular) analytic submanifold of $\MHiggs\times\CC \times C$ derived from the universal Hamiltonian, constant along $C$. This point of view transforms hyperbolic band theory into a moduli problem. We describe this band structure in detail for Euclidean crystals in Section \ref{sec:genus 1}. Moreover, the level crossings on each fiber (often occurring on high-symmetry momenta) glues together to a level crossing set on all $\MHiggs$ (see Figure \ref{fig:universal_band_structure}). We speculate more on the branching of band structure along high symmetry branes on $\MHiggs$ in Section \ref{sec:Speculation/Symmetry}.

In some limits, the space of possible band structures is finite-dimensional. For example, the tight-binding limit gives a finite-dimensional Hilbert space, thus there is a finite-dimensional moduli space $S$ of all tight-binding Hamiltonians. (These are described in Section \ref{sec:Tight binding}.) The Hamiltonian depends on a point in $\MHiggs$, giving some action of $\MHiggs$ on $S$. In any case, all band structures can be encoded in a finite-dimensional space $S \times \MHiggs$.

\begin{figure}[htp]
    \centering
    \includegraphics[width = \textwidth]{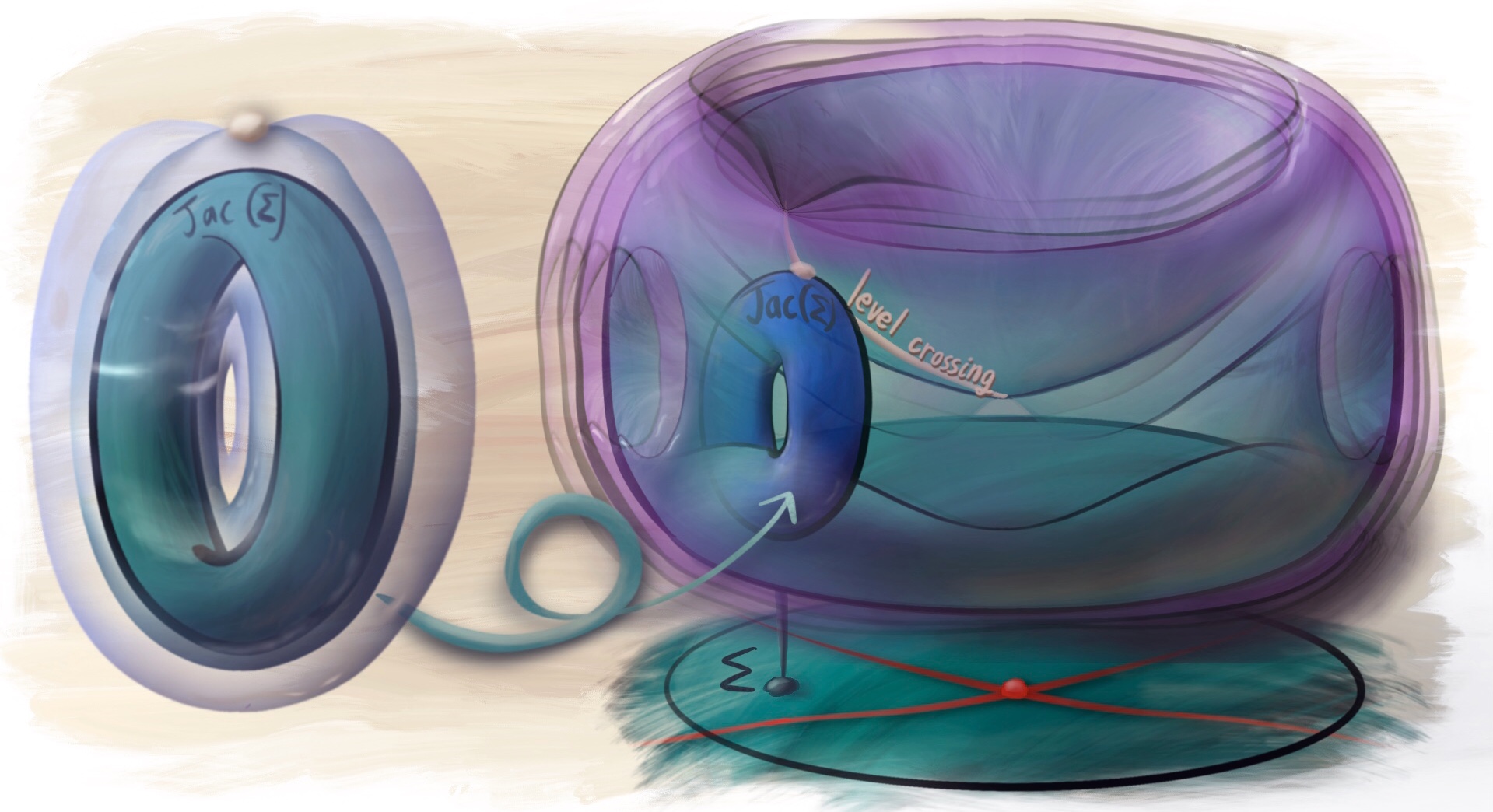}
    \caption{The universal band structure over the moduli space of Higgs bundles. Each point on the base gives a spectral curve, representing a hyperbolic crystal. The Hamiltonian defines a band structure over its fiber, the Jacobian of the spectral curve. These band structures glue together to form a master band structure on all $\MHiggs$, represented by the translucent purple shells. The level crossings on each fiber, drawn as a point over $\Jac(\Sigma)$, glue together to a level crossing line, indicated on the figure of $\MHiggs$.}
    \label{fig:universal_band_structure}
\end{figure}

\section{Higgs bundles as complex momenta}
In the preceding section, we used the spectral data of a Higgs field to parametrize different crystals, which we designate the \textit{crystal moduli interpretation}. Higgs fields also arise naturally in a rather different context: they define the imaginary part of the crystal momentum, which we call the \textit{complex momentum interpretation}. To motivate this, the spectral curve of a Higgs bundle is tantalizingly similar to a band structure. (In fact, this observation was one of the motivations for developing hyperbolic band theory in \cite{maciejko_hyperbolic_2021}.) However, techniques from (complex) algebraic geometry do not readily apply to the real eigenvalues of the Hamiltonian, and so we need to complexify. The Hamiltonian is self-adjoint because the crystal momenta are unitary representations $\pi_1(\Sigma) \to U(n)$. Allowing non-unitary crystal momenta $\pi_1(\Sigma) \to GL(n,\CC)$ permits non-self-adjoint Hamiltonians and, thus, complex energies. Just as the $U(n)$ representations parametrize holomorphic vector bundles, $GL(n,\CC)$ representations parametrize Higgs bundles through the nonabelian Hodge correspondence.

\subsection{Complex crystal momenta}

Let us once again look to the Euclidean case for inspiration. A crystal momentum is a unitary character of the lattice $\chi_k:\Gamma \to U(1)$, such that translation by a lattice vector $\gamma$ multiplies the phase by $\chi_k(\gamma)$. The crystal momentum is determined by a vector $k \in \RR^2$, with associated Hamiltonian $H_k = ( \nabla + ik)\cdot (\nabla + ik) + V$. A \textit{complex} momentum is a character $\chi_k:\Gamma \to \CC^*$, so that lattice translation changes both the phase and the amplitude. The space of complex crystal momentum is not $U(1)^2$, but ${\CC^*} ^2$. This is realized by allowing $k \in \CC^2$ in the Hamiltonian $H_k$. Indeed, the Hermitian conjugate of $H_k$ is
\[H_k^\dagger =(-\nabla - i\bar{k})\cdot (- \nabla- i\bar{k}) + V = (\nabla + i\bar{k})\cdot (\nabla+ i\bar{k}) + V = H_{\bar{k}},\]
and $H_k$ is only self adjoint when $k$ is purely real. Non-self adjoint Hamiltonians are not standard in physics because the amplitude of an eigenstate can change over time, but we ignore this and focus on the mathematical consequences, at least for the moment.

Abelian crystal momenta on a hyperbolic crystals are complexified in much the same way. We replace unitary representations $\pi_1(\Sigma) \to U(1)$ with general ones $\pi_1(\Sigma) \to \CC^*$. Decomposing $\CC$ into phase $U(1)$ and scaling $\RR^{>0}$ splits the character variety
\[\Hom(\pi_1(\Sigma),\CC^*) \cong \Hom(\pi_1(\Sigma),U(1)) \times \Hom(\pi_1(\Sigma),\RR^{>0}).\]
The quotient by the conjugation action was ignored since $\CC^*$ is abelian. The first factor is the Jacobian $U(1)^{2g}$, and the second is $(\RR^{>0})^{2g} \cong \CC^g$. A point in this character variety uniquely defines a flat connection $D_{(L,\phi)} = \de + k$ on the trivial line bundle $L\to \Sigma$, where $k$ is a harmonic one-form. The splitting of the character variety is induced by the decomposition $k = k_r + i k_i$, with real part $k_r = \frac{1}{2}(k + \bar{k})$ and imaginary part $k_i = \frac{1}{2i}(k - \bar{k})$. $k_r$ controls the $U(1)$ part of the monodromy, while $k_i$ controls the $\RR^{>0}$ part. The associated Higgs field is the unique holomorphic one-form $\phi$ such that $k_r = \phi + \bar{\phi}$. Likewise, every rank $1$ Higgs bundle $(L,\phi)$ gives a flat connection $D = \nabla_L + i(\phi + \bar{\phi})$, formed from the unitary $\nabla_L$ with an added skew-Hermitian part derived from the Higgs field. This fact follows from the Hodge decomposition $H^1 \cong H^{0,1} \oplus H^{1,0}$, and is the abelian prototype of the nonabelian Hodge correspondence. See \cite{goldman_rank_2008} for more details on rank $1$ Higgs bundles.

The story is similar for higher rank crystal momenta. The Hamiltonian again has the form 
\[H_E = \nabla_E^*\nabla_E + V = (\de + iA)^*(\de +iA) + V,\]
where $A$ is now a $\mathfrak{u}(n)$-valued 1-form. To complexify, we replace $\mathfrak{u}(n)$ with $\mathfrak{u}(n) \otimes \CC \cong \mathfrak{gl}(n,\CC)$. Any matrix $M \in \mathfrak{gl}(n,\CC)$ splits into a Hermitian part $(M+ M^\dagger)/2$ and a skew-Hermitian part $(M- M^\dagger)/2i$.  Thus, $\mathfrak{gl}(n,\CC)$ splits into $\mathfrak{u}(n) \oplus \mathfrak{u}(n)$, where the first factor contains Hermitian matrices and the second skew Hermitian. Denoting the Hermitian conjugate with respect to the Hermitian metric $h$ by $\dagger_h$, $A$ splits into Hermitian part $A_I = \frac{1}{2}(A+A^{\dagger_h})$ and skew Hermitian part $A_R = \frac{1}{2}(A-A^{\dagger_h})$. The Hermitian adjoint of the Hamiltonian is
\[H_E^\dagger = (-\de  - iA^{\dagger_h})^*(-\de  - iA^{\dagger_h}) + V = (\de  + iA^{\dagger_h})^*(\de  + iA^{\dagger_h}) + V,\]
and so $H_E$ is self adjoint whenever $A=A^{\dagger_h}$, or $A_R = 0$. For a Higgs field $\phi$ on $E$, the associated flat connection is $D = \nabla_E + \phi + \phi^{\dagger_h}$ where $\nabla_E$ is the flat connection from the holomorphic structure of $E$. Conversely, any skew Hermitian one-form $A_R$ has a unique holomorphic $\End(E)$-valued one-form $\phi$ such that $i(\phi +\phi^{\dagger_h}) = A_R$. For a Higgs bundle $(E,\phi)$, $E$ controls the Hermitian part and $\phi$ the skew-Hermitian part of the connection, reflecting the splitting $\mathfrak{gl}(n,\CC) \cong \mathfrak{u}(n) \oplus \mathfrak{u}(n)$.

The crystal Hamiltonian associated to a Higgs bundle $(E,\phi)$ is $H = D_\phi^* D_\phi + V$ for flat connection $D_\phi = \nabla_E + i(\phi + \phi^{\dagger_h})$. Once again, we see $\phi$ controls the complex part of the momentum. The Hamiltonian is self-adjoint only when $\phi=0$. For parabolic bundles, the Hamiltonian is only defined on the complement of the parabolic locus, but following Section \ref{sec: Crystal Moduli}, the Friedrichs extension gives an operator on $L_2(\Sigma)$

\begin{remark}
In complexifying the Hamiltonian, we chose the adjoint to be complex linear. We could have chose it to be complex antilinear (a Hermitian conjugate), which would make the Hamiltonian manifestly self-adjoint. We chose the first complexification for the physically motivated non-real spectrum. The resulting Hamiltonian $H_\rho$ varies holomorphically with the irreducible representation $\rho$, with respect to the natural complex structure on the character variety.
\end{remark}

\subsection{Bloch variety and tight binding models}
\label{sec:Tight binding}
The Hamiltonian $H_\rho$ for each crystal momentum $\rho:\pi_1(\Sigma)\to GL(n,\CC)$ is defined through the associated Higgs field $(E,\phi)$, via $H_\rho = D_\phi^* D_\phi + V$. We can build a band structure over the moduli space of representations (diffeomorphic to $\MHiggs$) by graphing the spectrum of $H_\rho$. More specifically, we can assemble the eigenstates of $H_\rho$ into a master Bloch function $\psi(z,k,e)$ satisfying
\[H_\rho \psi(z,\rho,e) = e \psi(z,\rho,e).\]
The band structure is the set of pairs $\rho \in \MHiggs$, $e \in \CC$ where this equation has a solution. We call this set the Bloch variety $\bloch \sub \MHiggs \times \CC$. This is also the zero locus of the equation $\det(H-E)$ (for some suitably regularized determinant) making $\bloch$ is a codimension $1$ analytic (possibly singular) submanifold in $\MHiggs \times \CC$. Note that $\bloch$ is a variety with respect to the complex structure of the representation variety because $H_\rho$ (and thus its spectrum) varies holomorphically with respect to $\rho$.

\begin{remark} 
Physics expects some properties of the band structure, such as the ``avoided crossing" phenomena, where a perturbation lifts degeneracies at level crossings. The Von Neumann-Wigner theorem formalizes avoided crossing, stating that generic band structures have codimension $2$ level crossings \cite{von_neuman_uber_1929}. This theorem is usually evoked for Hermitian operators with real energies, but the same reasoning applies to generic operators with complex energies, where branching is complex codimension 2. The purity of the branch locus implies a holomorphic map between nonsingular algebraic varieties branches along a codimension $1$ subvariety \cite{zariski_purity_1958}. The codimension 2 crossings must therefore be singularities of the Bloch variety. This is apparent in the canonical example of a perturbatively stable level crossing, the Dirac point of graphene, where the crossing is a conical singularity. 
\end{remark}

The Bloch variety is especially tractable when it forms a finite branched cover of crystal momentum space. This occurs for finite-dimensional Hamiltonians, such as the tight-binding model from condensed matter physics.\footnote{One can also get a finite dimensional Hamiltonian by discretizing space and using a discrete Laplacian, see \cite{gieseker_geometry_1993}. This approach is applicable in different regimes than the tight-binding model. }
In its simplest iteration, each crystal cell gets a finite-dimensional vector space of states, and we assume a cell only interacts with its neighbors (as if each state is ``tightly bound" to a given cell).
The Hamiltonian for each cell splits into an \textit{on-site matrix} $M$ that couples the cell to itself, \textit{hopping matrices} $J$ that couple the cell to its neighbors. These models are a simple way to realize quantum materials experimentally or numerically. For example, hyperbolic crystals were experimentally realized by constructing a tight binding model with sites connected in a $\{3,7\}$ hyperbolic tiling \cite{kollar_hyperbolic_2019}. The theory of hyperbolic tight binding models was briefly treated in \cite{maciejko_hyperbolic_2021}, and further built upon in \cite{maciejko_automorphic_2021}. The connections between sites are described by a basis of cycles $\{A_i,B_i\}$ of $\pi_1(\Sigma)$, each with their own hopping matrix  $J_{A_i}, J_{B_i}$.  Hopping backwards is governed by $J_\gamma^\dagger$, and so for a trivial crystal momentum the cells Hamiltonian is
\[H = M + \sum_{\gamma \in A_i,B_i} J_{\gamma} +  J_{\gamma}^\dagger.\]
The full Hamiltonian is the sum of each cell's Hamiltonian, acting on the direct sum of every cell's vector space of states. For an abelian hyperbolic Bloch state with complex crystal momentum $k$ and associated representation $\chi_k:\Gamma \to \CC^*$, the modified Hamiltonian is 
\[H_k = M + \sum_{\gamma \in A_i,B_i} \chi_k(\gamma) J_{\gamma} + \chi_k(\gamma)^{-1} J_{\gamma}^\dagger.\]
We can see $H_k$ is a linear function of $\chi_k,\chi_k^{-1}$ for $2g$ parameters $\chi_k \in \CC^*$. The Bloch locus is the zero set of the characteristic polynomial $\det(H_k - E)$. In particular, this is the solutions of a finite degree polynomial in $\chi$ and $\chi^{-1}$, and so the Bloch locus is algebraic.

\subsubsection{Tight binding limit, multi-atomic crystals, and quivers}

The tight-binding limit produces a finite-dimensional approximation to a kinetic plus potential Hamiltonian in the limit of infinitely deep potential wells. The eigenstates localize around each well (thought of as an atom), and the only significant interaction is between neighboring atoms. The relevant part of Hilbert space is a finite-dimensional subspace of bound states, giving a tight-binding model. 
Going one step further, we split the vector space of the cell into vector spaces attached to each ``atom". Whenever the Hamiltonian has a nonzero matrix element between two atomic vector spaces, these atoms have a ``bond". We can associate a Hamiltonian on a configuration of atoms to a quiver with representation. Each atom in the unit cell gives a node attached to corresponding atomic vector space. Bonded atoms are connected by arrows in the quiver. We must also include bonds passing between adjacent unit cells, which embeds the quiver onto $\Sigma$. (An example quiver from an Euclidean crystal is shown in Figure \ref{fig:quiver}.) The Hamiltonian gives a representation of this quiver. 

A nonzero abelian complex crystal momentum $k$ modifies the Hamiltonian and thus changes the quiver representation. For each arrow passing from the unit cell to its neighbor in the direction $\gamma$, the associated hopping matrix is multiplied by $\chi_k(\gamma)$. Through this, $(\CC^*)^{2g}$ acts on the moduli space of quiver representations. The Bloch variety lives over an orbit of this group, and is always algebraic. For a nonabelian crystal momentum $\rho: \Gamma \to GL(n,\CC)$, the Hamiltonian lives on the direct sum of $n$ copies of the quiver, intertwined by $\rho$ when passing between unit cells. Quivers and their representations often appear in algebraic geometry and high-energy physics, so tight-binding models could fruitfully bridge these fields with hyperbolic crystallography. We speculate on one such possible relationship in Section \ref{sec:Speculation/High energy}. 
\begin{figure}[htp]
    \centering
    \includegraphics[width = \textwidth]{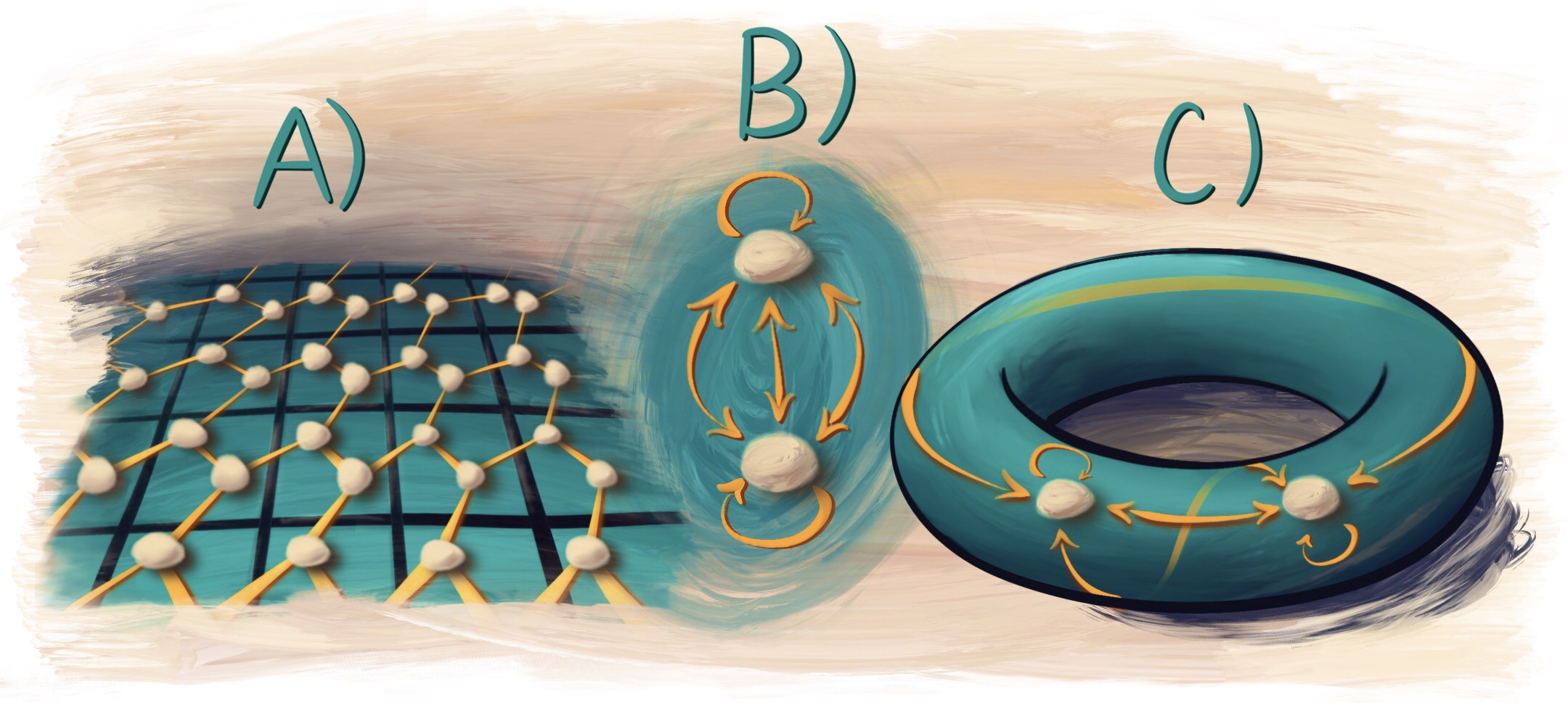}
    \caption{Illustration of the quiver associated with a crystal. A) A Euclidean crystal, with atoms at the white dots and bonds along the yellow lines. B) The associated quiver, which includes 3 double-sided arrows between the two nodes, and an arrow connecting each node to itself. C) The quiver, embedded on the torus associated to the Euclidean crystal.}
    \label{fig:quiver}
\end{figure}

\subsection{Crystal moduli interpretation vs. complex momentum interpretation}
We have encountered two natural interpretations of Higgs bundles within hyperbolic band theory. The crystal moduli interpretation, detailed in Section \ref{sec: Crystal Moduli}, treats the spectral curve of a Higgs bundle as the crystal unit cell and the associated spectral line bundle as an abelian crystal momentum. Then, the moduli of Higgs bundles gives a moduli space of crystal data.  Alternatively, the complex momentum interpretation treats Higgs bundles as the imaginary part of the crystal momentum. Following Simpson's terminology \cite{simpson_I_1994}, we could say the crystal moduli interpretation follows the ``Dolbeault'' picture, while the complex momentum interpretation is in the ``Betti'' picture. Ultimately, these interpretations give two crystal Hamiltonians for a Higgs bundle $(E,\phi)$ and potential $V$. In the complex momentum interpretation, we get the potential plus the Laplacian of the flat connection associated with the Higgs field:
\[H_{\textrm{complex}} = (\nabla_E+ \phi + \phi^\dagger)^* (\nabla_E + \phi + \phi^\dagger) + V.\]

In the crystal moduli interpretation, the Hamiltonian is the lift of the potential plus the Laplacian of the flat connection on the spectral line bundle over the spectral curve
\[H_{\textrm{crystal}}  = \nabla_L^* \nabla_L + \widehat{V}.\]
This pushes forward to to the self-adjoint extension of the operator 
\[H_{\textrm{crystal}}  = \nabla_{(E,P)}^* \nabla_{(E,P)} + V,\]
where we emphasize that $\nabla_{(E,P)}$ is the flat connection of a parabolic bundle with underlying bundle $E$ and parabolic data $P$. 
These two Hamiltonians on $C$ each give a different band structure. For instance, only the crystal moduli interpretation has real energies. The relationship between these two Hamiltonians is still unclear.

\subsection{Bulk-boundary correspondence}
\label{sec:bulk-boundary}
While complex crystal momenta do not give physical Bloch states in the bulk, they can describe edge states. They give an eigenstate which exponentially decays in the bulk, while the non-normalizable exponential growth is cut off by the boundary. An edge state is described by a map from each point of the boundary to a Bloch eigenstate with a complex energy. When the boundary is a circle, the topologically distinct edge states are homotopy classes of maps from the circle to the Bloch variety, i.e $\pi_1(\bloch)$. This was first discussed for the fractional quantum Hall effect on Euclidean crystals in \cite{hatsugai_chern_1993, hatsugai_edge_1993, dwivedi_bulk_2016}. Specifically, they demanded periodicity in one direction, reducing to a one-dimensional momentum space whose complex Bloch variety is a Riemann surface. The bulk-boundary correspondence relates a topological invariant of the bulk insulator (the first Chern number of the Bloch eigenbundle, also called the Hall conductance) with an invariant of a surface state (the winding number about the Fermi energy in the complex Bloch variety).

Hyperbolic crystals should exhibit a similar bulk-boundary correspondence, where the complex momenta are represented by Higgs bundles. For a section of the hyperbolic disc bounded by a circle, the topologically protected edge states should correspond to non contractible cycles in the Bloch variety over Hitchin moduli space. Hyperbolic crystals have topological invariants for bulk insulators similar to Euclidean crystals (Section \ref{sec:Speculation/Topological materials}), though the relationship with the edge states is still unclear.

\section{Euclidean crystals through Higgs bundles} \label{sec:genus 1}
Let us see this formalism in action and apply it to familiar 2D Euclidean crystals. We will find that Higgs bundles consolidate
the standard band theory canon into an algebraic geometry package. 
    
A Euclidean crystal is defined by a lattice $\Gamma = \langle 1,\tau \rangle \sub \CC$, whose unit cell $ \Sigma = \CC/\Gamma$ is a genus one Riemann surface. As discussed in Section \ref{sec: Euclidian bloch thm}, a periodic potential and the standard flat metric on $\CC$ define the Hamiltonian $ \Delta + V$. The abelian Bloch states are classified by a flat line bundle $L \to \Sigma$, and the Hamiltonian is $H_L = \nabla_L^* \nabla_L + V$ acting on sections of $L$. The band structure is the spectrum of $H_L$ as $L$ varies over the moduli space of flat line bundles, $\Jac(\Sigma)$. Since the genus is 1,  $\Jac(\Sigma)$ is isomorphic to $\Sigma$. The Jacobian is a group with distinguished identity, making it an elliptic curve. 

We wish to understand the band structure using Higgs bundles. To start, we describe the Higgs bundle associated with $\Sigma$ by the crystal moduli interpretation. Genus $1$ curves are the prototypical example of hyperelliptic curves, and so we follow Section \ref{sec:Hyperelliptic}. The branched covering $\Sigma \to \PP^1$ is defined by the equation $\lambda^2 = P(z)$, for a degree $2g+2 = 4$ polynomial $P(z)$. The branch points are at the roots of $P(z)$, which we take to be  $0,1,\infty,$ and $m$ after a M\"{o}bius transform. The location of $m$ uniquely determines $\Sigma$, and is related to the lattice $\langle 1,\tau \rangle$ through the modular lambda function,  $\lambda(\tau) = m$.

A Higgs field with genus 1 spectral curve lives on a rank $2$ bundle $E$ over $\PP^1$, valued in the line bundle $K(D)$ for a degree $g+3 = 4$ divisor $D$. Alternatively, these are strongly parabolic Higgs fields with 4 parabolic points. The parabolic points exhaust the $2g+2$ branch points (a property unique to genus $1$ spectral curves), and therefore the spectral curve is determined by $D$. The points in $D$ are the roots of $P(z)$, and thus the Higgs field has determinant proportional to
\[\det(\phi) \propto \frac{\de z^2}{z (z-1) (z-m)},\]
which is the meromorphic quadratic differential defining the spectral cover.

Next, consider the abelian crystal momentum defined by a line bundle $L$ over $\Sigma$. As a holomorphic bundle, this pushes forward to a degree $-2$ holomorphic vector bundle $E$ on $\PP^1$. Stability implies $E$ is either $\calO(-1)\oplus \calO(-1)$ or $\calO \oplus \calO(-2)$. The trivial line bundle must push forward to a bundle with a nonzero holomorphic section, which is $\calO \oplus \calO(-2)$. Every other line bundle pushes forward to $\calO(-1)\oplus \calO(-1)$. This vector bundle gets a parabolic structure from the elliptic involution, with parabolic points at the branch locus $\{0,1,\infty,m\}$. The parabolic structure of $E$ and associated meromorphic connection follow from Section \ref{sec:branched covers}. Uniquely to this situation, the parabolic structure of the Higgs field and the parabolic structure induced on $E$ by the Higgs field agree, since their parabolic divisors coincide. Consequently, every Higgs field with a given parabolic structure gives the same line bundle. The space of Higgs bundles with a given parabolic structure is a one-dimensional vector space, and the Higgs bundle has no control over the spectral data.

We can convert the parabolic connection on $E = \calO(-1)\oplus \calO(-1)$ to a pair of coupled meromorphic differential equations on $\CC$, also called a Fuschian system. These are equivalent to a flat connection on a trivial rank $2$ bundle over $\PP^1 \backslash D$, with prescribed monodromies around points in $D$. The trivial bundle is $\calO \oplus \calO \cong E \otimes \calO(1)$. To keep the parabolic degree zero we change the weights from $(0,\frac{1}{2})$ to $(-\frac{1}{4},\frac{1}{4})$. This changes the monodromy of the flat connection around each point by
\[\begin{pmatrix} 
0&-1\\
1&0
\end{pmatrix}
\to
\begin{pmatrix} 
i&0\\
0&-i
\end{pmatrix}.\]
The Fuschian system admits an explicit description \cite{heller_abelianization_2016}. Since $E$ is trivial, the eigenlines of each parabolic point define four lines in $\CC^2$, equivalently four points in $\PP^1$. After fixing the parabolic divisor and weights, the moduli of parabolic structures is isomorphic to 4 marked points on $\PP^1$, which equals $\PP^1$. Fix this structure with a number $u \in \PP^1 \cong \CC \cup \{\infty\}$. The associated connection $\nabla^u$ and the Higgs field $\phi^u$ are explicitly given by 
\begin{equation}
 \begin{gathered}
\nabla^u = \de + A_0^u \frac{\de z}{z} + A_1^u \frac{\de z}{z-1} + A_m^u \frac{\de z}{z-m}\\
A_0^u = \frac{1}{4}\begin{pmatrix}
-1 & 0 \\ -1 & 1
\end{pmatrix},\qquad
A_1^u = \frac{1}{4} \begin{pmatrix}
0 & 1 \\ 1 & 0
\end{pmatrix}, \qquad A_m^u = \frac{1}{4} \begin{pmatrix}
-1 & 2u \\ 0 & 1
\end{pmatrix}\\
\Phi^u = \Phi_0^u \frac{\de z}{z} + \Phi_1^u \frac{\de z}{z-1} + \Phi_m^u \frac{\de z}{z-m}\\
\Phi_0^u = \begin{pmatrix}
0 & 0 \\ 1-u & 0
\end{pmatrix},\qquad
\Phi_1^u =  \begin{pmatrix}
u & -u \\ u & -u
\end{pmatrix}, \qquad
\Phi_m^u =  \begin{pmatrix}
-u & u^2 \\ -1 & u
\end{pmatrix}.
\end{gathered}   
\label{eq:explicit_higgs}
\end{equation}

The parameter $m$ controls the parabolic divisor and thus spectral curve (crystal lattice), while the parameter $u$ controls the flag data and thus line bundle on the torus (crystal momentum). 

\begin{remark}
With equation \ref{eq:explicit_higgs}, we can  find the spectrum of the crystal Hamiltonian through the Laplacian of a 2\textsuperscript{nd}-order rank 2 meromorphic differential operator on $\PP^1$. While this is not helpful for Euclidean crystals, the spectrum of every hyperelliptic crystal can be found with a similar Fuschian system (with more poles). This could enable numerical calculations of the band structure for arbitrary genus hyperelliptic curves, avoiding the difficulties encountered in \cite{maciejko_hyperbolic_2021} from meshing each curve.
\end{remark}

\subsection{Description of moduli space}\label{sec:genus 1/moduli space}
The moduli space of these Higgs fields with specified parabolic divisor is a Hitchin system \cite[\S8]{fredrickson_asymptotic_2020}. In the parametrization of equations \ref{eq:explicit_higgs}, this consists of Higgs bundles $B\Phi^u$ for some $B \in \CC$.
The Hitchin base consists of meromorphic quadratic differentials
\[\calB = \left\{ \frac{B}{z(z-1)(z-m)} \de z^2 \bigg | B \in \CC\right\}  \cong \CC\]
and fibers are the Jacobian of the specified elliptic curve. This base does \textit{not} parametrize different lattices, which are already determined by the parabolic divisor. The base and fiber are one-dimensional, making the moduli space a complex surface. This is the simplest nontrivial Hitchin system, earning the moniker ``toy model''
$\Mtoy$ \cite{hausel_global_2013}. We sketch this in Figure \ref{fig:toy}.

\begin{figure}[htp]
    \centering
    \includegraphics[width = \textwidth]{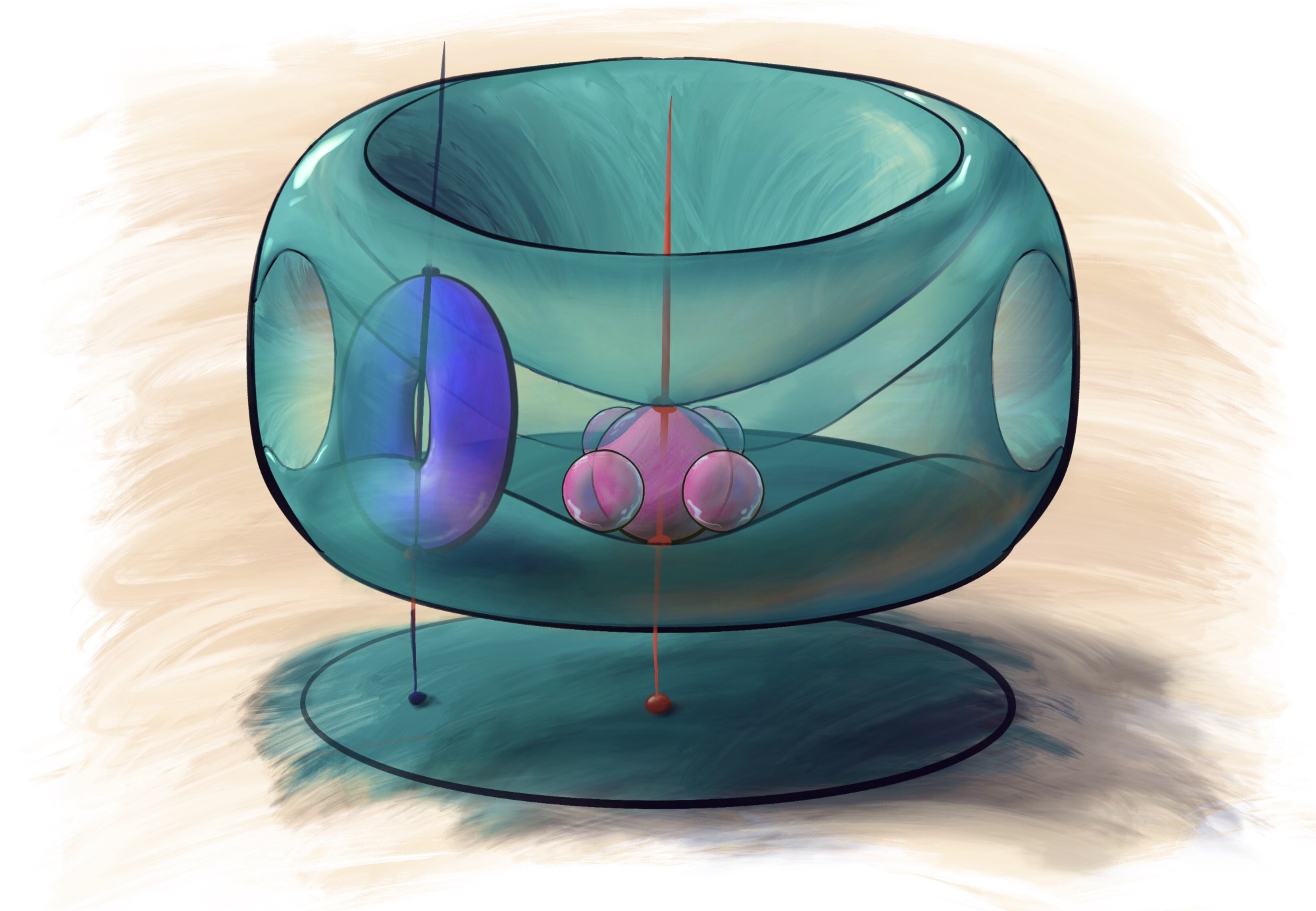}
    \caption{The moduli space of Higgs bundles on $\PP^1$ with 4 parabolic points, or Hausel's toy model $\Mtoy$. Every nonzero fiber is the Jacobian of a fixed torus (purple). The zero fiber, or nilpotent cone, is $\PP^1$ with 4 satellite copies of $\PP^1$ (pink). The Hitchin fibration maps it to $\CC$, and the singular locus is only $0 \in \CC$ (red dot). The internal lines illustrate the locus of $SL(2,\RR)$ Higgs fields.}
    \label{fig:toy}
\end{figure}

The total space $\Mtoy$ admits an explicit description. Start with $\CC \times \Jac(\Sigma) / \langle \pm 1\rangle$, where $\langle \pm 1\rangle$ multiplies the Higgs field by $\pm1$ and acts by $L \to L\inv$ on $\Jac(\Sigma)$. The quotient is singular at fixed points of this action, occurring where Higgs field is zero and the line bundle is a 2-torsion point of $\Jac(\Sigma)$. $\Mtoy$ is the resolution of singularities obtained by blowing up these 4 points. The resulting moduli space has regular fibers everywhere on the base $\calB$ except zero, where the fiber is a central $\PP^1$ with four satellite $\PP^1$s. These are located at $0,1,\infty$ and $m$ in the central $\PP^1$. This gives the nilpotent cone, shown in pink in Figure \ref{fig:toy}.

By forgetting the Higgs field, we send $\Mtoy$ to the moduli space of parabolic structures with given parabolic divisor on the underlying vector bundle $E$. This is determined by the four eigenlines at the parabolic points, which define the point $u \in \PP^1$.  The moduli space of parabolic structures is the central $\PP^1$ in the zero fiber of the Hitchin map of $\Mtoy$.

We can split $\Mtoy$ into strata corresponding to the underlying  holomorphic bundle.  Those with $E = \calO\oplus \calO(-2)$ come from the trivial line bundle on each Jacobian, and defines the ``small stratum" \cite{fredrickson_asymptotic_2020}. The rest have $E = \calO(-1) \oplus \calO(-1)$, defining the ``large stratum". Following equation \ref{eq: Hyperelliptic Higgs}, a Higgs bundle in the small stratum lives in
\[\phi \in H^0 \left(\Sigma, \begin{pmatrix} 
\calO(2) & \calO(4)\\
\calO & \calO(2)
\end{pmatrix}\right)\]
and every Higgs field is conjugate to 
\[\phi \in \begin{pmatrix} 
0 & P(z)\\
1 & 0
\end{pmatrix}.\]
The small stratum is the twisted analogue of the Hitchin section, which coincides with the theta divisor on each Jacobian. Figure \ref{fig:toy} visualizes this as the base dark blue circle, which is the graph of the identity of each Jacobian. Visualizing the Jacobian as a vertical torus, we take the identity to be the bottom point, and the 2-torsion points to be the points in line with the bottom (see the line bisecting the purple torus in Figure \ref{fig:toy}.) The small stratum is the trace of the base of the torus, visualized in Figure \ref{fig:toy} by the dark blue circle in the total space $\Mtoy$. The trace of the 2-torsion points picks out the locus of $SL(2,\RR)$, which form a brane in $\Mtoy$. These are shown by the dark internal lines in \ref{fig:toy}. Note that the Hitchin fibrations carries the Gauss-Manin connection, which has monodromy around the central fiber giving the elliptic involution. So, the $2$-torsion points are fixed, and there are indeed $4$ connected components of the moduli of $SL(2,\RR)$ Higgs bundle. We further discuss the relationship between high symmetry points in the Jacobian and special Higgs bundles in Section \ref{sec:Speculation/Symmetry}. Finally, the tori tracing $\Mtoy$ in Figure \ref{fig:toy} grow larger away from the nilpotent cone, reflecting the hyperk\"aler metric.

Alternatively, we can study the moduli space of twisted Higgs bundles, which lets the parabolic divisor vary. The twisting line bundle $K(D) = \calO(2)$ is dual to the canonical bundle $K = \calO(-2)$, bestowing it the nickname ``co-Higgs field".
The moduli space of such bundles was studied in \cite{rayan_co-higgs_2013}. The Hitchin base is
\[\calB = H^0(\PP^1,\calO(2))\oplus  H^0(\PP^1,\calO(2)^{\otimes 2}) \cong \CC^8.\]
This moduli space describes every Euclidean crystal lattice, but is quite redundant. We want a moduli space between the two extremes, giving all Euclidean lattices and line bundles without redundant dimensions. To achieve this, we only specify the locations of $3$ parabolic points ($0,1$ and $\infty$) and let the fourth vary. This amounts to replacing $\Mtoy$ with a fibration over $\Mtoy$ with fibers corresponding to the moduli space of elliptic curves $\Mell$.

Let us review the form of $\Mell$. An elliptic curve is defined by a lattice $\langle 1,\tau \rangle$ in $\CC$ for a parameter $\tau$ in the upper half-plane $\HH$. The moduli space of lattices is the quotient of $\HH$ by the action of the modular group $SL(2,\ZZ)$. Topologically $\Mell$ equals $\PP^1$ with one puncture, but it has an interesting orbifold structure. The orbifold group of a point is the isomorphism group of the corresponding lattice. First, every lattice has $\ZZ_2$ inversion symmetry about its center, corresponding to the elliptic involution. When $\tau$ has zero real part, it defines a rectangular lattice with extra $\ZZ_2$ symmetry, so these are $\ZZ_2$ orbifold points of $\Mell$. For $\tau$ on the unit circle, the unit cell is a rhombus with a $\ZZ_2$ symmetry swapping $1$ and $\tau$. At $\tau = i$, it is a square lattice with a further $\ZZ_4$ symmetry, making a $
\ZZ_4$ orbifold point. Likewise  $\tau=e^{2 \pi i /3}$ defines an equilateral triangular lattice, which gives a $\ZZ_3$ orbifold point. Finally, $\Mell$ has a cusp for $\tau \to \infty$. In this limit, the elliptic curve becomes nodal. This picture is summarized in Figure \ref{fig:moduli space of elliptic curves}.

\begin{figure}[htp]
    \centering
    \includegraphics[width = \textwidth]{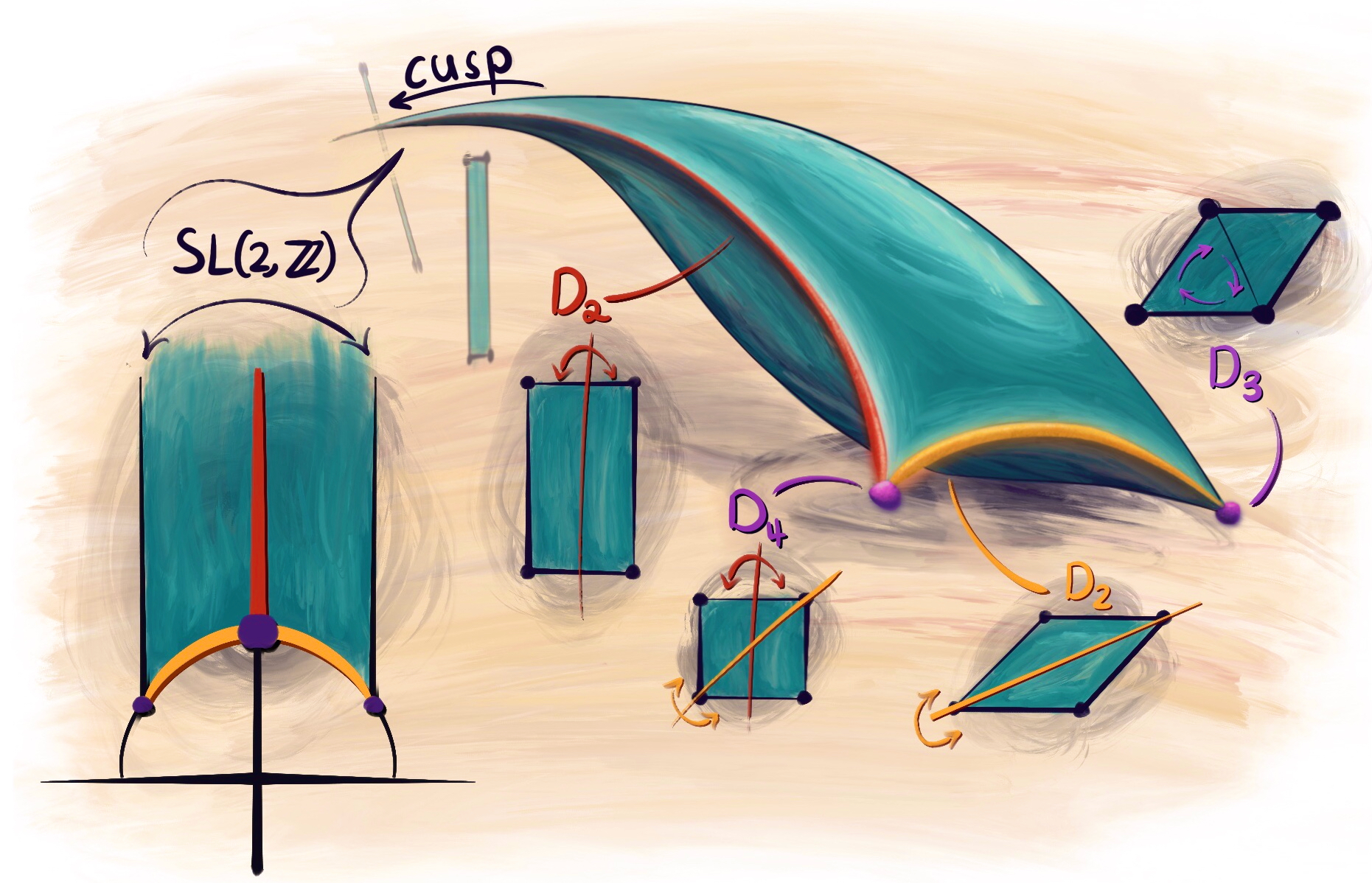}
    \caption{Illustration of the moduli space of elliptic curves $\Mell$, formed from a quotient of the upper half-plane by the modular group $SL(2,\ZZ)$. The orbifold points of $\Mell$ come from lattices with large isomorphism groups. These include rectangular lattices (red) and rhombuses (yellow), which have an extra symmetry of $\ZZ_2$ compared to the generic lattice. These are $\ZZ_2$ orbifold points or "folds" in $\Mell$. Additionally, the square lattice has extra $\ZZ_4$ symmetry, and the triangle lattice has extra $\ZZ_3$ symmetry, marked in $\Mell$ by purple dots. These are cone points with cone angle $\pi/2$ and $2\pi/3$, respectively. As the lattice parameter goes to infinity, the unit cell gets longer and longer, and $\Mell$ has a cusp.}
    \label{fig:moduli space of elliptic curves}
\end{figure}

\subsection{Description of band structure} \label{sec: genus 1/band}

The band structure of Euclidean crystals is well understood, permiting us to explicitly describe the band structure over the whole moduli space of Higgs bundles. For simplicity, let the potential equal zero (the empty lattice approximation). First, fix an elliptic curve with parameter $\tau$. If the crystal $\Sigma$ carries metric $g$, the crystal momentum space $\Jac(\Sigma) \cong \Sigma$ carries the dual metric $g'$. This is the standard metric on $\CC$ quotient the reciprocal lattice $\Gamma^* = \langle 1, \tau/ ||\tau||^2 \rangle = \langle 1, -1/\tau \rangle$. Note that $\tau \to -1/\tau$ is  an element of $SL(2,\ZZ)$, so the dual lattice is equal up to scale to the original. As Riemann surfaces, $\Sigma$ equals $\Jac(\Sigma)$, as elliptic curves are self-dual abelian varieties. For a free particle, the dispersion relation is quadratic, so  $E(k) = ||k||^2$. Including periodicity, these parabolas are centered on each lattice point in $\Gamma^*$ resulting in a many sheeted band structure over the unit cell. Level crossings occur at any point equidistant from two lattice points, with energy at crossing given by the square of that distance. The 2-torsion points, given by the half lattice $\Gamma^* / 2$, are always at least 2-fold degenerate. For example, the lattice center is always equidistant from kitty-corner lattice points.

The band structure over each fiber is identical, locally gluing together trivially on $\Mtoy$. Note that the fibrations carries the natural flat Gauss-Manin connection (in this case, the Picard-Fuchs connection), which preserves this band structure. The monodromy around the central fiber acts by elliptic involution, sending $L$ to $L\inv$. Luckily, the band structure is also invariant under this transform. There is a well-defined limit approaching the central $\PP^1$ of the nilpotent cone, which was obtained from $\Jac(\Sigma) / (L \to L\inv)$. 

There are a couple of c onsistant ways to assign a band structure to the satellite $\PP^1$s. The na\"\i ve choice assigns a constant band structure, equal to the spectrum above the associated 2-torsion point of the Jacobian. Alternatively, we can consider the band structure as a Bloch variety $\beta \times \CC$ fibered over $\Jac(\Sigma) \times \CC$. The base space becomes $\Mtoy$ after quotienting by the action of $\pm 1$, and blowing up the fixed points. We can define the band structure over the exceptional divisor to be the blowup of the Bloch variety $\beta \times \CC / \{\pm 1\}$ over those same points. This can resolve singularities in the Bloch variety. In particular, the band structure often has band crossings at time reversal invariant momenta, like the Dirac points in the band structure of graphene. These are generically conical singularities in the Bloch variety, so their blowup is smooth. We currently lack a good crystallographic interpretation of the singular fibers, so we cannot discern the proper choice of band structure.

Next, let us vary the elliptic curve parameter around $\Mell$. There is extra branching when the lattice is symmetric: In a rectangular lattice, the center is equidistant from all lattice points in the unit cell, meaning it is at least 4-fold degenerate. This occurs along the $\ZZ_2$ orbifold points of $\Mell$. In general, the band structure on the fibration of $\Mtoy$ with $\Mell$ branches along the orbifold points of $\Mell$, as these are higher symmetry lattices.

Approaching the cusp of $\Mell$ stretches the lattice to infinity in the imaginary direction. The dual lattice limits to the singular lattice $\langle 1,0\rangle$, and the band spacing approaches zero. At the limit, there is a continuous band along the the unbounded direction of the unit cell, and discrete bands along the bounded direction.

\subsection{Band structure with complex momenta}
Now we will look at the band structure in the complex momentum interpretation. This assigns a new Hamiltonian to the parabolic Higgs bundle on $\PP^1$. Namely, we add a factor of $i(\phi + \phi^\dagger)$ to the flat parabolic connection. In the parametrization of equation \ref{eq:explicit_higgs}, the connection is $\nabla^u + iB(\Phi^u + {\Phi^u}^\dagger)$ for $B\in \CC $ a point on the Hitchin base. Its Laplacian has the same spectrum as the Laplacian from a rank 1 Higgs field $(L,\phi)$ on the torus $\CC/\Gamma$. Since $\phi$ is holomorphic, it must be a constant one-form $B \de z$. The flat connection is $\de + A \de z + B \de \zbar $, where $A$ parametrizes the real part of the momentum. In this parametrization the Brillouin zone is $\CC/\Gamma* \times \CC$, where the first factor holds $A$ and the second $B$. We can write this a point as a vector of complex numbers $\vec{k} = (k_x,k_y)$, where $k_x,k_y$ hold the complex crystal momenta in each direction. $\Gamma$ only acts on the real part of $\vec{k}$. The free particle complex dispersion relation is $E(k_x,k_y) = k_x^2 + k_y^2$. Decomposing $\vec{k}$ into real and imaginary parts $\vec{k}_r, \vec{k}_i \in \RR^2$, we have
\[E(\vec{k}) = \left( ||\vec{k}_r||^2 - ||\vec{k}_i||^2  \right) + 2i \left( \vec{k}_r\cdot \vec{k}_i  \right).\]

The band structure is the graph of these dispersion relation in $\CC\times \CC$, centered around the points in the integer span of $k_x = 1+0i$ and $k_y = -1/\tau+0i$.  On the real locus, this equals the band structure described in Section \ref{sec: genus 1/band}. Off the real locus, as $B$ gets larger, the real energy gets smaller and smaller while the imaginary energy blows up, sometimes positive and sometimes negative. We note that the complex momentum Hamiltonian contains strictly more information than the crystal moduli one, as the crystal moduli had no dependence on $B$. This is not generally true of hyperbolic crystals.

\section{Speculations}
We have so far argued that Higgs bundles appear naturally in hyperbolic band theory. As Higgs bundles sit at the center of a dense web of mathematics and physics, it is conceivable that hyperbolic band theory may have manifestations in, or at least connections with, other important themes.  In this section, we will speculate on these connections and their possible implications.

\subsection{Connections to supersymmetric field theory }\label{sec:Speculation/High energy}

Higgs bundles already enjoy a deep involvement with physics, especially in regards to various high-energy considerations. This connection is most familiar from the work of Seiberg and Witten, where a Hitchin-type moduli space arises in the low energy limit of $\calN=2, d=4$ supersymmetric Yang-Mills (SYM) theory. This suggests an analogy between such theories and hyperbolic crystals. We recall the dictionary between supersymmetric field theories and Hitchin systems in the first two columns of Table \ref{table:dictionary}. (See \cite{neitzke_hitchin_2014} for an overview and \cite{donagi_seiberg-witten_1998} for a more detailed discussion.) The last two columns give the dictionary between Hitchin systems and hyperbolic crystals in the crystal moduli interpretation. By the transitive property of dictionaries, we relate supersymmetric field theories to hyperbolic crystals, mapping the Seiberg-Witten curve of an effective theory to a hyperbolic crystal.

\begin{table}[!h] 
\caption{Dictionary between $\calN=2$ SYM theory, Hitchin systems, and hyperbolic band theory.}
\label{table:dictionary}
\begin{tabular}{c|c|c}
\hline
\multicolumn{1}{|c|}{$\mathcal{N}=2$ SYM theory} & Hitchin systems               & \multicolumn{1}{c|}{Hyperbolic band theory} \\ \hline
Moduli of vacua                                  & Hitchin Base                  & Moduli of hyperbolic crystals                          \\
Seiberg-Witten curve                             & Spectral curve & Hyperbolic crystal                          \\
Lattice of EM charges             & Regular Hitchin fiber        & Abelian Brillouin zone                  
\end{tabular}

\end{table}

This analogy may be more explicit in the large Higgs field limit. This was studied using supersymmetric field theories by Gaiotto, Moore, and Neitzke, giving a conjectural picture of the asymptotics of the Hitchin metric \cite{gaiotto_wall-crossing_2013}. In the complex momentum interpretation of hyperbolic crystals, the large-scale limit is a large (imaginary) momentum limit, a semiclassical regime. This limit should be amenable to the geometric WKB method. For a $2$-sheeted cover represented by a quadratic differential, a preliminary analysis suggests the semiclassical dynamics restrict to trajectories of the quadratic differential, and the Hamiltonian has deep potential wells at the poles of the Higgs field. The large momentum limit becomes a tight-binding limit. It is represented by a quiver, with nodes for each pole and arrows connecting nodes according to the trajectories of the quadratic differential. This quiver is very similar to the BPS quiver of the field theory, which gives the BPS spectrum / Donaldson-Thomas invariants of an associated Calabi-Yau 3-fold.

The BPS spectrum arises from a quiver quantum mechanics problem, consisting of ground states of $\calN=4$ supersymmetric quantum mechanics on the moduli space of quiver representations. When the quiver embeds on the torus, there is a natural $U(1)^2$ action on the moduli of quiver representations, multiplying every cycle around the torus by a phase. The supersymmetric ground states localize to fixed points of the $U(1)^2$ action on this moduli space \cite{mozgovoy_noncommutative_2010, ooguri_crystal_2009}. Note that this $U(1)^2$ action is precisely the modification of the Hamiltonian under changing the crystal momentum of a Euclidean crystal. All together, the structures from supersymmetric field theories arise naturally in hyperbolic band theory, like spectral networks, their associated quiver, and the action on those quivers. It will be interesting to turn these suggestive associations into a firmer bridge between the subjects.

\begin{remark}
The fixed points of the torus action are uniquely described as a crystal melting state \cite{ooguri_crystal_2009}. At least \emph{a priori}, these crystals are unrelated to hyperbolic crystals. Similarly, the quiver quantum mechanics problem used to find the BPS spectrum is essentially unrelated to our quantum mechanics problem of finding the spectrum of a Hamiltonian.
\end{remark}

Regarding connections with quiver moduli, it is noteworthy that certain Higgs moduli spaces --- ones closely related the toy model in Section \ref{sec:genus 1} --- can be obtained in at least one complex structure as Nakajima-type quiver varieties, e.g. \cite{fisher_hyperpolygons_2016,rayan_moduli_2021}.  To be more precise, the quiver variety for the so-called ``comet-shaped'' quiver and an open part of the (parabolic) Higgs moduli space for arbitrary genus and certain flag types coincide as complex varieties.  Both moduli spaces are hyperk\"ahler but the complex-analytic isomorphism only holds in the $I$ complex structure.  The $J$ and $K$ complex structures on the quiver side are linearizations of the (multiplicative) character variety on the Higgs side.  There is also a connection between the Hitchin integrable system and a natural one on the quiver variety, which is roughly speaking of Gelfand-Tsetlin type (also as described in \cite{rayan_moduli_2021}).  The spectral data and integrable system on the quiver variety may also have a band-theoretic interpretation for both Euclidean and hyperbolic crystals, with the expectation that the Euclidean picture will correspond with the loop-free (``star-shaped'') variety.

\subsection{Topological materials}\label{sec:Speculation/Topological materials}
Topological materials have attracted a flurry of interest in recent decades (e.g. \cite{hasan_colloquium_2010}) and are, in part, an inspiration forthe more general theory of hyperbolic matter. These topological materials, which are gapped materials characterized by their vector bundle of eigenstates, enjoy properties that are protected by topological invariants. Some of these invariants are necessarily trivial in small dimensions, like the second Chern class, which is a $4$-form and thus vanishes on the $3$-dimensional momentum space of physical Euclidean crystals. Hyperbolic crystals have a momentum space with an arbitrarily large dimension, so the classes of topological insulators are unrestricted. This suggests an alternatives to constructing a model $4$-dimensional Euclidean crystal \cite{dwivedi_bulk_2016}, or using synthetic dimensions \cite{yuan_synthetic_2018}.  Two examples of hyperbolic topological insulators were constructed in \cite{boettcher_insulators_2022}.

These topological invariants exist over any space parametrizing Hamiltonians. Following the theme of this paper, we may consider these invariants over the moduli space of Higgs bundles. This space deformation retracts to the fiber above $0$ in the Hitchin base (the nilpotent cone), which thus encodes the topology of the entire space. So, any topological invariant is determined from the band structure over the nilpotent cone. In the crystal moduli interpretation, this suggests all topological properties are measurable in the degenerating limit of crystals.

These bulk topological insulating properties suggest a bulk-boundary correspondence with edge states at the boundary of a hyperbolic crystal.  These have been realized in a couple of fundamental hyperbolic models \cite{boettcher_insulators_2022}. The complex momentum interpretation suggests that edge states are characterized by Higgs bundles, as discussed in section \ref{sec:bulk-boundary}.

\subsection{High symmetry momenta} \label{sec:Speculation/Symmetry}
Physical crystals often have rich point group symmetries. Much of the standard band theory canon focuses on band crossings at momenta fixed under large subgroups of the point group. For hyperbolic crystals, the point group consists of automorphisms of the associated Riemann surface \cite{maciejko_hyperbolic_2021}. These act on vector bundles via pullback, defining automorphisms on the moduli space of vector bundles (crystal momentum space). For degree zero line bundles, the Torelli theorem says any polarization\footnote{The crystallographic meaning of the principle polarization of the Jacobian or its associated theta divisor is still unclear.} preserving automorphism of $\Jac(\Sigma)$ comes from an automorphism of $\Sigma$, possibly composed with the Kummer involution $L \mapsto L^*$.
On the Jacobian's universal cover $\CC^g$ the Kummer involution acts by the inversion $k \mapsto -k$. On the vector potential $A$ associated with $L$, the Kummer involution acts by $A \mapsto -A$, representing time reversal. Since the magnetic field is zero, the Hamiltonian is time-reversal invariant, and the Kummer involution preserves the band structure. Its fixed points, called the Time Reversal Invariant momenta (TRIMs) by physicists or 2-torsion points of the Jacobian by mathematicians, play a distinguished role in band theory. On the universal cover $\CC^g$, 2-torsion points are momenta $k$ with $2k$ a lattice point, so they are face centers of the Brillouin zone. For spin-$1/2$ systems, Kramer's theorem states the Hamiltonian is at least $2$-fold degenerate at such points. We can extend these symmetries to the moduli space of Higgs bundles $\MHiggs$. In the crystal momentum interpretation, each fiber is the Jacobian of a crystal, so fiberwise time-reversal (the Kummer involution) extends to all $\MHiggs$. For rank 2 trace-free Higgs bundles, this acts by holomorphic involution $(E,\phi) \to (E,-\phi)$ \cite{schaposnik_monodromy_2013}. Its fixed points are the $SL(2,\RR)$ Higgs bundles (first described in \cite[\S10]{hitchin_self-duality_1987}), making theses highly symmetric points of momentum space $\MHiggs$. One connected component is the Teichm\"{u}ller section, which corresponds to a trivial line bundle on each spectral curve, thus the zero momentum point of each Jacobian. These loci were discussed for Euclidean crystals in Section \ref{sec:genus 1/moduli space}

In general, every automorphism of the base curve gives an automorphism of the Hitchin moduli space, whose fixed points are highly symmetric momenta. These loci are part of an extensive line of mathematical research, see for instance \cite{schaposnik_monodromy_2013,garcia-prada_involutions_2018,carocca_jacobians_2006, rojas_group_2005,garcia-prada_involutions_2019}. They are generally branes on $\MHiggs$, submanifolds classified by their relation to the three complex/k\"{a}hler structures composing the hyperk\"{a}hler metric on $\MHiggs$ \cite{heller_branes_2018}. This symmetry is reflected in the band structures, connecting band theory with the hyperk\"{a}hler geometry of $\MHiggs$. In view of Section \ref{sec:Speculation/High energy}, the connection with the hyperk\"{a}hler metric could interface with the work of Gaiotto, Moore, and Neitzke.

\subsection{Geometric Langlands correspondence}\label{sec:Speculation/Geometric langlands}

The space of abelian crystal momenta for a genus $g$ surface is a $g$-dimensional complex torus $\Jac(\Sigma) = \CC^g/\Lambda$. This torus is also the momentum space for the dual $2g$-dimensional real torus $\RR^{2g}/ \Lambda^*$, which represents a Euclidean crystal in $\RR^{2g}$ with lattice $\Lambda^*$. This suggests an enticing opportunity to encode a high-dimensional Euclidean crystal inside a hyperbolic crystal. This duality has a new flavor in the crystal moduli interpretation, which represents the space of crystal momenta with a Lagrangian torus fibration. SYZ mirror symmetry gives a map between the moduli space of $G$-Higgs bundles and that of $^LG$-Higgs bundle where $^LG$ is the Langlands dual group, sending each fiber to its dual abelian variety \cite{hausel_mirror_2003,donagi_langlands_2012}. Our framework has $G=SL(n,\CC)$, but for general $G$ this map plays a central role in the geometric Langlands correspondence \cite{frenkel_gauge_2009}.  Physics has already found its way to geometric Langlands through S-duality of $\calN=4$ supersymmetric gauge theories \cite{kapustin_electric-magnetic_2006}. The structures on each side of the geometric Langlands correspondence are interpreted as branes on the moduli space of Higgs bundles, mapped to one another by mirror symmetry. Branes already arise in hyperbolic crystallography as high-symmetry crystal momenta (Section \ref{sec:Speculation/Symmetry}), perhaps bringing the geometric Langlands correspondence into hyperbolic band theory's sphere of influence.

\subsection{Fractional quantum Hall states} \label{sec:Speculation/FQHE}

With the basic framework of hyperbolic band theory established, the logical next step is to apply a magnetic field  \cite{stegmaier_universality_2021,ikeda_hyperbolic_2021,aoki_algebra_2021}. Our techniques should carry over without much trouble to integer quantum Hall states, where the magnetic flux through a unit cell is an integer multiple $N_B$ of the hyperbolic area. The magnetic field replaces degree zero bundles with degree $N_B$ bundles. (Alternatively, vector bundles of nonzero degree arise from projective unitary representations of the fundamental group \cite{atiyah_yang-mills_1983}, relating to the magnetic translation operators constructed in \cite{aoki_algebra_2021}.) Fractional quantum Hall (FQH) states (where flux per area is rational) are trickier, necessarily requiring multi-particle interacting states. One approach uses approximate ground states called Laughlin wavefunctions, which capture the qualitative aspects of FQH states, like quasiparticles with fractional statistics.
On compact Riemann surfaces, Laughlin states are holomorphic sections of a line bundle on the $n$\ts{th} symmetric product of the surface \cite{klevtsov_geometry_2016,klevtsov_quantum_2017}. This line bundle is twisted by the $n$\ts{th} exterior tensor product of an abelian crystal momentum \cite{klevtsov_geometric_2021}, fitting Laughlin wavefunctions into the framework of hyperbolic band theory. 
What about nonabelian crystal momenta? Perhaps these correspond to nonabelian FQH states on $\Sigma$, where the quasiparticles are vector-valued. These are best described with an effective Chern-Simons theory on $\Sigma$. Nonabelian Laughlin wavefunctions are the conformal blocks of a Chern-Simons theory \cite{moore_nonabelions_1991}. The space of conformal blocks is isomorphic to the holomorphic sections of the prequantum line bundle over the Chern-Simons phase space \cite{schottenloher_mathematical_2008}, which is diffeomorphic to the moduli space of trace-free Higgs bundles (for complex C-S theory). The holomorphic sections of the prequantum line bundle are nonabelian theta functions, which restrict to ordinary theta functions on each fiber \cite{beauville_spectral_1989}.

The relationship with Higgs bundles extends to effective field theory interpretations of FQH states. Quasiparticles of these theories are solutions to a vortex equation, which arise from Hitchin equations on subsets of the moduli space of Higgs bundles (for example, the abelian vortex equations are Hitchins equations on the Teichm\"{u}ller component \cite[\S11]{hitchin_self-duality_1987}). These effective field theories have a particle-vortex duality, replacing vortex solutions with fundamental fields and vice versa. This is a version of the S-duality of Chern-Simons theory, which manifests the geometric Langlands correspondence \cite{kapustin_electric-magnetic_2006}, and relates the integer and quantum Hall effects \cite{ikeda_quantum_2018,ikeda_hofstadters_2018}. Connecting with Section \ref{sec:Speculation/Geometric langlands}, perhaps two hyperbolic particle-vortex dual states are related by mirror symmetry of the moduli of Higgs bundles.

There are seemingly two ways to obtain FQH states on hyperbolic crystals. We can follow physics by considering interacting multi-particle states and studying their effective field theory. On the other hand, we can construct families of crystals from spectral covers then construct theta functions on each crystal's Jacobian, and then finally glue these together to a nonabelian theta function on Hitchin moduli space. Guided by aesthetic sensibilities, we guess that the resulting states are in some sense \textit{the same}.  (The crystallographic interpretation of theta functions is, itself, an intriguing open question.)

\subsection{Putting everything together}
The objects of our speculations already weave together into a well-connected web (Figure \ref{fig:flowchart}, black lines). We suggested how hyperbolic matter ties in with each (Figure \ref{fig:flowchart}, blue lines), and one can guess at the overall connectivity of this picture. Above, we speculated that both paths from the center to fractional quantum Hall states are somehow equivalent. It would be intriguing if this held more generally for every path through the web.  For instance, one may hope that the high-symmetry branes in $\MHiggs$ agree with the asymptotic hyperk\"{a}hler structure seen in the semiclassical large complex crystal momentum limit. At the same time, high-symmetry branes may map to one another as the Jacobian maps to the dual position space crystal, reflecting the geometric Langlands correspondence.  Further still, this map may relate two families of hyperbolic crystals, whose associated fractional quantum hall states are related by particle-vortex duality --- which is to say that perhaps Figure \ref{fig:flowchart} commutes.

\begin{figure}[htp]
    \centering
    \includegraphics[scale = 0.65]{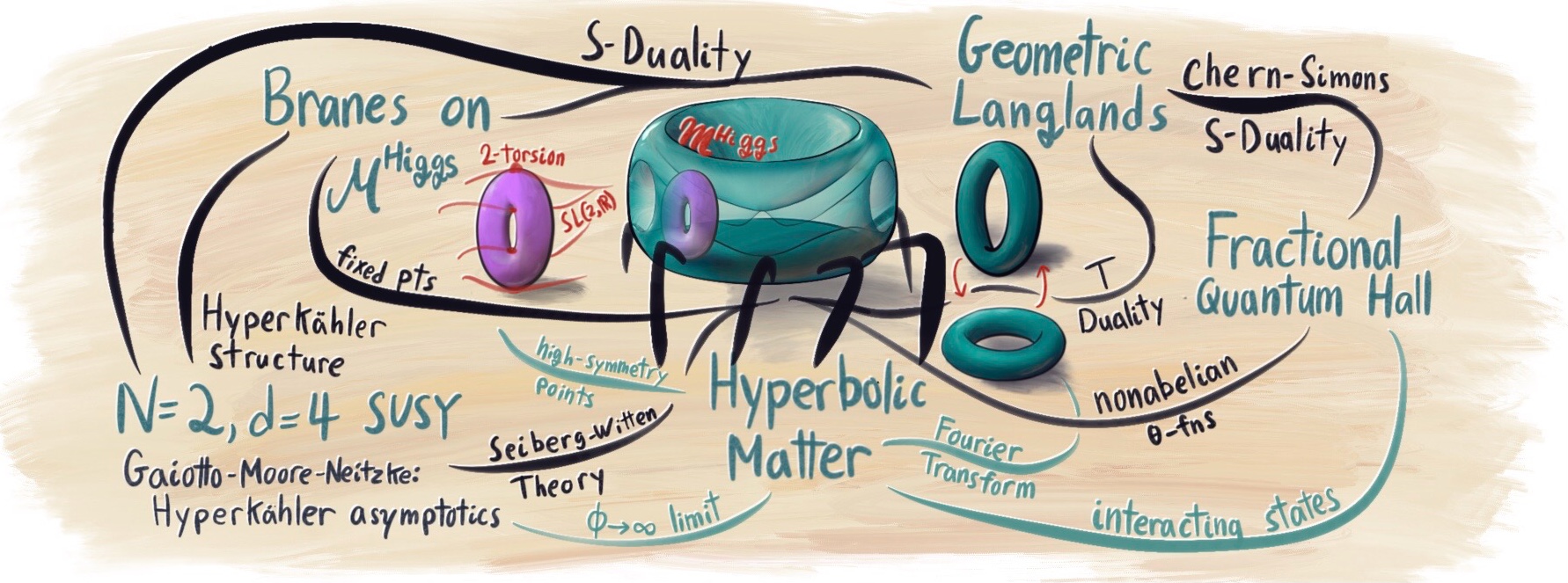}
    \caption{Flowchart summarizing the speculated connections between hyperbolic matter and various other objects (blue lines). It also shows preexisting connections between these objects (black lines). Most of these connections were woven by the character at the center of the web, namely Higgs bundles. }
    \label{fig:flowchart}
\end{figure}

The only evidence so far is aesthetic, and so we cannot call this a conjecture in good conscience. Instead, we conclude with:
\begin{pipeDream}
 This diagram commutes.
\end{pipeDream}
 
\pagebreak

\printbibliography
\end{document}